\keywords{regular language, variety, locality}
\title{Locality and Centrality: The Variety \textbf{ZG}}
\author[A.~Amarilli]{Antoine Amarilli\lmcsorcid{0000-0002-7977-4441}}[a]
\address{\lsuper{a} LTCI, Télécom Paris, Institut polytechnique de Paris, France}
\email{antoine.amarilli@telecom-paris.fr}
\author[C.~Paperman]{Charles Paperman\lmcsorcid{0000-0002-6658-5238}}[b]
\address{\lsuper{b} Univ. Lille, CNRS, INRIA, Centrale Lille, UMR 9189 CRIStAL, F-59000 Lille, France}
\email{charles.paperman@univ-lille.fr}
\thanks{We thank Jean-Éric Pin and Jorge Almeida for their fruitful
advice, and thank the anonymous reviewers of ICALP and LMCS for their very helpful feedback.}
\newcommand{\Li}{\mathbf{LI}}
\newcommand{\MNil}{\mathbf{MNil}}
\newcommand{\DA}{\mathbf{DA}}
\newcommand{\D}{\mathbf{D}}
\newcommand{\J}{\mathbf{J}}
\newcommand{\Com}{\mathbf{Com}}
\newcommand{\card}[1]{\left|#1\right|}
\newcommand{\join}{\lor}
\newcommand{\NN}{\mathbb{N}}
\newcommand{\CC}{\mathrm{C}}
\newcommand{\W}{\mathbf{W}}
\newcommand{\ZG}{\mathbf{ZG}}
\newcommand{\ZGp}{\mathbf{ZG}_p}
\newcommand{\ZE}{\mathbf{ZE}}
\newcommand{\LZG}{\mathbf{LZG}}
\newcommand{\LZGp}{\mathbf{LZG}_p}
\newcommand{\ZGD}{\Dop{\ZG}}
\newcommand{\LMNil}{\mathbf{LMNil}}
\newcommand{\MNilD}{\Dop{\MNil}}
\newcommand{\ZGpD}{\Dop{\ZG_p}}
\newcommand{\Dop}[1]{#1 *\D}
\newcommand{\ZZ}{\mathbb{Z}}
\newcommand{\EE}{\mathrm{E}}
\newcommand{\V}{\mathbf{V}}
\newcommand\restr[2]{{%
  \kern-\nulldelimiterspace %
  #1_{|#2} %
  }}
\newcommand{\act}{\textrm{act}}
\let\enddoc@text\relax
  \def\axp@newtheoremrep#1[#2]#3[#4]{%
    \axp@newtheoremrep@definetheorem{#1}{#2}{#3}{#4}%
    \expandafter\pretocmd\csname #1\endcsname{\noproofinappendix}{}{}%
    \axp@newtheorem*{axp@#1rp}{#3}%
    \axp@forward@setup{#1}{#2}{#3}{#4}%
    \NewEnviron{#1rep}[1][]{%
      \ifthenelse{\equal{\axp@repeqn}{same}}{%
        \setcounter{axp@equation}{\value{equation}}%
      }{}%
      \addtocounter{axp@rpcounter}{1}%
      \ifx\relax##1\relax
        \axp@with@forward{#1}{\begin{#1}}\label{axp@r\roman{axp@rpcounter}}%
      \else
        \axp@with@forward{#1}{\begin{#1}[{##1}]}\label{axp@r\roman{axp@rpcounter}}%
      \fi
      \mainbodyrepeatedtheorem
      \BODY\end{#1}%
      \global\toggletrue{axp@seenreptheorem}%
      \global\expandafter\let\csname rplet\roman{axp@rpcounter}%
                             \endcsname
      \BODY
      \axp@writesection%
      \immediate\write\axp@proofsfile{\noexpand\makeatletter}%
      \ifthenelse{\equal{\axp@repeqn}{same}}{%
        \immediate\write\axp@proofsfile{%
          \noexpand\setcounter{axp@equationx}{\value{equation}}%
          \noexpand\setcounter{equation}{\theaxp@equation}%
        }%
      }{}%
      \ifbool{axp@forward@suppress}{%
        \global\def\axp@refstar{\ref*}
      }{%
        \global\def\axp@refstar{\bf\ref}
      }
      \immediate\write\axp@proofsfile{{%
        \ifdefined\theopargself
          \noexpand\theopargself
        \else
          \noexpand\pretocmd{\noexpand\@begintheorem}{%
            \noexpand\patchcmd{\noexpand\thmhead}{\noexpand\@acmplainnotefont}{}{}{}%
            \noexpand\patchcmd{\noexpand\thmhead}{\noexpand\the\noexpand\thm@notefont}{}{}{}%
            \noexpand\patchcmd{\noexpand\thmhead}{(}{}{}{}%
            \noexpand\patchcmd{\noexpand\thmhead}{)}{}{}{}%
          }{}{}
        \fi
        \noexpand\begin{axp@#1rp}
          [%
            {\!\!\noexpand\axp@refstar{axp@r\roman{axp@rpcounter}}}%
            \@ifnotempty{##1}{%
              \ifdefined\theopargself
              \else
                \ifdefined\@acmplainnotefont
                  \noexpand\@acmplainnotefont
                \else
                  \noexpand\ifdefined\noexpand\thm@notefont
                    \noexpand\the\noexpand\thm@notefont
                  \noexpand\fi
                \fi
              \fi
              {} \unexpanded{{##1}}%
            }%
          ]%
          \noexpand\axp@forward@target{axp@fw@r\roman{axp@rpcounter}}{}%
          \noexpand\axp@redefinelabels
          \expandafter\noexpand\csname rplet\roman{axp@rpcounter}%
                               \endcsname
        \noexpand\end{axp@#1rp}
      }}%
      \ifthenelse{\equal{\axp@repeqn}{same}}{%
        \immediate\write\axp@proofsfile{%
          \noexpand\setcounter{equation}{\value{axp@equationx}}%
        }%
      }{}%
    }%
  }
\theoremstyle{thmC}
\begin{document}
\maketitle

\begin{abstract}
  We study the variety $\ZG$ of monoids where the elements that belong to a
  group are \emph{central}, i.e., commute with all other elements. We show that~$\ZG$
  is \emph{local}, that is, the semidirect product $\ZGD$ of $\ZG$  by definite
  semigroups is equal to $\LZG$, the variety of semigroups where all local
  monoids are in~$\ZG$. Our main result is thus: $\ZGD = \LZG$. We prove this
  result using Straubing's delay theorem, by considering paths in the category
  of idempotents. In the process, we obtain the characterization $\ZG = \MNil
  \lor \Com$, and also characterize the $\ZG$
  languages, i.e., the languages whose syntactic monoid is in~$\ZG$: they are
  precisely the languages that are finite unions of disjoint shuffles of
  singleton languages and regular commutative languages.
\end{abstract}

\section{Introduction}
In this paper, we study a variety of monoids called $\ZG$.
It is defined by requiring that the elements of the monoid that belong
to a group are \emph{central}, i.e., commute with all other elements of the
monoid. The notation $\ZG$ thus stands for \emph{Zentral Group},
inspired by the classical notion of centrality in group theory.
We can also define $\ZG$ with the equation $x^{\omega+1} y = y
x^{\omega+1}$ on all elements~$x$ and~$y$, where $\omega$ is the idempotent power of the monoid.

The variety $\ZG$ has been introduced by Auinger~\cite{auinger2000semigroups} as a subvariety of interest
of the broader and better-known class $\ZE$ of semigroups where the idempotent elements are
central.
The study of $\ZE$ was initiated by Almeida and
Azevedo~\cite{almeida1987implicit}.
Straubing studied in particular the variety $\MNil$ (called simply $\V$
in~\cite{Straubing82})
of regular languages generated by finite languages,
and showed that it is exactly the variety of aperiodic monoids in $\ZE$.
From this, a systematic investigation of the subclasses of $\ZE$
was started by Almeida and pursued by Auinger: see~\cite[page 211]{Almeida96}
and~\cite{auinger2000semigroups, auinger2002}.

Our specific motivation to investigate $\ZG$ is our recently published
study~\cite{incrementalarxiv}
of the \emph{dynamic membership problem for regular languages}.
In this problem, introduced in~\cite{skovbjerg1997dynamic},
we study how to apply substitution update operations on an input word
while maintaining the information of whether it belongs to a fixed regular
language. In~\cite{incrementalarxiv}, we show that this can be performed in
constant time per update for monoids in~$\ZG$, and extend this to 
semigroups and languages. Further, $\ZG$ turns out to be a plausible
boundary to characterize constant-time complexity for monoids, semigroups, and
languages. However, the case of semigroups requires a study of the
so-called \emph{semidirect product} of~$\ZG$ by definite ($\D$) semigroups, which
we denote by $\ZGD$. Thus, some of our results on semigroups and languages
in~\cite{incrementalarxiv} require an deeper understanding of $\ZG$ and $\ZGD$,
which is the focus of this work.

The semidirect product operation on varieties used to define~$\ZGD$ intuitively corresponds to composing finite
automata via a kind of cascade operation. Its study is the subject of a large
portion of semigroup theory, inspired by the classical study
of semidirect products in group theory.
There are also known results to understand 
the semidirect product specifically \emph{by} $\D$.
For instance, the
\emph{Derived category theorem}~\cite{Tilson}
studies it as a decisive step towards proving the decidability of membership
to an arbitrary semidirect product, i.e., deciding if a given monoid belongs to the product.
The product by~$\D$ also arises naturally in several other
contexts:
the dotdepth hierarchies~\cite{straubing1985finite},
the circuit complexity
of regular languages~\cite{Straubing94},
or the study of the successor relations in first-order logic~\cite{TherienW98, TherienW01, KufleitnerL13}.

Understanding this product with~$\D$ is notoriously complicated. For instance,
it requires specific dedicated work for some varieties like $\J$ or $\Com$~\cite{Knast83a, Knast83b, therien85}.
Also, this product does not preserve the decidability of membership,
i.e.,
Auinger~\cite{Auinger10} proved that
there are varieties $\V$ such that membership in~$\V$ is decidable, but
the analogous problem for~$\Dop{\V}$ is undecidable.
For the specific case of the varieties $\ZG$, $\ZE$, or even $\MNil$,
we are not aware of prior results describing their semidirect product with~$\D$.

\paragraph*{Locality.}
Existing work has nevertheless identified some cases where the $*\,\D$ operator can be
simplified to a much nicer \emph{local operator}, that preserves the decidability of membership and
is easier to understand.
For any semigroup $S$, the \emph{local monoids} of $S$ are the subsemigroups
of~$S$ of the form $eSe$ with $e$ an idempotent element of~$S$.
For a variety~$\V$,
we say that a semigroup belongs to $\mathbf{LV}$ if all its local monoids are in $\V$.
It is not hard to notice that the variety  $\Dop{\V}$ is always a subvariety of
$\mathbf{LV}$, i.e., that every monoid in~$\Dop{\V}$ must also be in $\mathbf{LV}$.
In some cases,
we can show a \emph{locality result} stating that the other
direction also holds, so that $\Dop{\V} = \mathbf{LV}$. In those cases we say that
the variety $\V$ is \emph{local}.
The locality of the variety of monoids $\DA$~\cite{Almeida96}
is a famous result that has deep implications
in logic and complexity~\cite{TherienW98, Dartois15, GrosshansMS17}
and has inspired recent follow-up work~\cite{PlaceS16}.
Locality results are also known for other varieties, for instance
the variety of \emph{semi-lattice monoids} (monoids that are both idempotent and
commutative)~\cite{McNaughton74, Brzozowski73}, any \emph{sub-varieties
of groups}~\cite[Theorem 10.2]{straubing1985finite}, or the
\emph{$\mathcal{R}$-trivial variety}~\cite{Stiffler73, Steinberg04, Straubing15}.
Thus, a hope to understand the variety $\ZGD$ is to establish a locality result
of this type for~$\ZG$.

\paragraph*{Contributions.}
This paper shows the locality of the variety~$\ZG$. 
This is achieved by showing a slightly stronger statement: each variety $\ZG_p$
is local, where $\ZG_p$ is the variety of $\ZG$ monoids where each subgroup has
a period dividing~$p$:

\begin{thm}
    \label{thm:locality2}
    For every $p>0$, we have $\LZG_p = \ZGpD$.
\end{thm}

As $\ZG$ is the union of the $\ZG_p$, this easily implies:

\begin{cor}
    \label{cor:locality}
    We have $\LZG = \ZGD$.
\end{cor}
\pagebreak
Further, the variety $\MNil$ of aperiodic monoids in~$\ZE$ 
(introduced in~\cite{Straubing82} and 
mentioned earlier) is exactly $\ZG_1$. Thus:

\begin{cor}
    \label{cor:mnil}
    We have $\LMNil = \MNilD$.
\end{cor}

In the process of proving Theorem~\ref{thm:locality2}, we obtain a characterization of $\ZG_p$-congruences,
i.e., congruences $\sim$ on~$\Sigma^*$ where the quotient
$\Sigma^*/\sim$ is a monoid of~$\ZG_p$. We show that they are always refined by a
so-called \emph{$n,p$-congruence}, which identifies the number of occurrences of
the \emph{frequent letters} (the ones occurring $> n$ times in the word)
modulo~$p$, and also identifies the exact subword formed by the \emph{rare letters} (the ones
occurring $\leq n$ times). Thanks to this 
(Theorem~\ref{thm:congruence}), we also obtain a characterization of the
languages of~$\ZG$, i.e., the languages whose syntactic monoid is in~$\ZG$: they
are exactly the finite unions of disjoint shuffles of singleton languages and
commutative languages (Corollary~\ref{cor:zgcarac}). We also characterize
$\ZG$ as a variety of monoids
(Corollary~\ref{cor:zgmoncarac}): $\ZG = \MNil \lor \Com$, for
$\MNil$ defined in~\cite{Straubing82} and $\Com$ the variety of commutative
monoids.

\paragraph*{Paper structure.}
We give preliminaries in Section~\ref{sec:prelim} and formally define
the variety~$\ZG$. We then give in Section~\ref{sec:carac} our characterizations
of~$\ZG$ 
via $n,p$-congruences. We then define in
Section~\ref{sec:zgli} the varieties $\ZGD$, $\LZG$, $\ZGpD$, and $\LZGp$
used in Theorem~\ref{thm:locality2} and Corollary~\ref{cor:locality}, which we prove in the
rest of the paper. We first introduce the framework of Straubing's
delay theorem used for our proof in Section~\ref{sec:straubing}, and
rephrase our result as a claim
(Claim~\ref{clm:hard})
on paths in the category of idempotents. We then
study in Section~\ref{sec:choosen} how to pick a sufficiently large value of~$n$
as a choice of our $n,p$-congruence, 
show in Section~\ref{sec:auxpaths} two lemmas on paths in the 
category of idempotents, and finish the proof in Section~\ref{sec:actualproof}
by two nested inductions. We
conclude in Section~\ref{sec:conclusion}.

\section{Preliminaries}
\label{sec:prelim}

For a complete presentation of the basic concepts (automata, monoids,
semigroups, groups, etc.) the reader can refer to
the book of Pin~\cite{Pin84} or to the more recent lecture notes~\cite{mpri}.
Except for the free monoid,
all semigroups, groups, and monoids that we consider are finite.

\paragraph*{Semigroups, monoids, varieties.}
A \emph{semigroup} $S$ is a set equipped with an internal associative law
(written multiplicatively). A
\emph{monoid} $M$ is a semigroup with an \emph{identity element} $1$, i.e., 
an element with $1 x = x 1 = x$ for all $x \in M$: note that the identity
element is necessarily unique.
A \emph{variety of semigroups} (resp., \emph{variety of monoids}) is a class of
semigroups (resp., monoids)
 closed under direct product, quotient, and subsemigroup (resp., submonoid).

For a semigroup $S$, we call $x \in S$ \emph{idempotent} if $x^2 = x$.
We call the \emph{idempotent power} of $x \in S$ the unique idempotent element which is a power of~$x$. (This means that $x$ is idempotent iff it is its own idempotent power.)
Now, the \emph{idempotent power of~$S$} is the least integer $\omega > 0$ 
such that for any element $x \in S$,
the element $x^\omega$ is an idempotent power of~$x$.
We write $x^{\omega+k}$ for any $k \in \ZZ$
to mean $x^{\omega+k'}$ where $k' \in \{0, \ldots, \omega-1\}$ is the remainder of $k$
in the integer division by $\omega$.
For example, $x^{\omega-1}$ simply denotes $x^{2\omega-1}$, where we have
$2\omega-1 > 0$; in particular, as expected, we have $x^{\omega-1} x =
x^\omega$.

\paragraph*{Languages and congruences.}
We denote by $\Sigma$ an alphabet and by $\Sigma^*$ the set of all finite words on~$\Sigma$.
A \emph{factor} of a word is a contiguous subword of that word.
For $w \in \Sigma^*$, we denote by $\card{w}$ the length of~$w$.
For $u, v \in \Sigma^*$, we say that $u$ is a \emph{subword} of~$v$ if, letting
$n \colonequals \card{u}$, there are 
$1 \leq i_1 < \cdots < i_n \leq \card{v}$ such that $u = v_{i_1} \cdots v_{i_n}$.
For $w \in \Sigma^*$ and $a \in \Sigma$, we denote by $\card{w}_a$ the number of occurrences of~$a$ in~$w$.
A \emph{language} $L$ is a subset of~$\Sigma^*$.

A \emph{congruence} on a finite alphabet~$\Sigma$ is an equivalence relation
on~$\Sigma^*$ which satisfies \emph{compositionality}, i.e., it is compatible
with the concatenation of words in the following sense: for any words $x$, $y$,
$z$, and $t$ of~$\Sigma^*$, if $x \sim y$ and $z \sim t$, then $xz \sim yt$.
It is a \emph{finite index congruence} if it has a finite number of equivalence classes.
For a given finite index congruence~$\sim$, the quotient $\Sigma^*/{\sim}$ is a
finite monoid, whose law corresponds to concatenation over~$\Sigma^*$ and whose
identity element is the class of the empty word.

The \emph{syntactic congruence} of a regular language $L$ over an alphabet $\Sigma$ is
the congruence on $\Sigma$ where $u, v \in \Sigma^*$ are equivalent if for all
$s, t \in \Sigma^*$, we have $sut \in L$ iff $svt \in L$. As $L$ is regular,
this congruence is a finite index congruence.
The \emph{syntactic monoid} of~$L$ over~$\Sigma^*$ is then
the quotient of~$\Sigma^*$ by the
syntactic congruence for~$L$.
The \emph{syntactic semigroup} of~$L$ is the quotient of $\Sigma^+$, the
non-empty words over~$\Sigma$, by the syntactic congruence.

A \emph{variety of regular languages} is a class of regular languages which is
closed under Boolean operations, left and right derivatives (also called left
and right quotients), and inverse homomorphisms.
By Eilenberg's theorem \cite{Eilenberg76}, a variety $\V$ of monoids 
defines a \emph{variety of regular languages}, namely, the languages whose syntactic
monoid is in $\V$.
Following standard practice, we abuse
notation and identify varieties of monoids with varieties of languages following
this correspondence, i.e., we write $\V$ both for the variety of monoids and
the variety of languages.

Letting $\V$ be a variety of monoids,
we say that a finite index congruence $\sim$ on~$\Sigma$ is a
\emph{$\V$-congruence} if the quotient
$\Sigma^*/{\sim}$ is a monoid in $\V$.
For a given $\V$-congruence $\sim$, the map $\eta:\Sigma^*\to \Sigma^*/{\sim}$,
defined by associating each word with its equivalence class, is a surjective
morphism to a monoid of $\V$.
Hence, each equivalence class is a language of $\V$, since it is recognized by
$\Sigma^*/{\sim}$.

\paragraph*{The variety $\ZG$.}
In this paper, we study the variety of monoids $\ZG$
defined (via Reiterman's theorem~\cite{reiterman1982birkhoff})
by the equation: $x^{\omega+1} y = y x^{\omega+1}$.
Intuitively, this says that the elements
of the form $x^{\omega+1}$ are central, i.e., commute with all other elements.
This clearly implies the same for elements of
the form $x^{\omega+k}$ for any $k\in\ZZ$, as we will
implicitly use throughout the paper:

\begin{clm}
\label{clm:eqnk}
	For any monoid $M$ in $\ZG$, for $x, y \in M$, and $k \in \ZZ$, we have:
 $x^{\omega+k} y = y x^{\omega+k}$.
\end{clm}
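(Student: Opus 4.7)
The plan is to reduce the claim to the defining equation by expressing $x^{\omega+k}$ as a power of $x^{\omega+1}$, and then use the general fact that if an element commutes with $y$, so does every power of it.

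First, by the notational convention recalled in the preliminaries, $x^{\omega+k}$ for arbitrary $k \in \ZZ$ denotes $x^{\omega+k'}$ with $k' \in \{0, \ldots, \omega-1\}$ the remainder of $k$ modulo $\omega$. So we may assume $k \in \{0, 1, \ldots, \omega-1\}$, and it suffices to prove the claim in that range.

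Next, I would show by a direct exponent computation that $(x^{\omega+1})^{\omega+k} = x^{\omega+k}$ for every $k \in \{0, \ldots, \omega-1\}$. Indeed, the left-hand side equals $x^{(\omega+1)(\omega+k)} = x^{\omega^2 + \omega k + \omega + k}$. Since this exponent is at least $\omega$ and is congruent to $k$ modulo $\omega$, it equals $x^{\omega+k}$ in $M$ by the standard idempotent-power arithmetic (the case $k=0$ gives $(x^{\omega+1})^{\omega} = x^{\omega}$, which is consistent with $x^{\omega}$ being idempotent).

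Finally, the defining equation $x^{\omega+1} y = y x^{\omega+1}$ says that $x^{\omega+1}$ commutes with $y$; by a trivial induction on $m$, any power $(x^{\omega+1})^m$ therefore commutes with $y$ as well. Specialising to $m = \omega+k$ and using the identity from the previous step yields $x^{\omega+k} y = y x^{\omega+k}$, which is the desired equation. The only subtlety is bookkeeping around the $\omega$ convention and the exponent reduction; there is no real obstacle beyond that.
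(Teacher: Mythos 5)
Your proof is correct, and it follows essentially the same route as the paper: reduce $k$ modulo $\omega$, then use idempotent-power arithmetic to rewrite $x^{\omega+k}$ so that the defining equation of $\ZG$ applies. The only (immaterial) difference is the direction of the rewriting: the paper observes $x^{\omega+k} = (x^{\omega+k})^{\omega+1}$ and applies the axiom once with $x' = x^{\omega+k}$, whereas you write $x^{\omega+k} = (x^{\omega+1})^{\omega+k}$ and then note that every power of $x^{\omega+1}$ commutes with $y$.
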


\begin{proof}
  Recall that $x^{\omega+k}$ denotes $x^{\omega+k'}$ where $k' \in \{0, \ldots,
  \omega-1\}$ is the remainder of $k$
in the integer division by $\omega$.
Now, we can write $x^{\omega+k'}$ as $(x^{\omega+k'})^{\omega+1}$, because the latter is equal to
$x^{(\omega+k')\times (\omega+1)}$ which is indeed equal to $x^{\omega+k'}$.
  Thus, by setting $x' \colonequals (x^{\omega+k'})$ and applying the equation
  of~$\ZG$, we conclude.
\end{proof}

Note that the elements of the form $x^{\omega+1}$
are precisely the \emph{group elements}, namely, the elements of the monoid that are within a (possibly trivial)
group that is a subsemigroup of the monoid. In particular, the neutral
elements of these groups are not necessarily the neutral element of the
monoid.

Claim~\ref{clm:eqnk}
motivates the name $\ZG$, which stands for ``Zentral Group'': it
follows the
traditional notation $Z(\cdot)$ for central subgroups, and extends the variety $\ZE$
introduced in~\cite[p211]{almeida1994finite} which only requires idempotents to
be central.
Thus, we have $\ZG \subsetneq \ZE$, and
non-commutative groups are examples of monoids that are in $\ZE$ but not in~$\ZG$.

In addition to the class $\ZG$, we will specifically study
the subclasses $\ZG_p$ for $p > 0$ defined by imposing the equation
$x^{\omega+p} = x^\omega$ in addition to the $\ZG$ equation. Intuitively,
$\ZG_p$ is the variety of monoids (and associated regular languages) where group
elements commute and where the \emph{period} of all group elements divides $p$,
where the \emph{period} of a group element~$x$ is the smallest integer $p'$ such that
$x^{p'} = x$. 
Clearly, $\ZG = \bigcup_{p > 0} \ZG_p$. Further, any finite language is in
$\ZG_p$ for any $p>0$.

The \emph{period} of a semigroup is the least common multiple of all periods
of group elements. Note that the period of a semigroup always divides~$\omega$,
because the period of any group element is equal to its idempotent power, hence
divides~$\omega$.
Further, 
the period of a monoid in $\ZG_p$ always divides~$p$.

\section{Characterizations of $\ZG$}
\label{sec:carac}

In this section, we present our characterizations of~$\ZG_p$ and of~$\ZG$, which we will use
to prove Theorem~\ref{thm:locality2}.
We will show that $\ZG_p$ is intimately linked to a congruence on words called the
$n,p$-congruence. Intuitively, two words are identified by this congruence if the
subwords of the \emph{rare letters} (occurring less than $n$ times) are the
same, and the numbers of occurrences of the \emph{frequent letters} (occurring
more than $n$ times) are congruent modulo~$p$. This is the standard technique of
\emph{stratification}, used also, e.g., in~\cite[Example~0.1]{therien85} or
in~\cite[Section 10.8]{almeida1994finite}.
Formally:

\begin{defi}[Rare and frequent letters, $n,p$-congruence]
\label{def:rare}
  Fix an alphabet $\Sigma$ and a word $w \in \Sigma^*$.
  Given a integer $n \in \NN$ called the \emph{threshold}, we call $a \in \Sigma$ \emph{rare} in~$w$ if $\card{w}_a \leq n$, and \emph{frequent} in~$w$ if $\card{w}_a > n$.
  The \emph{rare alphabet} is the (possibly empty) set $\{a \in \Sigma \mid \card{w}_a \leq n\}$.
We define the \emph{rare subword} $\restr{w}{\leq n}$ to be the subword of~$w$
  obtained by keeping only the rare letters of~$w$.

  For $n > 0$ and for any integer $p>0$ called the \emph{period}, the \emph{$n,p$-congruence} $\sim_{n,p}$
  is defined by writing $u\sim_{n,p}
v$ for 
$u, v \in \Sigma^*$ iff:
  \begin{itemize}
    \item The rare subwords are equal: $\restr{u}{\leq n} = \restr{v}{\leq n}$;
    \item The rare alphabets are the same: for all $a \in \Sigma$, we have $\card{u}_a > n$ iff $\card{v}_a > n$;
    \item The number of letter occurrences modulo~$p$ are the same: for all $a
      \in \Sigma$, we have that $\card{u}_a$ and $\card{v}_a$ are congruent
      modulo~$p$. (Note that we already know this for rare letters
      using the previous conditions.)
  \end{itemize}
\end{defi}

We first remark that two $n,p$-equivalent words are also $m,q$-equivalent for any
divisor $q$ of~$p$ and any $m \leq n$:
\begin{clm}
	\label{clm:multiple}
	For any alphabet $\Sigma$, for any $0 < m \leq n$ and $p, q > 0$ such
        that $q$ divides~$p$, 
        the $n,p$-congruence refines the $m,q$-congruence.
\end{clm}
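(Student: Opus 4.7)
The plan is to unpack the three defining conditions of $\sim_{m,q}$ and check that each follows from the corresponding condition of $\sim_{n,p}$ when we assume $u \sim_{n,p} v$. The modular count condition is immediate: if $\card{u}_a \equiv \card{v}_a \pmod{p}$ for every $a \in \Sigma$ and $q$ divides $p$, then the same congruence holds modulo~$q$. So the real work is in transferring the ``rare alphabet agreement'' and the ``equal rare subword'' conditions from threshold $n$ down to threshold $m$.

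For the rare alphabet condition, I would first record a small auxiliary observation that I will reuse: if $a$ is rare in~$u$ with threshold~$n$, i.e.\ $\card{u}_a \le n$, then $a$ is also rare in~$v$ with threshold~$n$ (by the rare alphabet condition of $\sim_{n,p}$), and moreover $\card{u}_a = \card{v}_a$ since every occurrence of~$a$ in~$u$ appears in $\restr{u}{\leq n}$ and likewise for~$v$, and these two subwords are equal. With this, I split on cases: if $\card{u}_a \le m$, then $a$ is rare at threshold~$n$, so $\card{v}_a = \card{u}_a \le m$; if $m < \card{u}_a \le n$, then again $\card{v}_a = \card{u}_a > m$; and if $\card{u}_a > n \ge m$ then $\card{v}_a > n \ge m$. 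So $\card{u}_a \le m$ iff $\card{v}_a \le m$, as required.

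For the equal rare subword condition, the idea is to express $\restr{w}{\le m}$ as a projection of $\restr{w}{\le n}$ that depends only on~$\restr{w}{\le n}$ itself, and then invoke $\restr{u}{\le n} = \restr{v}{\le n}$. Concretely, for every letter $a$ with $\card{w}_a \le m \le n$, all occurrences of~$a$ in $w$ already appear in~$\restr{w}{\le n}$, so $\card{w}_a$ equals the number of occurrences of~$a$ in~$\restr{w}{\le n}$. Hence $\restr{w}{\le m}$ is obtained from $\restr{w}{\le n}$ by erasing every letter that appears strictly more than $m$ times in~$\restr{w}{\le n}$. This erasure operation is a function of the word $\restr{w}{\le n}$ alone, so applying it to $\restr{u}{\le n}$ and to $\restr{v}{\le n}$ (which are equal) yields the same word, giving $\restr{u}{\le m} = \restr{v}{\le m}$.

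I do not expect a real obstacle here: the proof is essentially bookkeeping. The only subtle point to make explicit is the auxiliary fact that rare letters (at threshold~$n$) have exactly the same count in $u$ and~$v$, which is what lets the ``threshold $n$'' rare subword be further projected down to the ``threshold $m$'' rare subword in a canonical way.
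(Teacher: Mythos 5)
Your proof is correct and follows essentially the same idea as the paper's: the rare letters at threshold $n$ have identical counts in $u$ and $v$, so the rare alphabet and rare subword at threshold $m$ are determined by (a projection of) the data at threshold $n$, and the modular condition transfers since $q$ divides $p$. The only cosmetic difference is that you verify the three conditions of $\sim_{m,q}$ in one direct pass, whereas the paper first lowers the threshold (keeping $p=q$), then lowers the modulus (keeping $m=n$), and concludes by transitivity.
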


Intuitively, it is less precise to look for exact subwords up to a lower
threshold and with a modulo that divides the original modulo. Here is the formal
proof:

\begin{proof}[Proof of Claim~\ref{clm:multiple}]
        We first assume that $p = q$.
        The claim is trivial for $m = n$, so we assume $n>m$.
	As $n>m$, if two words $u$ and $v$ have the same rare alphabet for $n$, then they have the same rare alphabet for $m$, because the number of occurrences of all rare letters for~$n$ is the same, so the same ones are also rare for~$m$.
        Furthermore, if $u$ and $v$ have the same rare subword for~$n$,
        the restriction of this same rare subword to the rare letters for~$m$
        yields the same rare subword for~$m$.
        Last, the number of occurrences of the frequent letters modulo $q$ is
        the same. Indeed, for the letters that were frequent for~$n$, this is
        the case because their number of occurrences is congruent modulo~$p =
        q$.
        For the letters that were not
        frequent for~$n$, this is because their number of occurrences has to be
        the same because the rare subwords for~$n$ were the same.

        We now assume that $q < p$. Let us now assume that $m = n$. Then the
        $n,p$-congruence refines the $m,q$-congruence because the rare
        alphabets and subwords must be equal (the threshold is unchanged), and
        the counts of the frequent letters modulo~$p$ determine these counts
        modulo~$q$ as $q$ divides $p$.

        To conclude in the general case, we know that the $n,p$-congruence
        refines the $n,q$-congruence which refines the $m,q$-congruence, so we
        conclude by transitivity.
\end{proof}

What is more, observe that $n,p$-congruences are a particular case of
$\ZG_p$-congruences:

\begin{clm}
  \label{clm:niszg}
  For any alphabet $\Sigma$ and $n > 0$ and $p > 0$,
  the $n,p$-congruence over $\Sigma^*$ is a $\ZG_p$-congruence.
\end{clm}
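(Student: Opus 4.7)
I would prove the claim by verifying three things in turn: compositionality, finite index, and the defining equations of $\ZG_p$.

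\medskip

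\textbf{Plan.} First, I would check that $\sim_{n,p}$ is a congruence. Suppose $u_1 \sim_{n,p} v_1$ and $u_2 \sim_{n,p} v_2$. The key observation is that if a letter $a$ is rare in $u_1 u_2$, then it must be rare in both $u_1$ and $u_2$ individually (since its total count is bounded by $n$). Splitting into cases on whether each factor contains $a$ rarely or frequently, one checks that the rare alphabet of $u_1u_2$ coincides with that of $v_1v_2$ and equals $R \colonequals \{a : |u_1|_a + |u_2|_a \leq n\}$. For the rare subword, since $R$ is contained in the rare alphabets of $u_1$ and of $u_2$, projection onto $R$ factors through projection onto the larger rare alphabets, giving $\pi_R(u_1u_2) = \pi_R(u_1)\pi_R(u_2) = \pi_R(v_1)\pi_R(v_2) = \pi_R(v_1v_2)$. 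The counts modulo $p$ are immediate by additivity. Finite index is then clear: a class is determined by the rare subword (a word of length at most $n|\Sigma|$), the rare alphabet (a subset of $\Sigma$), and a residue modulo $p$ for each frequent letter.

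\medskip

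\textbf{Computing the idempotent power.} I would next determine the idempotent power $\omega$ of the quotient monoid $M \colonequals \Sigma^*/\sim_{n,p}$. The class $[u]$ is idempotent iff $u \sim_{n,p} u^2$, which, by comparing counts, rare alphabets, and rare subwords, happens precisely when for every letter $a$ either $|u|_a = 0$, or $|u|_a > n$ and $|u|_a \equiv 0 \pmod{p}$. Applying this to $[u^\omega]$ (whose letter counts are $\omega|u|_a$) and quantifying over all words $u$, including single-letter words where $|u|_a = 1$, one finds that $\omega$ must be a multiple of $p$ strictly greater than $n$; conversely, any such $\omega$ works. Hence the idempotent power of $M$ is the smallest multiple of $p$ exceeding $n$.

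\medskip

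\textbf{Verifying the equations.} With this choice of $\omega$, the equations are straightforward. For $x^{\omega+1}y = y x^{\omega+1}$: for any $u, v$, counts of every letter in $u^{\omega+1}v$ and $vu^{\omega+1}$ are literally equal, so counts modulo $p$ agree. Any letter $a$ appearing in $u$ has count $\geq \omega+1 > n$ in both words, so it is frequent; thus the rare alphabet on both sides equals $\{a : |u|_a = 0 \text{ and } |v|_a \leq n\}$, call it $R$, and the rare subwords reduce to $\pi_R(v)$ on both sides (the $u^{\omega+1}$ block contributes nothing to $\pi_R$). For $x^{\omega+p} = x^\omega$: counts differ by $p|u|_a \equiv 0 \pmod{p}$; since $\omega > n$, letters appearing in $u$ are frequent on both sides, and rare subwords are empty on both sides. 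This shows the quotient satisfies both defining equations of $\ZG_p$.

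\medskip

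\textbf{Expected obstacle.} No step is deep; the only place requiring care is compositionality of rare subwords, where one must resist the temptation to identify the rare alphabet of $u_1u_2$ with the intersection (or union) of the rare alphabets of $u_1$ and $u_2$ and instead use the containment in each to let the projections compose cleanly.
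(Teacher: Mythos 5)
Your proof is correct, but it takes a genuinely different route from the paper's. The paper argues at the level of languages: it shows that each equivalence class of $\sim_{n,p}$ is a $\ZG_p$ language by writing it as an intersection $C' \cap U$, where $U$ is the inverse image under a letter-erasing morphism of a finite (singleton) language fixing the rare subword, and $C'$ is the inverse image of a commutative language fixing the frequent-letter counts modulo~$p$; it then invokes closure of the variety under inverse morphisms and intersection (and, implicitly, the embedding of the quotient into a product of the syntactic monoids of the classes). You instead work directly in the quotient monoid: you check compositionality, compute the idempotent power of $\Sigma^*/\sim_{n,p}$ (the least multiple of $p$ exceeding $n$), and verify the two defining equations $x^{\omega+1}y = yx^{\omega+1}$ and $x^{\omega+p} = x^\omega$ by comparing rare alphabets, rare subwords, and counts modulo~$p$ on representative words. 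Your approach is more elementary and self-contained, and it has the added merit of explicitly verifying that $\sim_{n,p}$ is a congruence of finite index — a point the paper takes for granted — while the paper's approach is shorter given the variety machinery and dovetails with its later language-level characterization of $\ZG$ (Corollary~\ref{cor:zgcarac}). Your identification of the key subtlety (the rare alphabet of a product is contained in, but need not equal, the rare alphabets of the factors, so the projections compose through the smaller alphabet) is exactly the right point to be careful about, and your computation of the idempotent power, though not strictly necessary in full precision (any exponent that is a multiple of $p$ and exceeds $n$ would do for checking the equations), is accurate.
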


\begin{proof}
	Let $E$ be an equivalence class of the $n,p$-congruence, which we see as a
        language of~$\Sigma^*$, and let us show that $E$ is a language of~$\ZG_p$.
        Intuitively, $E$ defines a language where the number of occurrences of
        each letter modulo~$p$ is fixed, where the set of letters occurring
        $\leq n$ times (the rare letters) is fixed, and the
        subword that they achieve is also fixed.
        Formally, let $\Sigma = A \sqcup B$ the partition of~$\Sigma$ in rare and
        frequent letters for the class~$E$, let $u$ be the word over~$A^*$
        associated to the class~$E$, and let $\vec{k}$ be the $\card{B}$-tuple
        describing the modulo values for~$E$.

        We know that the singleton language $\{u\}$ is a language of~$\ZG_p$,
        because it is finite. Hence, the language $U = B^* u_1 \cdots B^* u_n
        B^*$ is also in $\ZG_p$, because it is the inverse of $\{u\}$ by
        the morphism that erases the letters of~$B$ and is the identity on~$A$.
        Similarly, the language $C$ of words of~$B^*$ where the number of
        occurrences of each letter 
        modulo~$p$ are as prescribed by~$\vec{k}$ and where every letter occurs
        at least $n$ times is a language of~$\ZG_p$, because it is commutative
        and $p$ is a multiple of the period of every group element.
        For the same reason, the language $C'$ of words of~$\Sigma^*$ whose
        restriction to~$B$ are in~$C$ is also a language of~$\ZG_p$, because it is
        the inverse image of~$C$ by the morphism that erases the letters of~$A$
        and is the identity on~$B$.
        Now, we remark that $E = C' \cap U$, so $E$ is in~$\ZG_p$, concluding the
        proof.
\end{proof}

The goal of this section is to show the following result. Intuitively, it states
that $\ZG_p$-congruences are always refined by a sufficiently
large $n,p$-congruence. Formally:
\begin{thm}
\label{thm:congruence}
For any $p > 0$, consider any $\ZG_p$-congruence $\sim$ over $\Sigma^*$ and consider the quotient
  $M \colonequals \Sigma^*/{\sim}$.
Let $n\colonequals(\card{M}+1)$.
Then the congruence $\sim$ is refined by the~$n,p$-congruence on~$\Sigma$.
\end{thm}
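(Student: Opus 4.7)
My plan is to exhibit, for each word $u \in \Sigma^*$, a canonical representative $\hat{u}$ of its $\sim_{n,p}$-class, and to prove that $\eta(u) = \eta(\hat{u})$. This will suffice, since $u \sim_{n,p} v$ implies $\hat{u} = \hat{v}$ and hence $\eta(u) = \eta(\hat{u}) = \eta(\hat{v}) = \eta(v)$.

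Write $\eta \colon \Sigma^* \to M$ for the surjective morphism defining $\sim$, and let $m_b \colonequals \eta(b)$ for each $b \in \Sigma$. Denote by $B$ the frequent alphabet of~$u$, by $a_1 \cdots a_m \colonequals \restr{u}{\leq n}$ its rare subword, and by $k_b \colonequals \card{u}_b \bmod p$ the residue for each $b \in B$. Fix an arbitrary order on $B$ and take
\[
  \hat{u} \colonequals a_1 a_2 \cdots a_m \cdot \prod_{b \in B} b^{K_b},
\]
where each $K_b$ is an integer $\geq \omega$ with $K_b \equiv k_b \pmod{p}$. Since $K_b \geq \omega$, the power $m_b^{K_b}$ is a group element of~$M$, so by the defining equation of~$\ZG$ it commutes with everything in~$M$; moreover, by the periodicity $m_b^{\omega+p} = m_b^{\omega}$, the value of $m_b^{K_b}$ depends only on~$k_b$. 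Consequently $\eta(\hat{u})$ is well-defined and depends only on $(a_1 \cdots a_m, B, (k_b)_{b \in B})$, i.e., on the $\sim_{n,p}$-class of~$u$.

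The heart of the proof is showing $\eta(u) = \eta(\hat{u})$. I would argue by induction on $\card{u}$. The base case $\card{u} \leq n$ forces $B = \emptyset$ and $\hat{u} = u$, so there is nothing to prove. For the inductive step, some $b \in B$ is frequent in~$u$, meaning $\card{u}_b > n = \card{M}+1$. Look at the $\card{u}_b + 1 > \card{M}+2$ partial products of~$u$ obtained by cutting right after each occurrence of~$b$: by pigeonhole in~$M$, two of them coincide, producing a factor $L$ of $u$ whose $\eta$-image is a right identity of the corresponding prefix. The standard pumping argument shows that $\eta(u)$ is unchanged when $L$ is deleted from~$u$ (or when its number of copies is replaced by any other value). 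Combining this with (i) the centrality of $m_b^{\omega+1}$, which lets us slide any block $m_b^{K}$ with $K \geq \omega$ past arbitrary elements of~$M$, and (ii) the periodicity identity, which lets us reduce such a $K$ modulo~$p$, we can transform $u$ into its canonical form~$\hat{u}$.

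The main obstacle is managing the pumping so that the rare subword, the frequent alphabet, and the count residues modulo~$p$ of~$u$ are not disturbed. This can be handled by a refined pigeonhole (for instance taking values in $M \times \ZZ/p\ZZ$ to ensure $L$ removes $b$'s in a multiple of~$p$) combined with preliminary rearrangement steps that exploit the centrality of $m_b^{\omega+1}$ to consolidate scattered occurrences of~$b$ into longer contiguous runs. The specific choice $n = \card{M}+1$ is precisely what guarantees that such a usable loop~$L$ can be found directly within~$u$.
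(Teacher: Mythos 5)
Your overall strategy (gather each frequent letter into a single block, then use centrality of group elements and periodicity to compare the two words) is the same as the paper's, but the step on which everything hinges is not actually carried out, and the mechanisms you sketch for it do not work at the stated threshold. The crux is: how do you consolidate the \emph{scattered} occurrences of a frequent letter $b$ into one contiguous block without changing the image under $\eta$? Centrality in $\ZG$ applies only to group elements, i.e.\ to powers $m_b^{K}$ with $K \geq \omega$ — and no such block exists as a factor of $u$ to begin with, so "sliding blocks $m_b^K$ past arbitrary elements" and "preliminary rearrangement steps that consolidate scattered occurrences of $b$" presuppose exactly what has to be proved. The pigeonhole only yields $x_i = x_i\,\mu(L)$ for a \emph{mixed} loop $L$ (and hence $x_i = x_i\,\mu(L)^\omega$, a central but mixed idempotent); to extract from this a central power of the single letter $b$ one needs the nontrivial $\ZG$ identity $(xy)^\omega = x^\omega y^\omega$ (the paper's Lemma~\ref{lem:distromega}, proved by a page of equational manipulation), which gives $x_i = x_i\,(z\mu(b))^\omega = x_i\,(z\mu(b))^\omega \mu(b)^\omega = x_i\,\mu(b)^\omega$; only then can the central block $\mu(b)^\omega$ absorb every isolated occurrence of $b$ with the exact count preserved (Claim~\ref{clm:moveend}). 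Nothing in your proposal plays the role of this identity, and without it the transformation of $u$ into $\hat{u}$ is unjustified.

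The fallback you describe — delete a pumping factor $L$ and induct on $\card{u}$ — has independent problems. Deleting $L$ may remove rare letters, after which $\hat{u'}$ and $\hat{u}$ have different rare subwords and the induction hypothesis tells you nothing about $\eta(\hat{u})$; and it changes the counts of the frequent letters occurring in $L$ by uncontrolled amounts, which matters modulo the period. Your proposed fix, a refined pigeonhole with values in $M \times \ZZ/p\ZZ$ (and implicitly in $M \times (\ZZ/p\ZZ)^{\card{\Sigma}}$ if you also want to control the other letters of $L$, plus a guarantee that $L$ avoids rare letters), requires far more than $\card{M}+1$ occurrences of $b$, so the claim that $n = \card{M}+1$ "is precisely what guarantees that such a usable loop can be found" is not correct: with this threshold the only pigeonhole available is the plain one over $M$, and the gaps between consecutive occurrences of $b$ may all contain rare letters. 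The paper avoids all of this by never deleting anything: it \emph{inserts} the central idempotent $\mu(b)^\omega$ obtained via Lemma~\ref{lem:distromega} and lets it absorb the occurrences of $b$, so counts, rare subword and residues are preserved exactly. To repair your write-up you would need to state and prove Lemma~\ref{lem:distromega} (or an equivalent fact) and redo the consolidation step along these lines.
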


We first present some consequences of Theorem~\ref{thm:congruence}, and
then prove Theorem~\ref{thm:congruence}.

\subsection{Consequences of Theorem~\ref{thm:congruence}}

  The most important consequence of Theorem~\ref{thm:congruence} is a
characterization of the languages in~$\ZG$, which is
similar
to the one obtained by Straubing in~\cite{Straubing82} for the variety $\MNil$.
To define~$\MNil$, first define the operation $S^1$ on a semigroup~$S$: this is
the monoid obtained from~$S$ by adding an identity element $1$ to~$S$
if $S$ does not have one. Now, 
define a \emph{nilpotent semigroup} $S$ to be a semigroup
satisfying the equation $x^\omega y = y x^\omega = x^\omega$.
The variety $\MNil$ is generated by monoids of the form $S^1$ for
$S$ a nilpotent semigroup. Note that $\MNil \subseteq \ZG$, because the equation
of $\MNil$ implies the equation of~$\ZG$.
It was shown in~\cite{Straubing82} that the languages of $\MNil$ are generated
by \emph{disjoint
monomials},
that is, they are Boolean combinations of languages of the form $B^* a_1 B^* a_2 \cdots
a_k B^*$ with $B\cap \{a_1,\ldots, a_k\}=\emptyset$.

Our analogous characterization for~$\ZG$ is the following, obtained via
Theorem~\ref{thm:congruence}:

\begin{cor}
  \label{cor:zgcarac}
A language is in $\ZG$ if and only if it can be expressed as a finite union of languages of the form
$B^* a_1 B^* a_2 \cdots a_k B^* \cap K$
where $B \cap \{a_1, \ldots, a_k\} = \emptyset$ and $K$
is a regular commutative language.
\end{cor}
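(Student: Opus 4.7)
The plan is to reduce directly to Theorem~\ref{thm:congruence} by observing that every $\ZG$-language $L$ is a union of equivalence classes of its syntactic congruence $\sim_L$, which is a $\ZG$-congruence and hence a $\ZG_p$-congruence for some $p>0$ (namely the period of the syntactic monoid). By Theorem~\ref{thm:congruence} applied with $n \colonequals \card{M}+1$ for $M$ the syntactic monoid of~$L$, the congruence $\sim_L$ is refined by the $n,p$-congruence $\sim_{n,p}$. Thus $L$ can be written as a finite union of $\sim_{n,p}$-classes, and it suffices to show that each such class~$E$ has the required form.

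Next I would unpack an equivalence class $E$ of $\sim_{n,p}$ using the data from Definition~\ref{def:rare}: a partition $\Sigma = A \sqcup B$ into rare and frequent letters for~$E$, a word $u = u_1 u_2 \cdots u_k \in A^*$ describing the rare subword, and, for every $a \in B$, a target residue $k_a \in \{0, \ldots, p-1\}$ for $\card{w}_a$ modulo~$p$. I would then claim the explicit description
\[
E \;=\; \bigl(B^* u_1 B^* u_2 \cdots u_k B^*\bigr) \;\cap\; K,
\]
where $K$ is the commutative language of words $w \in \Sigma^*$ such that for every $a \in B$, we have $\card{w}_a > n$ and $\card{w}_a \equiv k_a \pmod p$.

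The verification of this identity is the only step that requires argument, and I do not expect it to be an obstacle: the monomial $B^* u_1 B^* \cdots u_k B^*$ forces the $A$-letters of~$w$ to form exactly the subword~$u$ (in particular, letters of $A \setminus \{u_1, \ldots, u_k\}$ do not occur, and each $u_i$ occurs the prescribed number of times, which is at most~$n$), so it captures precisely the rare-subword and rare-alphabet conditions defining~$E$. The language $K$ then takes care of the remaining conditions on the frequent letters, namely that they occur more than $n$ times (so that $A$ and~$B$ really are the rare and frequent alphabets of~$w$) and that their counts match the residues~$k_a$ modulo~$p$. Since $K$ is defined by conditions on letter counts only, it is commutative, and it is regular because each constraint is.

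The conditions of the corollary are satisfied as $B \cap \{u_1, \ldots, u_k\} = \emptyset$ follows from $u \in A^*$ and $A \cap B = \emptyset$. Taking the finite union over the $\sim_{n,p}$-classes that make up~$L$ gives the desired expression, completing the proof.
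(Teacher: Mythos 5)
Your proposal is correct and follows essentially the same route as the paper: pass to the syntactic congruence, invoke Theorem~\ref{thm:congruence} to refine it by an $n,p$-congruence, and express each class in the required monomial-intersect-commutative form (the paper phrases the class as a disjoint shuffle and then notes it equals $B^* a_1 \cdots a_k B^* \cap K$, which is exactly your explicit identity). No gaps to report.
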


Equivalently, we can say that every language of~$\ZG$ is a
finite union of \emph{disjoint shuffles} of a singleton language (containing
only one word) and of a regular commutative language, where the \emph{disjoint
shuffle} operator interleaves two languages (i.e., it describes the sets of words
that can be achieved as interleavings of one word in each language) while
requiring that the two languages are on disjoint alphabets.
Intuitively, these characterizations are because the syntactic congruence of a
$\ZG$ language is a $\ZG$-congruence, and Theorem~\ref{thm:congruence} tells us
that it is refined by an $n,p$-congruence, whose classes can be expressed as
stated.
Let us formally prove Corollary~\ref{cor:zgcarac} using
Theorem~\ref{thm:congruence}:

\begin{proof}[Proof of Corollary~\ref{cor:zgcarac}]
  One direction is easy: if a language $L$ is of the prescribed form, then it is a
  Boolean combination of languages of $\MNil$ and regular commutative languages.
  These languages are in $\ZG$ and $\ZG$ is closed under Boolean operations, so
  indeed $L$ is in~$\ZG$.

  For the converse direction,
fix a language $L$ in~$\ZG$, and consider the syntactic congruence $\sim$
  of~$L$: it is a $\ZG$-congruence,
  more specifically a $\ZG_p$-congruence for some value $p>0$.
  By Theorem~\ref{thm:congruence},
there exists $n \in \NN$ such that $\sim$ is refined by a $n,p$-congruence~$\sim'$.
Now, by definition of the syntactic congruence, the set of words of~$\Sigma^*$ that are in~$L$ is a union of equivalence classes of~$\sim$, hence of~$\sim'$. This means that $L$ can be expressed as the union of the languages corresponding to these classes.

  Now, an equivalence class of the $n,p$-congruence $\sim'$ can be expressed as
  the shuffle of two languages: the singleton language containing the rare word
  defining the class, and the language that imposes that all
  frequent letters are indeed frequent (so the rare alphabet is as required)
  and that the modulo of their number of occurrences is as specified. The
  second language is commutative, and the disjointness of rare and frequent
  letters guarantees that the shuffle is indeed disjoint.

  Thus, we have shown that $L$ is a union of disjoint shuffles of a singleton
  language and a regular commutative language. The form stated in the corollary
  is equivalent, i.e., it is the shuffle of the singleton language $\{a_1 \cdots
  a_k\}$ and of the commutative language obtained by restricting~$K$ to the
  subalphabet~$B$.
\end{proof}

Corollary~\ref{cor:zgcarac} then implies a characterization of the variety of monoids $\ZG$.
To define it, we use the \emph{join} of two varieties $\V$ and $\W$, denoted by
$\V\join\W$, which is the variety of monoids
generated by the monoids of $\V$ and those of $\W$.
Alternatively, the join is the least variety containing both varieties. We then
have:
\begin{cor}
  \label{cor:zgmoncarac}
The variety $\ZG$ is generated by commutative monoids and monoids of the form $S^1$ with $S$
a nilpotent semigroup. In other words, we have: $\ZG = \MNil \lor \Com$.
\end{cor}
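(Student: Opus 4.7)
The plan is to prove the two inclusions of the identity $\ZG = \MNil \lor \Com$, leveraging Corollary~\ref{cor:zgcarac} for the harder direction and working on the language side through Eilenberg's correspondence.

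For the inclusion $\MNil \lor \Com \subseteq \ZG$, since $\ZG$ is a variety of monoids, it is enough to verify that every monoid generating $\MNil \lor \Com$ lies in~$\ZG$. Commutative monoids trivially satisfy the defining equation $x^{\omega+1} y = y x^{\omega+1}$. For a monoid $S^1$ with $S$ a nilpotent semigroup, if $x = 1$ the element $x^{\omega+1} = 1$ is central; otherwise $x^{\omega+1}$ coincides with $x^\omega$, which is the zero of the nilpotent semigroup $S$ and hence satisfies $x^\omega y = y x^\omega = x^\omega$ by the nilpotent equation. Thus both generating classes satisfy the $\ZG$ equation, and the join is contained in~$\ZG$.

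For the reverse inclusion $\ZG \subseteq \MNil \lor \Com$, I would argue on the side of languages via Eilenberg's theorem. It suffices to show that every $\ZG$ language belongs to the variety of languages corresponding to $\MNil \lor \Com$. That latter variety, being generated (on the language side) by the union of the language classes $\MNil$ and $\Com$, and since both are already closed under inverse morphisms and derivatives, coincides with the Boolean closure of $\MNil \cup \Com$ as classes of languages (Boolean operations commute with inverse morphisms). Now apply Corollary~\ref{cor:zgcarac}: any $\ZG$ language is a finite union of languages of the form
\[
B^* a_1 B^* a_2 \cdots a_k B^* \cap K,
\]
where $B \cap \{a_1, \ldots, a_k\} = \emptyset$ and $K$ is a regular commutative language. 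By Straubing's characterization~\cite{Straubing82}, each disjoint monomial $B^* a_1 B^* \cdots a_k B^*$ is an $\MNil$ language, and $K$ is a $\Com$ language by definition. Hence each summand lies in the Boolean closure of $\MNil \cup \Com$, and so does their finite union; thus the language belongs to $\MNil \lor \Com$. By Eilenberg's theorem, this gives $\ZG \subseteq \MNil \lor \Com$.

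The proof is essentially bookkeeping once Corollary~\ref{cor:zgcarac} is available, so there is no real obstacle; the only subtlety worth spelling out carefully is the identification of the variety of languages associated with the monoid join $\MNil \lor \Com$ as the Boolean closure of the two underlying language varieties, which rests on the fact that both $\MNil$ and $\Com$ are already varieties of languages and hence stable under inverse morphisms and derivatives.
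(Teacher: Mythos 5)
Your proof is correct and follows essentially the same route as the paper: the easy inclusion by checking the $\ZG$ equation on the generators of $\MNil$ and $\Com$, and the converse via Corollary~\ref{cor:zgcarac} on the language side, concluding through the Eilenberg correspondence. You merely spell out the bookkeeping (generators satisfy $x^{\omega+1}y=yx^{\omega+1}$; the language variety of the join contains Boolean combinations of $\MNil$ and $\Com$ languages) that the paper leaves implicit.
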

\begin{proof}
	Clearly $\ZG$ contains both $\Com$ and $\MNil$. Furthermore, by
        Corollary~\ref{cor:zgcarac}, any language in $\ZG$
	is a union of intersections of a language in $\MNil$ and a language in~$\Com$. Hence, it is
	in the variety generated by these two varieties of languages, concluding the proof.
\end{proof}

On a different note, we will also use Theorem~\ref{thm:congruence}
to show a technical result that will be useful later. It
intuitively allows us to regroup and move arbitrary elements:
\begin{cor}
\label{cor:zgcommutes}
	For any monoid $M$ in $\ZG$, letting $n \geq \card{M}+1$,
	for any element $m$ of~$M$ and elements $m_1, \ldots, m_n$ of~$M$, we have
	\[
		m \cdot m_1 \cdot m \cdot m_2 \cdot m \cdots m \cdot m_n \cdot m \cdot
                m_n \cdot m = m^{n+1} \cdot m_1\cdots m_n.
	\]
\end{cor}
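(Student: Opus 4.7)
The plan is to reduce the claim to a comparison of two words in a free monoid and apply Theorem~\ref{thm:congruence}. Let $\Sigma \colonequals \{x, x_1, \ldots, x_n\}$ and define the morphism $\phi : \Sigma^* \to M$ by $\phi(x) \colonequals m$ and $\phi(x_i) \colonequals m_i$ for every $i$. Let $\sim$ be the kernel congruence of~$\phi$, i.e., $u \sim v$ iff $\phi(u) = \phi(v)$. Then $\Sigma^* / {\sim}$ is isomorphic to the submonoid $\phi(\Sigma^*)$ of~$M$; since $\ZG$ is closed under taking submonoids and $M \in \ZG$, the quotient lies in $\ZG_p$ for some $p > 0$.

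Next, I would apply Theorem~\ref{thm:congruence} to the $\ZG_p$-congruence~$\sim$: it provides a threshold $n' \colonequals \card{\Sigma^*/{\sim}} + 1 \leq \card{M} + 1 \leq n$ such that $\sim$ is refined by the $n',p$-congruence. By Claim~\ref{clm:multiple}, the $n,p$-congruence refines the $n',p$-congruence (since $n' \leq n$ and $p$ divides itself), and hence refines $\sim$ by transitivity. Consequently, to prove the desired equality in~$M$ it suffices to verify that the two words
\[
u \colonequals x\, x_1\, x\, x_2\, x \cdots x\, x_n\, x \qquad\text{and}\qquad v \colonequals x^{n+1}\, x_1\, x_2 \cdots x_n
\]
are $n,p$-equivalent in~$\Sigma^*$, for then $\phi(u) = \phi(v)$, which is exactly the stated identity.

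The verification is then a direct check against Definition~\ref{def:rare}. Both $u$ and $v$ contain the letter $x$ exactly $n+1$ times (so $x$ is frequent and has equal, hence congruent mod~$p$, counts in the two words) and each $x_i$ exactly once (so each $x_i$ is rare, with the same count on both sides). The rare alphabets therefore coincide, and in both words the rare subword obtained by erasing all occurrences of~$x$ is $x_1 x_2 \cdots x_n$. All three conditions of the $n,p$-congruence are satisfied, so $u \sim_{n,p} v$ as required.

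There is essentially no obstacle beyond choosing the right alphabet and morphism: once the problem is phrased as a word equality and Theorem~\ref{thm:congruence} is invoked, the combinatorial check is immediate, and the slack $n \geq \card{M}+1$ is exactly what is needed so that the threshold provided by Theorem~\ref{thm:congruence} is at most~$n$.
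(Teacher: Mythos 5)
Your proof is correct and follows essentially the same route as the paper's: both reduce the identity to an $n,p$-equivalence of two words and invoke Theorem~\ref{thm:congruence} for the kernel congruence of an evaluation morphism onto (a submonoid of) $M$. The only cosmetic difference is that you work over a fresh alphabet $\{x, x_1, \ldots, x_n\}$ and bridge thresholds via Claim~\ref{clm:multiple}, whereas the paper evaluates words over the alphabet $M$ itself with threshold $\card{M}+1$.
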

\begin{proof}
        We consider the free monoid~$M^*$.
	Let $\eta:M^* \to M$ be the onto morphism defined  by
        $\eta(m) \colonequals m$ for all $m \in M$.
        Let $\sim$ be the congruence that~$\eta$ induces
        over $M^*$, i.e., for $u, v \in M^*$, we have $u\sim v$ if $\eta(u)=\eta(v)$.
	Remark that, as $M$ is in $\ZG$, more specifically in $\ZG_p$ for some
        $p>0$,
        the congruence $\sim$ is a
        $\ZG_p$-congruence by definition.
        Hence, by Theorem~\ref{thm:congruence}, $\sim$ is refined by a $n,p$-congruence where $n=\card{M}+1$.
        Now, consider the two words in the equation that we wish to show: they are
        words of~$M^*$.
        As the letter $m$ over is then a frequent letter,
        we know that the two words are indeed $n,p$-congruent, which concludes
        the proof.
\end{proof}

\subsection{Proof of Theorem~\ref{thm:congruence}}

Having spelled out the consequences of Theorem~\ref{thm:congruence}, we prove it
in the rest of this section. It crucially relies on a general result about~$\ZG$ that we will use
in several proofs, and which is shown by elementary equation manipulations:

\begin{lem}
\label{lem:distromega}
	Let $M$ be a monoid of~$\ZG$, let $\omega$ be the idempotent power, and let $x, y \in M$. Then we have: $(xy)^\omega = x^\omega y^\omega$.
\end{lem}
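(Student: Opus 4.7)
The plan is to prove $(xy)^\omega = x^\omega y^\omega$ by elementary equational manipulation, relying on the fact that every element of the form $z^{\omega+k}$ (for $z \in M$ and $k \in \ZZ$) is central by Claim~\ref{clm:eqnk}. In particular, $x^\omega$, $y^\omega$, $x^{\omega+1}$, $y^{\omega+1}$, and $(xy)^{\omega+1}$ are all central.

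The first step is to establish the auxiliary formula $y^\omega (xy)^k = x^k y^{\omega+k}$ for all $k \geq 1$, by induction on $k$. The inductive step uses centrality of $y^\omega$ (so $y^\omega x = x y^\omega$) to push the $y^\omega$ prefix past the leftmost $x$ of $(xy)^k$ and absorb the following $y$ into the $y$-exponent, producing $x y^{\omega+1} (xy)^{k-1}$, after which the inductive hypothesis applies. Specializing $k = \omega$ yields $y^\omega (xy)^\omega = x^\omega y^{2\omega} = x^\omega y^\omega$, and by centrality of $y^\omega$ this rearranges to $(xy)^\omega \cdot y^\omega = x^\omega y^\omega$. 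A symmetric argument (pushing $x^\omega$ from the right instead) gives $(xy)^\omega \cdot x^\omega = x^\omega y^\omega$.

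The final step is to conclude $(xy)^\omega = x^\omega y^\omega$, and this is where I expect the main difficulty: the identity $(xy)^\omega y^\omega = x^\omega y^\omega$ only shows that $x^\omega y^\omega$ lies below $(xy)^\omega$ in the natural semilattice order on central idempotents, and $y^\omega$ cannot be cancelled directly since it is idempotent rather than invertible in $M$. My plan is to run the auxiliary argument one level higher with $k = \omega+1$ to obtain $y^\omega (xy)^{\omega+1} = x^{\omega+1} y^{\omega+1}$, and then combine centrality of the group element $(xy)^{\omega+1}$ with further manipulations by central group elements (in the subgroup of $y$) to deduce $(xy)^{\omega+1} = x^{\omega+1} y^{\omega+1}$. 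Taking $\omega$-th powers of both sides then closes the argument: $((xy)^{\omega+1})^\omega = (xy)^{\omega(\omega+1)} = (xy)^{\omega^2+\omega} = (xy)^\omega$ by idempotency of $(xy)^\omega$, while $(x^{\omega+1} y^{\omega+1})^\omega = x^\omega y^\omega$ because $x^{\omega+1}$ and $y^{\omega+1}$ are both central (hence commute) and $(z^{\omega+1})^\omega = z^\omega$ for any $z \in M$.
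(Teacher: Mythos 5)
Your auxiliary identity $y^\omega (xy)^k = x^k y^{\omega+k}$ is correct, although the inductive step as you describe it does not quite go through: after producing $x\,y^{\omega+1}(xy)^{k-1}$ the hypothesis (which speaks of $y^\omega(xy)^{k-1}$) does not directly apply; this is easily repaired (e.g.\ peel the factor $xy$ off on the right and use centrality of $y^{\omega+k-1}$, or insert $y^{\omega+1}=y^{\omega+1}y^\omega$ and commute $y^{\omega+1}$). Specializing to $k=\omega$ gives $(xy)^\omega y^\omega = x^\omega y^\omega$ and symmetrically $(xy)^\omega x^\omega = x^\omega y^\omega$; up to multiplying by $x^\omega$ and commuting central idempotents, this is exactly the easy half of the paper's argument, Equation~\ref{eqn:distr2}: $x^\omega y^\omega (xy)^\omega = x^\omega y^\omega$.

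The final step, however, is a genuine gap, and your own diagnosis of the difficulty applies verbatim to your proposed fix. From $y^\omega (xy)^{\omega+1} = x^{\omega+1} y^{\omega+1}$, deducing $(xy)^{\omega+1} = x^{\omega+1} y^{\omega+1}$ is again a cancellation of the idempotent $y^\omega$: since the right-hand side has $y^\omega$ as a factor, the desired identity is equivalent (given what you already proved) to the absorption $(xy)^{\omega+1} y^\omega = (xy)^{\omega+1}$. Multiplying by central group elements of the subgroup generated by $y$ cannot establish this, because every such element is itself a multiple of $y^\omega$, and divisibility by $y^\omega$ is preserved under multiplication, so no sequence of such multiplications removes the $y^\omega$ factor. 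What is missing is precisely the other half of the paper's proof, Equation~\ref{eqn:distr1}: $(xy)^\omega = x^\omega y^\omega (xy)^\omega$, i.e.\ that $(xy)^\omega$ already absorbs $x^\omega$ and $y^\omega$. The paper obtains it by a self-substitution trick absent from your sketch: write $(xy)^\omega = x(xy)^{\omega-1}y$ by centrality, inject $(xy)^\omega$ via $(xy)^{\omega-1} = (xy)^{\omega-1}(xy)^\omega$, apply the resulting identity $\omega$ times to create the factors $(x(xy)^{\omega-1})^\omega$ and $y^\omega$, and then expand and commute to get $x^\omega y^\omega(xy)^\omega$. Once that equation is available, your identity $(xy)^\omega y^\omega = x^\omega y^\omega$ finishes the proof in one line ($(xy)^\omega = x^\omega(xy)^\omega y^\omega = x^\omega x^\omega y^\omega = x^\omega y^\omega$); without it, your plan does not close, and the closing step of taking $\omega$-th powers, while sound in itself, rests on the unproved intermediate claim.
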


\begin{proof}
  We show this claim by showing that two equalities establishing that each side
  of the equation is equal to the same term, namely, $x^\omega y^\omega
  (xy)^\omega$. Let us first show the first equality:
\begin{equation}
	\label{eqn:distr1}
  (xy)^\omega = x^\omega y^\omega (xy)^\omega.
\end{equation}
  To show Equation~\ref{eqn:distr1}, remember that we have $(xy)^\omega = xy
  (xy)^{\omega-1}$ from the definition of $(xy)^{\omega-1}$ in the preliminaries. Now, as $(xy)^{\omega-1}$ is
  central, the right-hand side is equal to $x (xy)^{\omega-1} y$.
        By injecting an $(xy)^\omega$ in the latter, we obtain:
        \[(xy)^\omega = x (xy)^{\omega-1} (xy)^\omega y.\]
Applying this equality $\omega$ times gives:
  \[
    (xy)^\omega = (x (xy)^{\omega-1})^\omega (xy)^\omega y^\omega.
  \]
  Now, we can expand $(x(xy)^{\omega-1})^\omega$, commuting the
  $(xy)^{\omega-1}$ to regroup the $x$ into $x^\omega$ and regroup the
  $(xy)^{\omega-1}$ into
  $((xy)^{\omega-1})^\omega$ which is equal to~$(xy)^{\omega}$, so that the first factor of the
  right-hand side is equal to $x^\omega (xy)^\omega$.
  By commuting, we obtain Equation~\ref{eqn:distr1}.

The second equality is:
\begin{equation}
	\label{eqn:distr2}
	x^\omega y^\omega = x^\omega y^\omega (xy)^\omega.
\end{equation}
        To show Equation~\ref{eqn:distr2},
        note that we have $x^\omega y^\omega = x^{\omega - 1} x y^{\omega - 1} y$, so by
        the equation of $\ZG$ we get:
        \[
          x^\omega y^\omega = x^{\omega - 1} y^{\omega - 1} xy.
        \]
	Now, we have $x^{\omega - 1} y^{\omega - 1} = x^{\omega - 1} x^\omega y^{\omega - 1} y^\omega$,
        and by the equation of~$\ZG$ we have:
  \[x^{\omega-1} y^{\omega-1} = x^{\omega - 1} y^{\omega - 1} x^\omega
  y^\omega.\]
Inserting the second equality in the first, we have:
  \[x^\omega y^\omega =
  x^{\omega - 1} y^{\omega - 1} x^\omega y^\omega x y.\]
Now, applying this equality $\omega$ times gives
	$x^\omega y^\omega = (x^{\omega - 1} y^{\omega -1})^\omega x^\omega y^\omega (xy)^\omega$.
        As the first factor of the right-hand side is equal to $x^\omega
        y^\omega$, we get
 	$x^\omega y^\omega = x^\omega y^\omega (xy)^\omega$. This establishes
        Equation~\ref{eqn:distr2}.

From Equations~\ref{eqn:distr1} and~\ref{eqn:distr2}, we immediately conclude
  the proof.
\end{proof}

To continue with our proof of 
Theorem~\ref{thm:congruence},
thanks to Lemma~\ref{lem:distromega},
we can now show a kind of ``normal form'' for $\ZG$-congruences, by arguing that
any word can be rewritten to a word where frequent letters are moved to the
beginning of the word, without breaking equivalence for the $\ZG$-congruence. This relies
on Lemma~\ref{lem:distromega} and allows us to get to the notion of
$n$-equivalence. Specifically:

\begin{clm}
	\label{clm:moveend}
	Let $\sim$ be a $\ZG$-congruence on~$\Sigma$.
	Let $n \colonequals (\card{M}+1)$ where $M$ is the monoid associated to~$\sim$.
	Then, for all $w \in \Sigma^*$, for every letter $a \in \Sigma$ which is
        frequent in~$w$ (i.e., $\card{w}_a > n$), writing $w'$ the restriction
        of $w$ to~$\Sigma \setminus \{a\}$,
        and writing $w'' \colonequals a^{\card{w}_a} w'$,
        we have: $w \sim w''$.
\end{clm}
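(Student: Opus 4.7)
I write $w = u_0 a u_1 a u_2 \cdots a u_k$ where $u_i \in (\Sigma \setminus \{a\})^*$ and $k = \card{w}_a > n = \card{M}+1$, and let $v \colonequals u_0 u_1 \cdots u_k$, which equals $w'$. My goal is to show $w \sim a^k v$. The plan is three-fold: (i) prove that $a^\omega w \sim w$, i.e., $a^\omega$ is a left identity for the class of $w$; (ii) compute $a^\omega w$ directly by sliding $a^\omega$ rightwards through $w$, obtaining $a^{\omega+k} v$; and (iii) simplify $a^{\omega+k} \sim a^k$ in $M$.

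\textbf{Step (i) via pigeonhole.} Consider the partial products $p_i \colonequals u_0 a u_1 a \cdots u_{i-1} a$ modulo $\sim$, for $i = 1, \ldots, k$. Since $k > \card{M}+1 > \card{M}$, pigeonhole yields $i < j$ with $p_i \sim p_j$, whence $p_i s \sim p_i$ for $s \colonequals u_i a u_{i+1} a \cdots u_{j-1} a$. Iterating gives $p_i s^\omega \sim p_i$. Applying Lemma~\ref{lem:distromega} repeatedly to $s^\omega$ (viewed as a multi-factor product), and then using centrality of $a^\omega$ from Claim~\ref{clm:eqnk} to gather the $a^\omega$ factors to the right together with their idempotency to collapse them, I obtain $s^\omega \sim (u_i \cdots u_{j-1})^\omega \cdot a^\omega$. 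Substituting this into $p_i s^\omega \sim p_i$ and right-multiplying by $a^\omega$ yields $p_i a^\omega \sim p_i$. Writing $p_k = p_i \cdot t$ for the remaining suffix $t$ of $w$ ending with the $k$-th occurrence of $a$, and commuting $a^\omega$ across $t$ via centrality, I get $p_k a^\omega \sim p_i a^\omega t \sim p_i t \sim p_k$; hence $w a^\omega \sim p_k u_k a^\omega \sim p_k a^\omega u_k \sim p_k u_k \sim w$, and by centrality $a^\omega w \sim w$.

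\textbf{Steps (ii) and (iii); main obstacle.} For step (ii), I slide $a^\omega$ rightwards through $w$: commute past $u_0$ (central), merge with the first $a$ to form $a^{\omega+1}$ (still central by Claim~\ref{clm:eqnk}), commute past $u_1$, merge with the next $a$, and so on; after absorbing all $k$ copies of $a$, the result is $v \cdot a^{\omega+k}$, which is equivalent to $a^{\omega+k} v$ by centrality of $a^{\omega+k}$. Combining with step (i) gives $w \sim a^{\omega+k} v$. For step (iii), since $k > \card{M}$ exceeds the index of $a$'s eventually periodic orbit in $M$ (this index is bounded by $\card{M}$ because the powers $a, a^2, \ldots, a^{\card{M}+1}$ must exhibit a repetition), the element $a^k$ lies in the cyclic group whose identity is $a^\omega$, so $a^{\omega+k} \sim a^\omega \cdot a^k \sim a^k$. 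This yields $w \sim a^k v = w''$, as required. The principal obstacle is step (i): extracting an absorbing factor $a^\omega$ for $w$ via a pigeonhole argument on partial products and isolating it from a ``period'' $s^\omega$ using Lemma~\ref{lem:distromega} and centrality; the remaining manipulations are mechanical applications of the centrality of $a^{\omega+i}$ granted by Claim~\ref{clm:eqnk}.
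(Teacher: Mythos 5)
Your proof is correct and follows essentially the same route as the paper's: a pigeonhole argument on the prefixes ending at occurrences of $a$, Lemma~\ref{lem:distromega} to extract an absorbing $a^\omega$ factor, and centrality (Claim~\ref{clm:eqnk}) to gather the occurrences of~$a$ and finally drop the $\omega$. The only notable difference is cosmetic: you justify $a^{\omega+k}\sim a^k$ via the index of~$a$ being at most $\card{M}$, which is in fact a more careful argument than the paper's appeal to $\card{M}\geq\omega$ at the same point.
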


\begin{proof}
  Define $n \colonequals \card{M}+1$ as in the claim statement, and let $\mu:\Sigma^* \to
        M=\Sigma^*/{\sim}$ be the morphism associated to $\sim$. Remark that by definition, for any words $u,v$, we have $u\sim v$ iff $\mu(u) = \mu(v)$.

Let us take an arbitrary $w$ and $a \in \Sigma$ such that $a$ is frequent in~$w$.
We can therefore write $w = w_1 a w_2 a \cdots w_m a w_{m+1}$ with $m =
  \card{w}_a > n >  \card{M}$.
Furthermore, letting $x_l=\mu(w_1a w_2 a\cdots w_l a)$ for each $1 \leq l \leq m$, as
  $m > \card{M}$ we know by the pigeonhole principle that
there exist $1 \leq i < j \leq m$ such that $x_i = x_j$. Furthermore, we have $x_j = x_i z \mu(a)$ where
$z= \mu(w_{i+1} a \cdots w_j)$.
By applying the equation $\omega$ times, we have that $x_i = x_i (z \mu(a))^\omega$.

Now, by Lemma~\ref{lem:distromega}, we have $(z\mu(a))^\omega = z^\omega \mu(a)^\omega$.
This is equal to $z^\omega \mu(a)^\omega \mu(a)^\omega$, and by now applying 
  Lemma~\ref{lem:distromega} 
  in reverse we conclude that
$(z\mu(a))^\omega = (z\mu(a))^\omega  \mu(a)^\omega$.
Finally, we obtain $x_i = x_i (z\mu(a))^\omega = x_i (z\mu(a))^\omega \mu(a)^\omega = x_i \mu(a)^\omega$.

  Now, the equation of~$\ZG$ ensures that $\mu(a)^\omega$ is central,
  so we can commute it in~$\mu(w)$ and absorb all occurrences of~$\mu(a)$
  in~$\mu(w)$, then move it at the beginning, while keeping the same $\mu$-image.
  Formally, from $x_i = x_i \mu(a)^\omega$, we have
	\[
		\mu(w)= \mu(w_1) \mu(a) \cdots \mu(w_i) \mu(a) \mu(a)^\omega \mu(w_{i+1}) \mu(a) \cdots \mu(w_m)
  \mu(a) \mu(w_{m+1}),
	\]
   and we commute $\mu(a)^\omega$ to merge it with all
  $\mu(a)$ and then commute the resulting $\mu(a)^{\omega+\card{w}_a}$ to obtain
  $\mu(a)^{\omega+\card{w}_a} w'$ with $w'$ as defined in the statement of the
  claim.

  Now, remark that for any $i > \card{M}$, for any $x$ in $M$, we have $x^{\omega+i}
  = x^i$. This is because, letting $k \leq \card{M}$ be the idempotent power of $x$,
  we have $x^\omega = x^k$, hence $x^{\omega+i} = x^{k+i} = x^{i-k} x^k x^k =
  x^{i-k} x^k = x^i$. Note that $i-k > 0$ because $i > \card{M}$ and $k \leq
  \card{M}$, so
  $x^{i-k}$ is well-defined.

  By applying this to $x = \mu(a)$ and $i = \card{w}_a > \card{M}$,
  we deduce that $\mu(w) = \mu(a)^{\card{w}_a} \mu(w')$. 
  This establishes that
  $w \sim w''$ and concludes the proof.
\end{proof}

We can now conclude the proof of Theorem~\ref{thm:congruence}:

\begin{proof}[Proof of Theorem~\ref{thm:congruence}]
	Let $\sim$ be a $\ZG_p$-congruence on~$\Sigma^*$, $M$ its associated monoid, and fix $n
        \colonequals \card{M}+1$ as in the theorem statement.
Let $u$ and $v$ be two $n,p$-congruent words of~$\Sigma^*$, we need
to prove that they are indeed $\sim$-equivalent. Let $\Sigma' = \{a_1, \ldots,
  a_r\}$ be the subset of letters in~$\Sigma$ that are frequent in~$u$ (hence
  in~$v$, as they are $n,p$-congruent).
By successive applications of Claim~\ref{clm:moveend} for every frequent letter
  in~$\Sigma'$, starting with~$u$, we know that
  $u \sim a_1^{\card{u}_{a_1}} \cdots a_r^{\card{u}_{a_r}}\restr{u}{\leq n} $.
Likewise, we have
  $v \sim a_1^{\card{v}_{a_1}} \cdots a_r^{\card{v}_{a_r}}\restr{v}{\leq n} $.
  Now, for any $1 \leq i \leq r$, 
  the values $\card{u}_{a_i}$ and $\card{v}_{a_i}$ are greater than~$n$
  which is $\geq \card{M}$, and they are congruent modulo~$p$, which is a
  multiple of the period of~$a_i$, so we have $a_i^{\card{u}_{a_i}} \sim a_i^{\card{v}_{a_i}}$.
  We also know by definition of the $n,p$-congruence that $\restr{u}{\leq n} =
  \restr{v}{\leq n}$. By compositionality of~$\sim$, all of this establishes that $u \sim v$.
Thus, the $n,p$-congruence indeed refines the $\sim$-congruence, concluding the proof.
\end{proof}

\section{Defining $\ZGD$ and $\LZG$, and Result Statement}
\label{sec:zgli}

We have given our characterizations of~$\ZG$ and presented some preliminary
results. We now define
$\LZG_p$ and $\ZGpD$ and show that they are equal
(Theorem~\ref{thm:locality2}), and deduce the same for $\LZG$ and $\ZGD$.

\paragraph*{$\ZGpD$ and $\ZGD$.}
We denote by $\D$ the variety of the \emph{definite semigroups}, i.e.,
the semigroups satisfying the equation $y x^\omega = x^\omega$.
For $p>0$,
the variety of semigroups $\ZGpD$ is the variety generated by the
\emph{semidirect products}
of monoids in $\ZG_p$ and semigroups in~$\D$.
We recall for completeness the definition of the semidirect product operator,
even though 
we will not use it directly in this paper.
Given two semigroups $S$ and $T$, a \emph{semigroup action} of $S$ on $T$ is
defined by a map $\act\colon S\times T \to T$ such that
$\act(s_1, \act(s_2, t))= \act(s_1s_2, t)$ and $\act(s, t_1t_2)=\act(s,t_1) \act(s,t_2)$.
We then define the  product $\circ_\act$ on the set
$T\times S$ as follows: for all $s_1, s_2$ in $S$
and $t_1, t_2$ in $T$, we have: $(t_1, s_1)\circ_\act(t_2, s_2) \colonequals (t_1 \act(s_1, t_2), s_1 s_2).$
The set $T\times S$ equipped with the product $\circ_\act$ is a semigroup called the
\emph{semidirect product} of $S$ by $T$, denoted $T\circ_\act S$.

We then define $\ZGD$ as the variety of semidirect products of monoids in $\ZG$
and semigroups in~$\D$.

Remark that the ${*}$ operation is equivalent to the \emph{wreath product of
varieties}. We further note for the expert reader that
we could equivalently replace $\D$ by the variety $\Li$ of \emph{locally trivial
semigroups}.
We refer to~\cite{straubing1985finite} for a detailed presentation on this subject.

\paragraph*{$\LZG_p$ and $\LZG$.}
Last, we introduce the varieties $\LZG_p$ and $\LZG$.
For $p>0$,
the variety $\LZG_p$ is the variety of semigroups $S$
 such that, for every
idempotent~$e$ of~$S$, the subsemigroup $eSe$ of elements that can be written as
$ese$ for some $s\in S$  is in~$\ZG_p$.
Note that this subsemigroup is actually a monoid, called the \emph{local monoid}
of~$e$, and that its identity is~$e$.
The variety $\LZG$ is defined analogously but for~$\ZG$, and clearly $\LZG =
\bigcup_{p>0} \LZG_p$.
Remark that a semigroup is in $\LZG$
iff it satisfies the following
equation: for any $x$, $y$ and~$z$ in~$S$,
we have:
\[
	(z^\omega xz^\omega)^{\omega+1} (z^\omega yz^\omega) = (z^\omega yz^\omega) (z^\omega xz^\omega)^{\omega+1}.
\]
This incidentally shows by Reiterman's theorem that $\LZG$ is indeed a variety
of semigroups.

\paragraph*{Main result.}
Our main result, stated in Theorem~\ref{thm:locality2}, is that, for any $p>0$,
the varieties $\ZGpD$ and
$\LZG_p$ are actually equal. In particular, for $p = 1$, we get that $\MNilD =
\LMNil$.

Let us first remark that Theorem~\ref{thm:locality2} implies
Corollary~\ref{cor:locality} stated in the introduction, namely, that $\LZG =
\ZGD$:

\begin{proof}[Proof of Corollary~\ref{cor:locality}]
  Any semigroup in $\LZG$ is in
$\LZG_p$ for some~$p$, hence in $\ZGpD$ by locality of~$\ZGpD$, hence in $\ZGD$.
  Conversely, by definition of $\ZGD$ being a variety,
  each semigroup $S$ of $\ZGD$ is obtained by applying the quotient, subsemigroup, and
  product operators to semigroups $S_1, \ldots, S_k$ for some $k \in \NN$, where
  each $S_i$ is a semidirect product of a monoid in
  $\ZG_{p_i}$ for a certain $p_i > 0$ and of a semigroup in $\D$.
  Letting $p \colonequals \prod_{1 \leq i \leq k} p_i$, 
  all semigroups $S_1, \ldots, S_k$ are then in $\ZGpD$, and applying
  the operators then witnesses that $S$
  also belongs to $\ZGpD$.
  By locality of~$\ZGpD$, we know that $S$ belongs to $\LZGp$, hence to
$\LZG$.
\end{proof}

Thus, fixing the value $p>0$ for the rest of this paper, the only remaining
task is to prove Theorem~\ref{thm:locality2} on $\LZGp$ and $\ZGpD$. 
Note that we can freely use all equational results shown about monoids of $\ZG$
(e.g., Lemma~\ref{lem:distromega}), as
they also hold for monoids in~$\ZG_p$.

To prove Theorem~\ref{thm:locality2}, we will first present the general framework of Straubing's
delay theorem in the next section and show the easy inclusion $\ZGpD \subseteq
\LZG_p$,
before moving on with the rest of the proof.

\section{Straubing's Delay Theorem}
\label{sec:straubing}

To show our main result, we use Straubing's delay theorem from~\cite{straubing1985finite}.
We first give some prerequisites to recall this result.
To this end, let us first define a general notion of \emph{finite category}:

\begin{defi}
  A \emph{finite category} on a finite set of \emph{objects} $O$ defines, for every pair $(o,
  o') \in O$ of objects, a finite set $\CC_{o, o'}$ of \emph{arrows}. An arrow in $\CC_{o,
  o'}$ is said to be \emph{going from~$o$ to~$o'$}; we call $o$ the
  \emph{starting object} and $o'$ the \emph{ending object}.

  The overall set of arrows $\bigcup_{o,o' \in O} \CC_{o,o'}$ is equipped with a composition law: for any objects $o, o', o''$, for
  any arrows $a \in \CC_{o, o'}$ and $b \in \CC_{o', o''}$, the composition law
  gives us $ab$ which must be an arrow of~$\CC_{o, o''}$. (Note that the objects
  $o$, $o'$, and $o''$ are not necessarily pairwise distinct.)
  Further, this composition law must be associative. What is more, we require
  that for any object $o$, there exists an arrow in $\CC_{o, o}$ which is the identity
  for all elements that it can be combined with (note that these arrows are in
  particular unique).
\end{defi}

We now define the notion of \emph{category of idempotents} of a semigroup:

\begin{defi}[Category of idempotents]
Let $S$ be a semigroup. The \emph{category of idempotents} $S_\EE$ of~$S$ is the finite category defined as follows:
\begin{itemize}
   \item The objects of $S$ are the idempotents of~$S$.
   \item For any idempotents $e_1$ and $e_2$ and any element $x$ of $S$ such
     that $x \in e_1 S e_2$, we have an arrow labeled by $x$ going from~$e_1$
     to~$e_2$, which we will denote by $(e_1, x, e_2)$. Formally, $\CC_{e_1,
     e_2} = \{(e_1, x, e_2) \mid x \in e_1 S e_2\}$.
\end{itemize}
The composition law of the category is $(e_1, x, e_2) (e_2, y, e_3) = (e_1, xy,
  e_3)$.
 Note that it is clearly associative thanks to the associativity of the
  composition law on $S$. The identity element in $\CC_{e, e}$ for an
  idempotent~$e$ is simply $(e, e, e)$.
\end{defi}

Let us now study~$S_\EE$ in more detail.
We denote by
$B \colonequals \{ (e_1, x, e_2) \mid x \in e_1 S e_2 \}$ the set of arrows of
the category of idempotents.

A \emph{path} of~$S_\EE$ is a nonempty word of $B^*$ whose sequence of arrows is
\emph{valid}, i.e., the ending object of each arrow except the last one is equal to the starting object of the next arrow.
Because $S_\EE$ is a category, each path is equivalent to an element of the
category, i.e., composing the arrows of the path according to the composition
law of the category will give one arrow of the category, whose starting and
ending objects will be the \emph{starting object of the path} (i.e., that of the
first arrow) and the \emph{ending object of the path} (i.e., of the last arrow).
Two paths are \emph{coterminal} if they have the same starting and ending object.
Two paths $\pi_1$ and $\pi_2$ are \emph{$S_\EE$-equal} if they evaluate to the same
category element, which we write $\pi_1 \equiv \pi_2$. 
Note that if two paths are $S_\EE$-equal then they must be coterminal.
A \emph{loop} is a path whose starting and ending objects are the same.

A \emph{congruence} on~$B^*$ is an equivalence relation~$\sim$ over~$B^*$ that
satisfies compositionality as previously defined.
Note that the relation is also defined on
words of~$B^*$ that are not valid, i.e., that do not correspond to paths; the 
compositionality requirement also applies to such words.

\begin{defi}[Compatible congruence]
        A congruence $\sim$ on $B^*$ is said to be \emph{compatible} with~$S_\EE$
        if
        for any two coterminal paths $\pi_1$ and $\pi_2$ of~$S_\EE$ such that
        $\pi_1 \sim \pi_2$,
        we have $\pi_1 \equiv \pi_2$.
		In other words, $\sim$ is compatible with~$S_\EE$ iff, on words
                of~$B^*$ that are coterminal paths, it refines
                $S_\EE$-equality.
\end{defi}

Recall the notion of a $\ZG_p$-congruence from Section~\ref{sec:prelim}.
We are now ready to state Straubing's delay theorem, which was introduced
in~\cite{straubing1985finite} and rephrased in~\cite{Tilson}.
The theorem applies to any variety, but we state it specifically for~$\ZGp$ for
our purposes. The theorem gives us an alternative characterization of $\ZGpD$:

\begin{thm}[Straubing's delay theorem, Theorem 5.2 of~\cite{straubing1985finite}]\label{thm:straubing}
        A semigroup $S$ is in $\ZGpD$ iff,
		writing $S_\EE$ the category of idempotents of~$S$ and $B$ its set
                of arrows,
 		there exists a $\ZGp$-congruence on~$B^*$ which is compatible with~$S_\EE$.
\end{thm}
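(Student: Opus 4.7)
The plan is to prove the two directions of the equivalence separately, exploiting the semidirect product structure on one side and the category of idempotents on the other.

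For the forward direction, I start with $S \in \ZGpD$, which means $S$ is a quotient of a subsemigroup of a semidirect product $T \circ_\act D'$ with $T \in \ZGp$ and $D' \in \D$. It suffices to treat the case where $S$ equals such a semidirect product, since the required compatible congruence on~$B^*$ transports down through subsemigroups and quotients. For each arrow $(e_1, x, e_2)$ of~$S_\EE$, I write $x = (t_x, d_x)$ in $T \times D'$ and define $\eta \colon B^* \to T$ by $\eta((e_1, x, e_2)) \colonequals t_x$, extended multiplicatively. Since $T \in \ZGp$, the kernel congruence of~$\eta$ is a $\ZGp$-congruence on~$B^*$. Compatibility with $S_\EE$ then comes from the defining equation $yx^\omega = x^\omega$ of~$\D$: for coterminal paths of sufficient length, the $D'$-coordinate of the $S$-product is absorbed by the final arrows (which, by coterminality, share the same ending object), while the $T$-coordinate is exactly $\eta$. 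Two coterminal paths with the same $\eta$-image therefore compose to the same element of~$S$.

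For the reverse direction, I start from a $\ZGp$-congruence $\sim$ on~$B^*$ compatible with~$S_\EE$, set $T \colonequals B^*/{\sim}$ (which lies in $\ZGp$), and aim to realize $S$ as a quotient of a subsemigroup of a semidirect product $T \circ_\act D$ for some $D \in \D$. A natural choice for~$D$ is a semigroup of bounded-length tuples of $S$-elements, with concatenation truncated once the length exceeds the idempotent power of~$S$; this truncation is exactly what forces the equation $y x^\omega = x^\omega$ and places~$D$ in~$\D$. The action $\act$ of~$D$ on~$T$ uses the context stored in~$D$ to locate, for each incoming letter of~$S$, the appropriate idempotent frame $(e_1, e_2)$, interpret the letter as an arrow of~$S_\EE$, and multiply its $\sim$-class into the $T$-coordinate. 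The compatibility of~$\sim$ with $S_\EE$ is precisely what guarantees that the resulting map from~$S$ into $T \circ_\act D$ is a well-defined morphism of semigroups whose image admits $S$ as a quotient of a subsemigroup.

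The main obstacle is the reverse direction: elements of~$S$ do not intrinsically come with starting and ending idempotents, but arrows of~$S_\EE$ do, and the definite semigroup~$D$ has to bridge this gap in a way that is consistent with the axioms of a semidirect product. The conceptual idea is already present in Straubing's original proof~\cite{straubing1985finite}, but verifying that the truncated context really behaves definitely, that the action is well-defined and associative, and that the resulting evaluation map out of~$S$ gives the desired division, is where essentially all the technical work sits. The forward direction, by contrast, is a routine unpacking of the semidirect product once one observes that the definite factor is controlled by the ending idempotent.
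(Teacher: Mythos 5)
You should first be aware that the paper does not prove this statement at all: it is imported verbatim as Straubing's delay theorem (Theorem~5.2 of the cited 1985 paper, rephrased by Tilson), so there is no in-paper proof to match your argument against. Judged on its own, your proposal has the right overall architecture (absorb the definite coordinate for the easy direction, a bounded-memory ``delay'' construction for the hard one), but both directions contain concrete gaps. In the forward direction, the map $\eta$ sending an arrow $(e_1,x,e_2)$ with $x=(t_x,d_x)$ to $t_x$ does not compute the $T$-coordinate of the product: in $T\circ_\act D'$ the first coordinate of $(t_1,d_1)(t_2,d_2)\cdots(t_k,d_k)$ is $t_1\,\act(d_1,t_2)\,\act(d_1d_2,t_3)\cdots$, not $t_1t_2\cdots t_k$, so ``the $T$-coordinate is exactly $\eta$'' is false as stated and coterminal paths with equal $\eta$-image need not multiply to the same element. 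The repair is to exploit the arrow structure: since each $x_i$ lies in $e_iSe_{i+1}$ and $D'$ satisfies $ye=e$ for idempotent $e$, the accumulated $D'$-prefix just before $x_{i+1}$ equals $d_{e_{i+1}}$, so one must define $\eta$ with the twist $\act(d_{e_1},t_x)$ built in; only then do coterminality and definiteness give equality of both coordinates. Also, your reduction ``it suffices to treat $S$ equal to the semidirect product'' silently uses that division of semigroups induces division of their categories of idempotents (so that a compatible congruence transports along quotients and subsemigroups); this is a genuine lemma of Tilson's framework, not a formality.

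In the reverse direction the sketch's central device would fail as written: a single letter of $S$ is in general not an element of any $e_1Se_2$, so the action cannot ``interpret the letter as an arrow of $S_\EE$.'' The classical proof instead factorizes the input word into blocks delimited by checkpoint positions at which some factor inside the bounded window evaluates to an idempotent (this needs a pigeonhole/Ramsey argument, and it dictates the window length, which is not the idempotent power of $S$ and for which one must truncate on the suffix side to stay in $\D$); it is these blocks, bracketed by the checkpoint idempotents, that become arrows, and compatibility of $\sim$ is what makes the block decomposition's ambiguity harmless. Relatedly, the conclusion should not be ``a morphism from $S$ into $T\circ_\act D$'': the construction produces a morphism from the free semigroup over a generating alphabet into the semidirect product whose kernel refines the kernel of evaluation onto $S$, whence $S$ divides the product. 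So while your plan correctly identifies where the work lies, the specific steps you commit to (the untwisted $\eta$, letters-as-arrows, a map defined on $S$ itself) are the ones that do not survive, and filling them in essentially amounts to reconstructing Straubing's and Tilson's original argument, which the paper deliberately cites rather than reproves.
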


Using our notion of $n,p$-congruence, via
Claim~\ref{clm:niszg} and
Theorem~\ref{thm:congruence},
we rephrase it again:
\begin{cor}
\label{cor:straubing}
	A semigroup $S$ is in $\ZGpD$ iff, writing $S_\EE$ and $B$ as above,
 there exists an $n,p$-congruence on~$B^*$ which is compatible with~$S_\EE$.
\end{cor}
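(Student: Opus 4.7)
The plan is to derive the corollary from Straubing's delay theorem (Theorem~\ref{thm:straubing}) by using the two bridges we have already built between $\ZG_p$-congruences and $n,p$-congruences: namely Claim~\ref{clm:niszg} (every $n,p$-congruence is a $\ZG_p$-congruence) and Theorem~\ref{thm:congruence} (every $\ZG_p$-congruence on a free monoid is refined by some $n,p$-congruence). The key auxiliary observation is that compatibility with $S_\EE$ is preserved when passing to a refinement: if $\sim'$ refines $\sim$ on $B^*$, and $\sim$ is compatible with $S_\EE$, then for any two coterminal paths $\pi_1,\pi_2$ with $\pi_1 \sim' \pi_2$ we have $\pi_1 \sim \pi_2$, hence $\pi_1 \equiv \pi_2$, so $\sim'$ is also compatible.

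For the easy direction ($\Leftarrow$), suppose there is an $n,p$-congruence on $B^*$ that is compatible with $S_\EE$. By Claim~\ref{clm:niszg}, this congruence is a $\ZG_p$-congruence. Applying Theorem~\ref{thm:straubing} immediately yields $S \in \ZGpD$.

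For the other direction ($\Rightarrow$), suppose $S \in \ZGpD$. By Theorem~\ref{thm:straubing} there exists a $\ZG_p$-congruence $\sim$ on~$B^*$ that is compatible with~$S_\EE$. Set $M \colonequals B^*/{\sim}$ and $n \colonequals \card{M}+1$. By Theorem~\ref{thm:congruence} (applied to the free monoid on alphabet~$B$), the $n,p$-congruence on~$B^*$ refines $\sim$. By the preservation observation above, this $n,p$-congruence is compatible with~$S_\EE$, which is what we needed.

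I do not expect a serious obstacle here: both directions are essentially assemblies of results already proven. The only point that requires a moment of care is the preservation of compatibility under refinement, and this is a direct consequence of the definition of compatibility (it only constrains the congruence on coterminal paths, so making more pairs inequivalent can never destroy compatibility). Everything else is immediate from Theorem~\ref{thm:straubing}, Claim~\ref{clm:niszg}, and Theorem~\ref{thm:congruence}.
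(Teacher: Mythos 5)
Your proof is correct and follows exactly the route the paper intends: the paper derives Corollary~\ref{cor:straubing} from Theorem~\ref{thm:straubing} by citing Claim~\ref{clm:niszg} for the direction where an $n,p$-congruence is itself a $\ZG_p$-congruence, and Theorem~\ref{thm:congruence} for the direction where a compatible $\ZG_p$-congruence is refined by an $n,p$-congruence, with compatibility preserved under refinement just as you observe. Your write-up simply makes explicit the refinement-preservation step that the paper leaves implicit.
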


Before moving on to the full proof of our main theorem
(Theorem~\ref{thm:locality2}),
we conclude the section by noticing that the Straubing delay theorem implies the easy direction
of our result, namely, if $L$ is in $\ZGpD$ then $L$ is in $\LZGp$. This easy direction
follows directly from \cite{Tilson}, but let us provide a self-contained
argument:

\begin{clm}
  \label{clm:easy}
		We have $\ZGpD \subseteq \LZGp$.
\end{clm}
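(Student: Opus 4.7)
The plan is to apply Straubing's delay theorem via Corollary~\ref{cor:straubing}. Assume $S \in \ZGpD$ and fix an $n,p$-congruence $\sim$ on $B^*$ (where $B$ is the arrow set of $S_\EE$) that is compatible with $S_\EE$. The goal is to verify, for every idempotent $e \in S$, that the local monoid $eSe$ satisfies the two defining equations of $\ZG_p$: namely $x^{\omega+1} y = y x^{\omega+1}$ and $x^{\omega+p} = x^\omega$.

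Fix an idempotent $e \in S$ and elements $x, y \in eSe$. The key observation is that $\alpha_x \colonequals (e, x, e)$ and $\alpha_y \colonequals (e, y, e)$ are loops at $e$ in $S_\EE$, so any word over $\{\alpha_x, \alpha_y\}$ is a coterminal path starting and ending at $e$. I would then pick a sufficiently large multiple $N$ of the idempotent power $\omega$ of $S$ so that $N > n$. The first equation follows by comparing the two words $\alpha_x^{N+1} \alpha_y$ and $\alpha_y \alpha_x^{N+1}$ in $B^*$: they are coterminal loops at $e$ with identical letter counts, so they trivially agree on rare subword, rare alphabet, and letter counts modulo $p$, hence are $n,p$-congruent. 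Compatibility of $\sim$ with $S_\EE$ then forces $S_\EE$-equality, i.e., $x^{N+1} y = y x^{N+1}$ in $S$. Since $x^N = (x^\omega)^{N/\omega} = x^\omega$, this yields $x^{\omega+1} y = y x^{\omega+1}$, an equation that in particular holds in the submonoid $eSe$.

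The second equation is derived analogously by comparing $\alpha_x^{N+p}$ with $\alpha_x^N$: both are coterminal loops at $e$ whose unique letter $\alpha_x$ is frequent in both (since $N > n$), with counts congruent modulo $p$; $n,p$-congruence and compatibility then give $x^{N+p} = x^N$, i.e., $x^{\omega+p} = x^\omega$. Thus $eSe \in \ZG_p$ for every idempotent $e$, so $S \in \LZGp$. There is essentially no obstacle in this argument: the only subtle point is inflating $\omega$ to a sufficiently large multiple $N$ to ensure $\alpha_x$ is frequent in the relevant words, which is harmless because $x^{N+k} = x^{\omega+k}$ for $k \geq 0$ whenever $N$ is a positive multiple of $\omega$.
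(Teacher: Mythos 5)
Your proof is correct, but it takes a genuinely different route from the paper's. The paper works directly with Theorem~\ref{thm:straubing}: given a $\ZGp$-congruence $\sim$ compatible with $S_\EE$, it restricts $\sim$ to the set $B_e$ of loops at an idempotent~$e$, observes that $B_e^*/{\sim_e}$ is a submonoid of $B^*/{\sim} \in \ZGp$, and that compatibility makes $eSe$ a quotient of $B_e^*/{\sim_e}$; closure of $\ZGp$ under submonoids and quotients concludes. That argument is generic—it shows $\Dop{\V} \subseteq \mathbf{LV}$ for any variety $\V$ without ever using the equations of~$\ZG_p$. You instead go through Corollary~\ref{cor:straubing} (hence implicitly through the nontrivial Theorem~\ref{thm:congruence}, which is however already available at this point) and verify the two defining equations of $\ZG_p$ in each local monoid by exhibiting pairs of $n,p$-equivalent coterminal loops, namely $\alpha_x^{N+1}\alpha_y$ versus $\alpha_y\alpha_x^{N+1}$ and $\alpha_x^{N+p}$ versus $\alpha_x^N$ with $N$ a multiple of~$\omega$ exceeding~$n$. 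This is a perfectly valid equational verification: the pairs are coterminal, the counts match modulo~$p$, and the rare data coincide, so compatibility forces $S_\EE$-equality, and $x^N = x^\omega$ transfers the conclusion to the $\omega$-equations. One small caution: ``identical letter counts'' does not in general imply equal rare subwords (order matters), but it does here because the only rare letter, $\alpha_y$, occurs exactly once in each word; it would be worth saying this explicitly. Overall, your argument buys concreteness at the price of being specific to the equational presentation of $\ZG_p$ and of invoking heavier machinery, whereas the paper's buys generality and brevity.
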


\begin{proof}
If $S$ is in $\ZGpD$, then by Theorem~\ref{thm:straubing}, there exists a
  $\ZGp$-congruence $\sim$ compatible with $S_\EE$.
Let us now show that $S$ is in $\LZGp$ by showing that,
for any idempotent $e$, the local monoid $eSe$ is in~$\ZGp$.
Let $e$ be an idempotent. By definition of~$S_\EE$, the local monoid $eSe$ is isomorphic to the subset of arrows of~$S_\EE$ going from~$e$ to~$e$, with their composition law.
Let us denote this subset by~$B_e$.
Define $\sim_e$ to be the specialization of the relation~$\sim$ to~$B_e$. Remark
  that $M=B_e^*/{\sim_e}$ is a submonoid of $B^*/{\sim}$,
  and is hence in $\ZGp$ because $B^*/{\sim}$ is, and $\ZGp$ is a variety.
 Remark that since all words in~$B_e^*$ are valid paths in~$S_\EE$, the local
  monoid $N\colonequals eSe$ defines a congruence $\sim_2$ over $B_e^*$ where two paths are equivalent if they evaluate to the same monoid element.
We know that $\sim$, hence $\sim_e$, refines this congruence~$\sim_2$. Hence, $N$ is a quotient of $M$.
  Thus, $eSe$ is a quotient of $B_e^*/{\sim_e}$, which is a submonoid of a
  monoid in~$\ZGp$, concluding the proof.
\end{proof}

In the rest of this paper, we show the much harder direction, i.e., if $S$ is in
$\LZGp$ then $S$ is in $\ZGpD$.
To prove this, using 
Corollary~\ref{cor:straubing}, it suffices to show:

\begin{clm}
  \label{clm:hard}
  Let $S$ be a semigroup of $\LZGp$, write $S_\EE$ its category of idempotents and
  $B$ the set of arrows of~$S_\EE$.
  There exists an integer $n > 0$ such that the $n,p$-congruence on~$B^*$ is compatible with~$S_\EE$.
\end{clm}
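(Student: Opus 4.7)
The plan is to choose a value $n$ depending polynomially on $\card{S}$ (and on $p$), and then to show by nested induction that any two coterminal $n,p$-congruent paths $\pi_1, \pi_2 \in B^*$ satisfy $\pi_1 \equiv \pi_2$. The choice of $n$ (the business of Section~\ref{sec:choosen}) should be large enough so that whenever an arrow $b = (e_1, x, e_2)$ is frequent in a path, a pigeonhole argument over the at most $\card{S}$ possible idempotents guarantees, between many consecutive occurrences of $b$, repeated visits to a common idempotent $e$. This allows us to cut out long loops at~$e$, whose evaluation is an element of the local monoid $eSe$, which by the $\LZGp$ hypothesis lies in $\ZG_p$ and is therefore amenable to the equational toolbox of Section~\ref{sec:carac}.

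Next I would prove two preparatory results (Section~\ref{sec:auxpaths}): an \emph{insertion lemma} stating that one may freely insert or delete a loop at an idempotent~$e$ without changing the $S_\EE$-evaluation, provided that the loop evaluates to an idempotent of $eSe$; and a \emph{commutation lemma} stating that, in a subpath that visits~$e$ sufficiently often, an occurrence of a frequent arrow can be shifted past an intermediate segment by conjugating it through a suitable loop at~$e$ and invoking the centrality of group elements of $eSe \in \ZG_p$. Both lemmas rely ultimately on Lemma~\ref{lem:distromega} and Corollary~\ref{cor:zgcommutes} applied inside local monoids.

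With these tools, the main argument in Section~\ref{sec:actualproof} would be a nested induction. The outer induction is on the number of distinct frequent arrows (which agree between $\pi_1$ and $\pi_2$ by $n,p$-congruence), and the inner induction is on the total count of frequent-arrow occurrences. The base case, where no arrow is frequent, is trivial: $\pi_1$ and $\pi_2$ coincide with their common rare subword, so they are equal as words of~$B^*$, hence $S_\EE$-equal. In the induction step, I would select a frequent arrow $b$, apply the insertion and commutation lemmas to migrate all occurrences of~$b$ in both paths to a single canonical position producing a block $b^k$, and then use the fact that $b^k$ sits in a local monoid in $\ZG_p$ to argue that its evaluation depends only on $k \bmod p$; these residues agree between $\pi_1$ and $\pi_2$ by $n,p$-congruence. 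After canonicalization we are left with two coterminal paths with strictly fewer frequent-arrow occurrences, to which the induction hypothesis applies.

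The main obstacle, which dictates the care needed in choosing $n$, is the migration step: arrows of $S_\EE$ cannot be freely permuted like letters of a free monoid, because a path must remain valid, i.e., successive starting and ending objects must match. To move occurrences of $b = (e_1, x, e_2)$ across arbitrary intervening segments, one must use the repeated-idempotent structure guaranteed by the pigeonhole choice of~$n$ to reroute the path through a loop at~$e_1$ or~$e_2$, and then invoke the $\ZG_p$-equational manipulations in the local monoid. Making this migration procedure uniform in the presence of several interleaved frequent arrows, without disturbing the count modulo~$p$ of the others, is the technical heart of the proof.
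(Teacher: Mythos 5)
You have the right general shape (choose a threshold, prove insertion/commutation lemmas from the $\ZG$ equations inside local monoids, then a nested induction), but two of your key ingredients do not work as stated. First, the threshold: a single $n$ polynomial in $\card{S}$, chosen only so that pigeonhole over idempotents applies, is not what the argument needs. The property actually required is a \emph{distant} rare-frequent threshold, i.e.\ the \emph{total} number of rare-arrow occurrences must be at most $\frac{n'}{\card{S}}-1$; no fixed threshold has this property for all paths (for any $n$ there are paths whose rare occurrences total close to~$n$). The paper therefore fixes a huge $n$ and, for each pair of coterminal paths, extracts some $n'\le n$ that is $\card{S}$-distant for both (Claim~\ref{clm:choosenspec}), transferring $n,p$-equivalence to $n',p'$-equivalence via Claim~\ref{clm:multiple}. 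The distant threshold is what gives both the pumping of a frequent arrow inside a rare-free factor (Claim~\ref{clm:pumpspec}) and the fact that the frequent arrows form a union of SCCs (Claim~\ref{clm:usccs}); your migration step implicitly needs return paths and loops made only of frequent arrows, and nothing in your setup guarantees they exist. Second, your insertion lemma is false as stated: one cannot insert an arbitrary loop evaluating to an idempotent $f\in eSe$ without changing the evaluation, since $s_1 f s_2$ need not equal $s_1 s_2$. The correct statement (Lemma~\ref{lem:spawnloop}) only permits loops of \emph{frequent} arrows raised to the power $\omega$, and its proof is a genuine pumping argument using the distant threshold, not a direct consequence of Lemma~\ref{lem:distromega}.

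The induction itself also has a gap. Your scheme (outer on distinct frequent arrows, inner on frequent occurrences) never addresses rare arrows interleaved with frequent ones: the base case handles the all-rare situation, but the step only speaks of moving frequent arrows, and you yourself concede that migrating them past intervening segments without disturbing the other counts is the unproven heart. Moreover, canonicalizing a single frequent arrow $b=(e_1,x,e_2)$ into a block $b^k$ is not even well-typed unless $e_1=e_2$, and when $b$ is interleaved with other frequent arrows its occurrences cannot be collected in isolation. The paper resolves precisely these points: the outer induction is on the number of rare-arrow occurrences, splitting at the first rare arrow and using the prefix substitution lemma (Lemma~\ref{lem:prefixfrequent}) or SCC-disjointness to detach the frequent prefix (cases I.1/I.2); and in the rare-free case the inner induction on distinct frequent arrows is driven by an ear decomposition (Lemma~\ref{lem:graph}), which isolates a simple cycle or simple path whose intermediate vertices carry no other edges, so that its occurrences are forced to appear as contiguous factors which can be regrouped via Corollary~\ref{cor:zgcommutes} and evaluated modulo the period, with Lemma~\ref{lem:spawnloop} re-boosting frequencies so the distant threshold survives the inductive call. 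Without the distant threshold, the union-of-SCCs property, the correctly restricted loop-insertion and prefix-substitution lemmas, and the ear decomposition, your plan does not go through.
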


This result then implies, by our rephrasing of Straubing's result
(Corollary~\ref{cor:straubing}), that $S$ is in $\ZGpD$. So in the rest of this paper we prove Claim~\ref{clm:hard}.
The proof is structured in three sections. First, in Section~\ref{sec:choosen},
we carefully choose the threshold~$n$ of the $n,p$-congruence to be ``large
enough'' so that we can enforce a gap between the number of occurrences of the
\emph{rare} and \emph{frequent} arrows according to the $n,p$-congruence, also
ensuring a strong connectedness property on the category of idempotents.
Second, in Section~\ref{sec:auxpaths}, we show auxiliary results about paths in
the category of idempotents, using the $\ZG$ equation and the properties of the
threshold.
Third, in Section~\ref{sec:actualproof}, we conclude the proof, doing first an
outer induction on the total number of rare arrow occurrences, and then an inner
induction on the number of frequent arrows, using 
a so-called \emph{ear decomposition} of the category.

\section{Choosing the Threshold of the Congruence}
\label{sec:choosen}

In this section, we explain how to choose the threshold~$n$
to prove Claim~\ref{clm:hard}. Our threshold will
enforce a large enough ``gap'' between the total number of occurrences of the
rare letters and the number of occurrences of the frequent letters. This is
called a \emph{distant} threshold and will be useful in the rest of the
argument.

This section is split into two subsections.
In the first subsection, we formally define the notion of a
distant rare-frequent threshold, and show how such a threshold can be used to
find factors with no rare letters and a large number of copies of a frequent
letter.
We further explain that we can indeed
find a sufficiently distant rare-frequent threshold. This is a 
generic result on words and alphabets that does not depend on~$\ZG$ or
on the category of idempotents.
In the second subsection, we instantiate this result for
paths in the category of idempotents, and
explain how a distant rare-frequent threshold incidentally ensures
a strong connectedness property on the category. This last subsection
focuses on the category of idempotents, but is also generic in the sense that it
does not use the $\ZG$ equation. Hence, all of the present section applies to arbitrary
semigroups.

\subsection{Finding sufficiently distant thresholds}

We now give our formal definition of a \emph{distant} rare-frequent threshold:
\begin{defi}
\label{def:distant}
        For $\Sigma$ an alphabet, $u \in \Sigma^*$, and $m >0$,
        we say that an integer $n > 0$ is an \emph{$m$-distant rare-frequent threshold} for~$u$ if,
        letting $\Sigma_r \colonequals \{a \in \Sigma \mid \card{u}_a \leq n\}$
        be the rare alphabet of~$u$ for~$n$,
        then the \emph{total} number of occurrences of rare letters in~$u$ is less
        than a proportion $1/m$ of the threshold~$n$, minus one. Formally:
        $\sum_{a \in \Sigma_r} \card{u}_a \leq \frac{n}{m} - 1$.
\end{defi}

Note that if $n > 0$ is an $m$-distant rare-frequent threshold for~$u$, then it
is in particular an $m'$-distant rare-frequent threshold for any $0 < m' \leq m$.

If a word has a distant rare-frequent threshold, then any frequent letter can be
found in sufficiently many occurrences in some factor containing no rare letter.
This will be useful in pumping arguments, and is the motivation for the definition.
Formally:

\begin{lem}
  \label{lem:pump}
  For any alphabet $\Sigma$ and $m>0$, if a word $u \in \Sigma^*$ has an
  $m$-distant rare-frequent threshold, then for any frequent letter $a$, the
  word $u$ has a factor containing no rare letter and containing
  at least $m+1$ occurrences of~$a$.
\end{lem}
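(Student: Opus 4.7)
The plan is to use a pigeonhole argument on the factors of $u$ obtained by cutting at each rare-letter occurrence. Let $r \colonequals \sum_{b \in \Sigma_r} \card{u}_b$ denote the total number of rare-letter occurrences in $u$. These $r$ positions split $u$ into at most $r+1$ (possibly empty) contiguous factors, each of which by construction contains no rare letter. Since $a$ is frequent we have $\card{u}_a \geq n+1$, and all $n+1$ of these occurrences of $a$ must fall into one of the $r+1$ rare-free factors.

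By the pigeonhole principle, at least one rare-free factor contains at least $\lceil (n+1)/(r+1) \rceil$ occurrences of $a$. The distant threshold hypothesis gives $r \leq n/m - 1$, hence $r+1 \leq n/m$, and therefore
\[
\frac{n+1}{r+1} \;\geq\; \frac{n+1}{n/m} \;=\; m + \frac{m}{n} \;>\; m.
\]
Since the number of $a$'s in a factor is an integer strictly greater than $m$, it is at least $m+1$, which gives the desired factor.

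There is no real obstacle here; the only thing to be careful about is the off-by-one between ``strictly greater than $n$'' in the definition of frequent and ``$\leq n/m - 1$'' in the definition of distant, which is exactly what produces the strict inequality $n+1 > m(r+1)$ needed to conclude an integer count of at least $m+1$ rather than just $m$. This is precisely why the definition subtracts one in $n/m - 1$.
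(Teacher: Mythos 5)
Your proof is correct and follows essentially the same route as the paper's: cut $u$ at the $r$ rare-letter occurrences to get $r+1$ rare-free factors, then pigeonhole the $>n$ occurrences of $a$ using $r+1 \leq n/m$. The only cosmetic difference is that you phrase the count via $\lceil (n+1)/(r+1) \rceil > m$ while the paper argues directly from $\card{u}_a > (r+1)m$; both yield the same bound of at least $m+1$ occurrences in some factor.
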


\begin{proof}
  As $a$ is a frequent letter, by definition, its number $\card{u}_a$ of
  occurrences in~$u$ is such that $\card{u}_a >n$. Let $r
  \colonequals \sum_{a \in \Sigma_r} \card{u}_a$ be the total number of occurrences of the
  rare letters. The definition of an $m$-distant rare-frequent threshold ensures
  that $r \leq \frac{n}{m}-1$. Thus, $r+1 \leq \frac{n}{m}$. Multiplying by~$m$,
  we obtain $(r+1)m \leq n$. Thus, $\card{u}_a > (r+1)m$. As there are $r$ rare
  letters in total, there are $r+1$ subwords between them containing no rare
  letter, so the inequality implies that one of them contains $>m$ occurrences
  of~$a$, i.e., $\geq m+1$ occurrences of~$a$. This concludes the proof.
\end{proof}

We now intuitively state the existence, for any~$m > 0$,
of an $m$-distant rare-frequent threshold that can be used for any word~$u$.
The precise claim is more complicated to phrase,
as we cannot pick one $n$ which can serve as a $m$-distant rare-frequent threshold
for any word~$u$. Indeed, for any choice of~$n$,
there will always be words~$u$ where the number of rare letter
occurrences happens to be close to~$n$. However, we can pick a large enough $n$
such that, given any word~$u$, we can pick some $0 < n' \leq n$ as an
$m$-distant rare-frequent threshold for~$u$. We will state the claim more
generally about picking a threshold for a pair of words $u_1$ and $u_2$.

Here is the formal claim. We repeat that this result is a general claim about
words, which is not specific to~$\ZG$.

\begin{lem}
\label{lem:choosengen}
        For any alphabet $\Sigma$ and $m>0$, there exists an integer $n>0$ ensuring the
        following: for any words $u_1, u_2 \in \Sigma^*$, there exists an
        integer~$n'$ with $0 < n' \leq
        n$ which is an $m$-distant rare-frequent threshold for~$u_1$ and for~$u_2$.
\end{lem}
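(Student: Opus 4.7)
The plan is to use a pigeonhole argument over a rapidly-growing sequence of candidate thresholds. Concretely, I would define $n_0 := 1$ and $n_{k+1} := m\card{\Sigma}(n_k + 1)$ for $k \geq 0$, take $K := 2\card{\Sigma}+1$, and return $n := n_K$.

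Given arbitrary words $u_1, u_2 \in \Sigma^*$, for each $j \in \{1,2\}$ I would define the ``bad set'' $B_j \subseteq \{0, \ldots, K-1\}$ as the set of indices $k$ such that some letter $a \in \Sigma$ has $n_k < \card{u_j}_a \leq n_{k+1}$. Since every letter of $\Sigma$ contributes to at most one such index (the unique interval $(n_k, n_{k+1}]$ containing its count, if any), we get $\card{B_j} \leq \card{\Sigma}$, hence $\card{B_1 \cup B_2} \leq 2\card{\Sigma} < K$. By the pigeonhole principle there is an index $k^* \in \{0, \ldots, K-1\} \setminus (B_1 \cup B_2)$, and I would set $n' := n_{k^*+1}$; clearly $0 < n' \leq n$.

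To check that $n'$ is indeed an $m$-distant rare-frequent threshold for both $u_1$ and $u_2$: the condition $k^* \notin B_j$ means no letter of $u_j$ has count in $(n_{k^*}, n_{k^*+1}]$, so the rare alphabet of $u_j$ at threshold $n'$ coincides with the set of letters of $u_j$ having count $\leq n_{k^*}$. There are at most $\card{\Sigma}$ such letters, each contributing at most $n_{k^*}$ occurrences, so the total number of rare letter occurrences in $u_j$ is at most $\card{\Sigma} \cdot n_{k^*}$. The growth rate of $(n_k)$ was chosen precisely so that $\card{\Sigma}\cdot n_{k^*} \leq n_{k^*+1}/m - 1 = n'/m - 1$, which is the required inequality from Definition~\ref{def:distant}.

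There is no genuinely hard step here; the whole argument is a careful pigeonhole. The only subtlety, which drives the choice of parameters, is calibrating the growth rate: $n_{k+1}$ must be large enough that even when \emph{all} $\card{\Sigma}$ letters simultaneously achieve the maximum count $n_{k^*}$ allowed by being ``below'' the gap, their total contribution still fits inside the budget $n_{k+1}/m - 1$. A multiplicative factor of $m\card{\Sigma}$ between consecutive thresholds is essentially forced by this bookkeeping, and the length $K = 2\card{\Sigma}+1$ of the sequence is forced by the union-bound on bad intervals across the two words.
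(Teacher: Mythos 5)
Your proof is correct and follows essentially the same strategy as the paper's: both construct a geometrically growing ladder of candidate thresholds and use a pigeonhole argument to find one below which no letter count of $u_1$ or $u_2$ falls, so that the total rare count is small relative to the chosen threshold. The differences are only cosmetic refinements: you pigeonhole on which interval each letter count lands in (needing $2\card{\Sigma}+1$ candidates), whereas the paper pigeonholes on the rare set itself via an abstract claim about $d$-tuples (needing $2^{d}+1$ candidates with $d=2\card{\Sigma}$), and you absorb the ``$-1$'' slack into the recursion $n_{k+1}=m\card{\Sigma}(n_k+1)$ instead of the paper's auxiliary parameter $m'=m(m+1)$.
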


Let us now prove Lemma~\ref{lem:choosengen} in the rest of this subsection.
We first show an abstract result capturing the essence of the underlying
pigeonhole principle argument:

\begin{clm}
\label{clm:gap}
  For any $d > 0$ and $m' > 0$,
  there exists an integer $n \geq m'$ ensuring the following:
  for any $d$-tuple $T$ of integers,
  there exists an integer~$n'$ with $m' \leq n' \leq n$ such that
  $\sum_{i \in F} T_i \leq \frac{n'}{m'}$,
  where $F =\{i \mid T_i \leq n'\}$.
\end{clm}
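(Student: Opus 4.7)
The plan is to prove Claim~\ref{clm:gap} by a pigeonhole argument on a geometrically growing sequence of candidate thresholds. I would set $N_0 \colonequals m'$ and $N_{k+1} \colonequals dm' \cdot N_k$, so that $N_k = m'(dm')^k$, and take the global bound $n \colonequals N_{d+1} = m'(dm')^{d+1}$. The target threshold $n'$ will then be picked from the intermediate values $N_1, \ldots, N_{d+1}$, depending on the tuple $T$.

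Given an arbitrary $d$-tuple $T$, I would associate to each candidate $N_k$ the set $F_k \colonequals \{i \mid T_i \leq N_k\}$. Because the $N_k$ are increasing, these sets form a nondecreasing chain $F_0 \subseteq F_1 \subseteq \cdots \subseteq F_{d+1}$ of subsets of $\{1, \ldots, d\}$. A chain in this poset has at most $d+1$ distinct values, so among the $d+2$ terms of our chain two consecutive ones must coincide by monotonicity; that is, $F_k = F_{k+1}$ for some $k$. I would then take $n' \colonequals N_{k+1}$.

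For this choice, every $i \in F_{k+1} = F_k$ satisfies $T_i \leq N_k$, and since $|F_k| \leq d$ we obtain $\sum_{i \in F_{k+1}} T_i \leq d \cdot N_k$. The recurrence $N_{k+1} = dm' \cdot N_k$ rearranges to $d \cdot N_k = N_{k+1}/m'$, which yields the required inequality $\sum_{i \in F} T_i \leq n'/m'$. The range condition $m' \leq n' \leq n$ is immediate from $n' \geq N_1 = dm'^2 \geq m'$ and $n' \leq N_{d+1} = n$.

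I do not expect a genuine obstacle here. The one subtlety worth flagging is that the candidate list needs $d+2$ thresholds and not merely $d+1$, because a strictly increasing chain of subsets of $\{1,\ldots,d\}$ can already exhibit $d+1$ distinct values, so one extra slot is exactly what forces a repetition by pigeonhole; the growth factor $dm'$ is then chosen precisely to absorb the worst case in which all $d$ coordinates of $T$ happen to sit just at the previous threshold $N_k$.
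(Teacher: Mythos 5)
Your proof is correct and follows essentially the same strategy as the paper's: a geometrically growing sequence of candidate thresholds, a pigeonhole argument showing the rare set is the same for two of the candidates, and then the bound $\sum_{i \in F} T_i \leq d \cdot (\text{smaller candidate}) \leq n'/m'$. The only difference is quantitative: you exploit the fact that the sets $F_k$ form a nested chain in $\{1,\ldots,d\}$, so $d+2$ candidates force two \emph{consecutive} equal ones and $n = m'(dm')^{d+1}$ suffices, whereas the paper pigeonholes over all $2^d$ possible rare sets using the $2^d+1$ candidates $(m'd), (m'd)^2, \ldots, (m'd)^{2^d+1}$ (with a not necessarily consecutive repetition, handled by $i<j$), yielding a doubly exponential bound — since the claim only asserts the existence of some $n$, both choices work.
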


Intuitively, the value $m'$ will be computed from $m$ to
ensure the ``minus one'' gap in Definition~\ref{def:distant},
the ``dimension'' $d$ will be the cardinality of the
alphabet~$\Sigma$ (multiplied by 2 as we consider two words),
the integer $n$ is the threshold that we will choose, and $n'$ is the value that
we wish to obtain.
Let us prove Claim~\ref{clm:gap}:

\begin{proof}[Proof of Claim~\ref{clm:gap}]
	Let us take $n \colonequals (m'd)^{d+2}$, which ensures $n \geq
        m'$.
 The candidate values of~$n'$ that we will consider are intuitively the
  following:
  $m'd$, $(m'd)^2,\ldots,(m'd)^{d+2}$.
 Now take any $d$-tuple $T$.
  For any $1 \leq i \leq d+2$, let $R_i = \{i' \mid T_{i'} \leq (m'd)^i \}$ be
  the coordinates where~$T$ has a value $\leq (m'd)^i$. By definition, we have
  $\emptyset \subseteq R_1 \subseteq \cdots \subseteq R_{d+2} \subseteq \{1,
  \ldots, d\}$. 
 Applying the pigeonhole principle on the cardinalities of these sets, there
  are $1 \leq i < j \leq d+2$ such that $R_i = R_j$.

  Let us set $n' \colonequals (m'd)^j$. By
  construction, we have $n' \geq m'$. Now,
  consider the sum $\sum_{i' \in R_j} T_{i'}$.
  As $R_j = R_i$, we know that for every $i' \in R_j$, we have $T_{i'} \leq
  (m'd)^i $. Thus, the sum is at most
  $d$ times this value because $T$ is a $d$-tuple. Formally,
  $\sum_{i' \in R_j} T_{i'}$ is at most $d
  (m'd)^i $, hence it is $\leq \frac{(m'd)^{i+1} }{m'}$, so it is $\leq
  \frac{n' }{m'}$ by definition of 
  $n' = (m'd)^j$ because $i < j$. This concludes the proof.
\end{proof}

With Claim~\ref{clm:gap}, it is now easy to show Lemma~\ref{lem:choosengen}:

\begin{proof}[Proof of Lemma~\ref{lem:choosengen}]
  Let $n$ be the value given by Claim~\ref{clm:gap} when taking $d \colonequals
  2\card{\Sigma}$ and $m' \colonequals (m+1)m$.

  Let us consider any pair of words $u_1, u_2 \in \Sigma^*$.
  Let $T$ be the $d$-tuple of the letter occurrences of~$u_1$, followed by those
  of~$u_2$.
  The statement of Claim~\ref{clm:gap} ensures that there exists an integer~$n'$
  with $n' \geq m' > m > 0$ such that, when using $n'$ as the
  rare-frequent threshold, the total number of rare letters
  in~$u_1$ plus in~$u_2$ is
  $\leq \frac{n'}{m'}$.

  Now, let us show that $\frac{n'}{m'} \leq \frac{n'}{m}-1$. To
  this end, as $m' = m (m+1)$, first note that $\frac{n'}{m'} \leq \frac{n'}{m+1}$. To
  show now that $\frac{n'}{m+1} \leq \frac{n'}{m}-1$, let
  us evaluate $\frac{n'}{m}-1 - \frac{n'}{m+1}$ and check that it is nonnegative.
  This difference evaluates to $\frac{n' - m(m+1)}{m(m+1)}$. Now, we have $n'
  \geq m'$, so $n' \geq m(m+1)$, which is $\geq 0$. So indeed $\frac{n'}{m'}
  \leq \frac{n'}{m}-1$.

  Combining the conclusions of the two previous paragraphs, we obtain that
  the total number of rare letters $u_1$ plus in $u_2$ is $\leq \frac{n'}{m}-1$.
  So the same is true of the rare letters in~$u_1$, and of the rare letters
  in~$u_2$. By contrast, the frequent letters in~$u_1$ occur $> n'$ times by
  definition, and
  the same is true of the frequent letters in~$u_2$.
  Hence, by
  Definition~\ref{def:distant}, the value $n'$ is an $m$-distant rare-frequent
  threshold for~$u_1$ and for~$u_2$,
  concluding the proof.
\end{proof}

Thus, we have shown that, for any alphabet size and desired distance $m$, we can
pick a value $n$ ensuring that for any two words we can use some $0 < n' \leq n$
as an $m$-distant rare-frequent threshold for them.

\subsection{Choice of threshold and resulting properties}
In the rest of this section, we instantiate the generic argument of the previous
section to the choice of an arbitrary semigroup $S$ and its category of idempotents
$S_\EE$: we claim the existence of a suitably distant rare-frequent threshold,
and observe that it additionally ensures the strong connectedness of the
category. However, the argument is still generic in the sense
that it applies to an arbitrary semigroup $S$, even if it is not in~$\ZG$.

Let us first rephrase Lemma~\ref{lem:choosengen} to this setting by picking as
alphabet the arrows $B$ of the category of idempotents $S_\EE$ of~$S$ and by
choosing $m \colonequals \card{S}$. We immediately obtain:

\begin{clm}
\label{clm:choosenspec}
        Let $S$ be a semigroup and let
        $S_\EE$ be its category of idempotents.
        There exists an integer $n>0$ ensuring the following:
        for any paths $u_1, u_2$ in~$S_\EE$, there is $0 < n' \leq n$ which is a
        $\card{S}$-distant rare-frequent threshold for~$u_1$ and for~$u_2$.
\end{clm}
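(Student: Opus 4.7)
The plan is to obtain Claim~\ref{clm:choosenspec} as an immediate specialization of Lemma~\ref{lem:choosengen}, which has already been established as a purely combinatorial statement about words over an arbitrary finite alphabet. The only task is to choose the right parameters.

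First, I would instantiate the alphabet $\Sigma$ of Lemma~\ref{lem:choosengen} as the set $B$ of arrows of the category of idempotents $S_\EE$. This is legitimate because $S$ is finite, so the number of idempotents of~$S$ is finite and each arrow $(e_1, x, e_2)$ is determined by a triple in $\EE(S) \times S \times \EE(S)$, hence $B$ is a finite alphabet. Second, I would instantiate the distance parameter as $m \colonequals \card{S}$, which is a positive integer since we assume $S$ is nonempty.

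Feeding these choices into Lemma~\ref{lem:choosengen} produces an integer $n > 0$ such that for every pair of words $u_1, u_2 \in B^*$ there is some $0 < n' \leq n$ that is an $\card{S}$-distant rare-frequent threshold for~$u_1$ and~$u_2$. Since every path in~$S_\EE$ is in particular a word over~$B$, this specializes to the statement about paths required by the claim.

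I do not foresee any obstacle: the claim is the verbatim translation of Lemma~\ref{lem:choosengen} into the vocabulary of the category of idempotents, with $m$ fixed to $\card{S}$. In particular, no property of~$\ZG$, no use of the $\ZG$ equation, and no structural property of $S_\EE$ beyond the finiteness of its arrow set are needed; everything has been packaged into Lemma~\ref{lem:choosengen} through Claim~\ref{clm:gap}.
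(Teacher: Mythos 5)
Your proposal is correct and matches the paper exactly: the paper also obtains this claim as an immediate specialization of Lemma~\ref{lem:choosengen}, taking the alphabet to be the arrow set~$B$ of~$S_\EE$ and setting $m \colonequals \card{S}$. Nothing further is needed.
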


The rephrasing of Lemma~\ref{lem:pump} is:

\begin{clm}
  \label{clm:pumpspec}
  If a path $u$ in~$S_\EE$ has an $\card{S}$-distant rare-frequent threshold,
  then for any frequent arrow $a$, the
  path $u$ has a factor containing no rare arrow and containing
  at least $m+1$ occurrences of~$a$.
\end{clm}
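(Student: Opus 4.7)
The plan is to obtain Claim~\ref{clm:pumpspec} as a direct specialization of Lemma~\ref{lem:pump}. I would take the alphabet in Lemma~\ref{lem:pump} to be the finite set~$B$ of arrows of the category of idempotents~$S_\EE$, set $m \colonequals \card{S}$, and regard the path~$u$ simply as a word of~$B^*$. The hypothesis of Claim~\ref{clm:pumpspec} then coincides verbatim with the hypothesis of Lemma~\ref{lem:pump}: the notion of an $\card{S}$-distant rare-frequent threshold is defined purely combinatorially on words (Definition~\ref{def:distant}), with no reference to path validity or to the category structure, so no translation is needed.

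Applying Lemma~\ref{lem:pump} then yields, for every frequent arrow~$a$, a factor of~$u$ (as a word over~$B$) that contains no rare arrow and contains at least $\card{S}+1$ occurrences of~$a$. The only remaining observation is that this factor, viewed back inside~$S_\EE$, is itself a valid path: validity of a sequence of arrows is a local condition (each arrow's ending object matches the next arrow's starting object), so any contiguous factor of a valid path is again a valid path. I do not anticipate any real obstacle here; the work has already been done in Section~\ref{sec:choosen}'s generic statements, and this step is the trivial translation of Lemma~\ref{lem:pump} to the category-of-idempotents vocabulary used in the subsequent sections.
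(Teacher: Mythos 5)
Your proposal is correct and matches the paper exactly: the paper gives no separate proof of Claim~\ref{clm:pumpspec}, presenting it as the direct instantiation of Lemma~\ref{lem:pump} with alphabet $B$ and $m \colonequals \card{S}$ (so the ``$m+1$'' in the statement is $\card{S}+1$), which is precisely your argument. Your extra remark that a contiguous factor of a valid path is again a valid path is a harmless, correct observation that the paper leaves implicit.
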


We now close the section by observing that having an $m$-distant rare-frequent
threshold for a path~$w$, indeed having simply an $1$-distant rare-frequent
threshold for~$w$, enforces a strong connectedness property on the category.
Specifically, the set of frequent arrows in this path for this threshold must form a
so-called \emph{union of strongly connected components (SCCs)}:

\begin{defi}
\label{def:usccs}
	Given $S_\EE$ and a subset $B'$ of its set of arrows $B$, we say that $B'$ \emph{is a union of SCCs} if,
letting $G$ be the directed graph on the objects of~$S_\EE$ formed of the arrows
of~$B'$, then all connected components of~$G$ are strongly connected.
\end{defi}

\begin{clm}
	\label{clm:usccs}
	Fix $S$ and $S_\EE$ and $B$, let $w \in B^*$ be a path of~$S_\EE$, and let
        $n' > 0$ be a 1-distant rare-frequent threshold of~$w$.
Then the set of frequent arrows of~$w$ for~$n'$ is a union of SCCs.
\end{clm}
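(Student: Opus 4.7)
The plan is to exploit the one crucial arithmetic consequence of Definition~\ref{def:distant} in the case $m=1$: the total number of rare-arrow occurrences in~$w$ is at most $n'-1$, hence strictly less than~$n'$. Combined with the fact that each frequent arrow appears more than $n'$ times in~$w$, this imbalance will drive a straightforward pigeonhole argument, with the graph $G$ of Definition~\ref{def:usccs} built out of the frequent arrows of~$w$ for~$n'$.

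First I would establish the local statement that for every frequent arrow $a = (e_1, x, e_2)$ of~$w$, the two objects $e_1$ and $e_2$ sit in the same strongly connected component of~$G$. Setting $k \colonequals \card{w}_a > n'$, the $k$ occurrences of $a$ in~$w$ delimit $k-1 \geq n'$ internal gaps, each of which is a factor of~$w$ starting at~$e_2$ (just after one copy of~$a$) and ending at~$e_1$ (just before the next copy of~$a$), because $w$ is a valid path in~$S_\EE$. Since at most $n'-1 < n'$ rare-arrow occurrences appear in $w$ overall, by pigeonhole at least one of these gaps contains no rare arrow and thus provides a directed path in~$G$ from~$e_2$ to~$e_1$. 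Adding the edge $a$ itself, which goes from~$e_1$ to~$e_2$ in~$G$, closes a directed cycle containing both endpoints, so $e_1$ and $e_2$ belong to a common SCC of~$G$.

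Second I would lift this to entire connected components. Fixing a weakly connected component $C$ of~$G$ and two of its objects $o$ and $o'$, the definition of weak connectivity gives a sequence of frequent arrows connecting them in the underlying undirected graph of~$G$. Applying the local statement to each such arrow yields a chain of ``shares an SCC'' equalities, and transitivity of ``belongs to the same SCC'' then shows that $o$ and $o'$ lie in a common SCC. So $C$ is itself a single SCC, proving the claim.

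I do not anticipate a genuine obstacle here: the only non-bookkeeping ingredient is the pigeonhole between the $\geq n'$ inter-occurrence gaps of a frequent arrow and the $\leq n'-1$ rare-arrow occurrences of~$w$, which is exactly what Definition~\ref{def:distant} was calibrated to supply. The main subtlety worth being careful about is that it is the \emph{internal} gaps between successive occurrences of $a$ that matter (hence $k-1$ of them, each genuinely starting at $e_2$ and ending at $e_1$), so that the strict inequality $\card{w}_a > n'$ furnishes just enough gaps to beat the $n'-1$ rare occurrences.
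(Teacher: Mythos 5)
Your proposal is correct and rests on the same counting argument as the paper: a frequent arrow's $>n'$ occurrences yield at least $n'$ return segments in~$w$, which the bound of $\leq n'-1$ total rare occurrences cannot all hit, producing a frequent-only return path. You phrase it directly (rare-free gap for every frequent arrow, then transitivity over components) where the paper argues by contradiction from an edge with no return path, but this is the same proof in contrapositive form.
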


\begin{proof}
	Consider $G$ the directed graph  of Definition~\ref{def:usccs}.
	Let us assume by way of contradiction that $G$ has a connected component which is not strongly connected. This means that there exists an edge $(u, v)$ of~$G$ such that there is no path from~$u$ to~$v$ in~$G$. Consider any frequent arrow $a$ in~$S_\EE$ achieving the edge $(u, v)$ of~$G$.

        As $a$ is frequent in~$w$, we know that $a$ occurs strictly more than
        $n'$ times in~$w$, hence $w$ contains at least $n'$ \emph{return paths},
        i.e., paths from the ending object~$v$ of~$a$ back to the starting object~$u$ of~$a$. As there is no path from~$v$ to~$u$ in~$G$, each one of these paths must contain an arrow of~$B$ which is rare in~$w$.

  Hence, the total number of rare arrows in~$w$ is at least~$n'$. But the 1-distant rare-frequent threshold condition imposes that the total number of rare arrow occurrences in~$w$ is $\leq n'-1$. We have thus reached a contradiction.
\end{proof}

\section{The Loop Insertion and Prefix Substitution Lemmas}
\label{sec:auxpaths}

We now show two auxiliary results on the category of idempotents of $\LZG$
semigroups, to be used in the sequel. 
The first result is the \emph{loop insertion lemma}: it allows us to insert
any loop of frequent arrows to the power~$\omega$ without affecting equivalence.
The second is 
the \emph{prefix substitution lemma}: it allows us to replace a prefix of
frequent arrows by another, without affecting equivalence, up to inserting a loop later in the path.

The results shown in this section hold for semigroups in $\LZG$, hence for
those in $\LZGp$ for any $p>0$. Thus, let us fix a semigroup $S$ in $\LZG$.
Recall that $S_\EE$ denotes the category of idempotents of~$S$,
and denote by $B$ the set of arrows of~$S_\EE$.

Remember that, by definition of the category of idempotents~$S_\EE$, for any idempotent
$e$ of~$S$, the set of loops with starting and ending object $e$ under the
product law of the category forms a monoid which is isomorphic to the local
monoid $eSe$ of~$S$. 
As we know that $S$ is in~$\LZG$, all its local monoids are
in~$\ZG$. Hence the $\ZG$ equation immediately applies to loops in the 
category of idempotents, which we will often use in the results of this section:
\begin{clm}
\label{clm:c0}
	Let $x$ and $y$ be two coterminal loops of~$S_\EE$, let $k \in \ZZ$, and
        let $\omega$ be an idempotent power of~$S$. We have:
$x^{\omega+k} y \equiv y x^{\omega+k}$.
\end{clm}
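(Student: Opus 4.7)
The plan is to reduce the claim directly to applying the $\ZG$ equation inside a single local monoid. First I would unpack what ``coterminal loops'' means here: each of $x$ and $y$ has equal starting and ending objects, and being coterminal forces those objects to coincide at a single common idempotent $e$. Consequently, $x$ and $y$ both belong to $\CC_{e,e}$, which under the composition law of $S_\EE$ is (isomorphic to) the local monoid $eSe$ of $S$, as recalled in the paragraph preceding the claim.

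Next, I would invoke the hypothesis $S \in \LZG$ to conclude that the local monoid $eSe$ lies in $\ZG$. Moreover, the idempotent power $\omega$ of $S$ still serves as an idempotent power for $eSe$: for every element $z \in eSe \subseteq S$, the element $z^\omega$ is the idempotent power of $z$. Hence the defining equation of $\ZG$ holds in $eSe$ at the power $\omega$, and by Claim~\ref{clm:eqnk} its generalization $x^{\omega+k} y = y x^{\omega+k}$ is already available inside $eSe$ for every $k \in \ZZ$.

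Finally, I would transport this equality from $eSe$ back to the category via the isomorphism, which exactly identifies equality of monoid products in $eSe$ with $S_\EE$-equality of the corresponding loop compositions; this yields $x^{\omega+k} y \equiv y x^{\omega+k}$ in $S_\EE$. There is no genuine obstacle: the $\LZG$ hypothesis is precisely tailored so that any $\ZG$-identity applies verbatim to coterminal loops at a fixed object of the category of idempotents, and Claim~\ref{clm:eqnk} takes care of extending from $k=1$ to arbitrary $k \in \ZZ$.
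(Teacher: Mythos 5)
Your proposal is correct and matches the paper's own justification: the paper proves this claim exactly by observing that coterminal loops live in a single $\CC_{e,e}$, which is isomorphic to the local monoid $eSe$, which is in $\ZG$ since $S\in\LZG$, so the $\ZG$ equation (extended to arbitrary $k$ via Claim~\ref{clm:eqnk}) transfers to $S_\EE$-equality of loops. Nothing further is needed.
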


\subsection{Loop insertion lemma}
The loop insertion lemma allows us, when we have a sufficiently distant rare-frequent
threshold~$n'$,
to insert any arbitrary loop raised to the power $\omega$ without changing the
category element to which a path evaluates.
We also show that this change does not affect $n',\omega$-equivalence:
this will suffice to preserve $n',p$-equivalence (where $p$ is the period
of~$S$),
which will be useful later (we come back to this at the beginning of
Section~\ref{sec:actualproof}).
Formally:

\begin{lem}[Loop insertion lemma]
    \label{lem:spawnloop}
    Let $\pi$ be a path, and assume that $n'$ is an $\card{S}$-distant rare-frequent threshold
    for~$\pi$.
    Let $\pi = rt$ be a decomposition of~$\pi$ (with $r$ or $t$ possibly empty),
    let $o$ be the object between~$r$ and~$t$ (i.e., the ending object
    of~$r$, or the starting object of~$t$ if~$r$ is empty),
    and let $\pi'$ be a loop on~$o$ that only uses frequent arrows.
  Then $\pi \equiv r (\pi')^{\omega} t$.
  (Note that these two paths are also $n',\omega$-equivalent by construction.)
\end{lem}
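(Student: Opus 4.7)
My plan is to locate inside $\pi$ a loop $\sigma$ on $o$ (using only frequent arrows) that enjoys a right-identity property, and then to apply the $\ZG$ equations in the local monoid $oSo$ (which lies in $\ZG$ since $S \in \LZG$) in order to justify the insertion of $(\pi')^\omega$. If $\pi'$ is empty then $(\pi')^\omega$ is the identity loop on $o$ and the equivalence is immediate. Otherwise, let $b$ denote the first arrow of $\pi'$, from $o$ to some object $o_1$. Since $b$ is frequent in $\pi$, Claim~\ref{clm:pumpspec} yields a factor $f$ of $\pi$ with no rare arrows and at least $\card{S}+1$ occurrences of $b$, each of which begins at $o$. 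By pigeonhole on the $\card{S}$ possible partial products in $S$ at those visits, two of them, at positions $p_1 < p_2$ in $\pi$, share the same partial product $Q$. The sub-path $\sigma \colonequals \pi^{[p_1,\,p_2-1]}$ is then a loop on $o$ using frequent arrows only, and $Q \cdot \sigma_S = Q$, whence $Q \cdot \sigma_S^\omega = Q$.

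Because $oSo$ lies in $\ZG$, the element $(\pi'_S)^\omega$ is a central idempotent of $oSo$ (Claim~\ref{clm:c0}). A key consequence is that inserting $(\pi'_S)^\omega$ at any visit to $o$ in $\pi$ yields the same element in $S_\EE$: two insertions at visits $v_1 < v_2$ differ by sliding $(\pi'_S)^\omega$ past the intermediate loop $C = \pi^{[v_1,\,v_2-1]}$ on $o$, which is justified by $(\pi'_S)^\omega C_S = C_S (\pi'_S)^\omega$ in $oSo$. Therefore it suffices to show that the insertion at $p_1$ preserves $\pi_S$, i.e., that $Q \cdot (\pi'_S)^\omega \cdot \pi^{[p_1,\,|\pi|]}_S = \pi_S = Q \cdot \pi^{[p_1,\,|\pi|]}_S$; using $Q = Q \cdot \sigma_S^\omega$, this reduces to the absorption identity $\sigma_S^\omega \cdot (\pi'_S)^\omega = \sigma_S^\omega$ in $oSo$.

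The absorption is the main obstacle, and it is where the $\ZG$ equations in $oSo$ do the heavy lifting. The plan is to decompose $\sigma$ around its occurrences of $b$ as $\sigma = (b\nu_1)(b\nu_2) \cdots (b\nu_m)$, where each $\nu_i$ is a frequent-arrow path from $o_1$ back to $o$, so that every $b\nu_i$ is a loop on $o$ inside $oSo$. Iterating Lemma~\ref{lem:distromega} in $oSo \in \ZG$ then gives $\sigma_S^\omega = \prod_i (b\nu_i)_S^\omega$ as a product of commuting central idempotents. Writing $\pi' = b\nu_*$ and applying Lemma~\ref{lem:distromega} once more yields $\sigma_S^\omega \cdot (\pi'_S)^\omega = \prod_i (b\nu_i)_S^\omega \cdot (b\nu_*)_S^\omega$, and the delicate algebraic step is to argue that in the semi-lattice of idempotents of $oSo$ the meet $\prod_i (b\nu_i)_S^\omega$ lies below $(b\nu_*)_S^\omega$, so that the extra factor is absorbed. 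Once this is in place, $\pi \equiv r(\pi')^\omega t$ follows as described; the $n',\omega$-equivalence noted in the parenthetical is automatic, since inserting $(\pi')^\omega$ adds exactly $\omega$ occurrences of each (frequent) arrow of $\pi'$, leaving both the rare subword and the letter counts modulo $\omega$ unchanged.
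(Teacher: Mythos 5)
Your steps 1 and 2 are sound and essentially reproduce the paper's first move (pump on a single frequent arrow via Claim~\ref{clm:pumpspec} plus pigeonhole to get a loop $\sigma$ on $o$ with $Q\sigma_S^\omega = Q$, then slide the central idempotent $(\pi'_S)^\omega$ between visits to $o$ using Claim~\ref{clm:c0}; this is the content of Claim~\ref{clm:prelimb} in the paper). But the reduction you end with — the absorption identity $\sigma_S^\omega\,(\pi'_S)^\omega = \sigma_S^\omega$, i.e.\ that $\prod_i (b\nu_i)_S^\omega$ lies below $(\pi'_S)^\omega$ in the semilattice of idempotents of $oSo$ — is exactly the hard core of the lemma, and the route you sketch for it does not work. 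The loop $\pi'$ is an \emph{arbitrary} loop of frequent arrows on $o$: apart from its first arrow $b$, its arrows need not occur in $\sigma$, nor even in the rare-free factor $f$ you pumped; they may occur in $\pi$ only in distant factors and in a completely different order. Sharing the first arrow $b$ gives no algebraic relation in $\ZG$ between the idempotents $(b\nu_i)_S^\omega$ and $(b\nu_*)_S^\omega$ (note $b$ is not itself a loop, so you cannot even form $b^\omega$ in $oSo$), and the $\ZG$ axioms yield commutation of idempotents but no comparability in the semilattice order; generically $\sigma_S^\omega (\pi'_S)^\omega$ is a strictly smaller idempotent than $\sigma_S^\omega$. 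So the "delicate algebraic step" you flag is not a technical verification but an unproved (and in general false) claim.

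The paper's proof shows why more machinery is needed, and the absorption goes in the opposite direction from yours. One cannot hope that a single pumped loop $\sigma$ dominates $(\pi')^\omega$; instead one builds, by induction along the arrows of an arbitrary frequent path (Claim~\ref{clm:prelim2}), insertable loops of the form $(hg)^\omega$ whose pumping requires inserting a loop \emph{inside} a previously inserted loop and recombining via the $\ZG$ commutations — this is where the occurrences of the other arrows of $\pi'$, scattered elsewhere in $\pi$, get used. Even then each insertion comes with a correction idempotent (Claim~\ref{clm:aux2}), which is finally absorbed by inserting the product $\prod_{x\in X} q_x^\omega$ over \emph{all} idempotents achievable by frequent loops on $o$ (Claim~\ref{clm:auxiliary}); since $(\pi'_S)^\omega$ is itself one of these, it is absorbed by idempotency and commutation, not by a semilattice inequality between unrelated idempotents. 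To repair your proof you would need to supply these two ingredients (the arrow-by-arrow construction of insertable loops covering all of $\pi'$, and the all-idempotents absorption trick); as written, the argument has a genuine gap at its decisive step.
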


  In the rest of this subsection, we show Lemma~\ref{lem:spawnloop}.
  We first rephrase the claim to the following auxiliary result:

  \begin{clm}
    \label{clm:auxiliary}
    Let $\pi$ be a path, and assume that $n'$ is an $\card{S}$-distant rare-frequent
    threshold for~$\pi$.
    Let $\pi = rt$ be a decomposition of~$\pi$ and $o$ be the object between~$r$
    and~$t$.
    Let $X$ be the set of elements of the local monoid on~$o$ that can be
    achieved as some loop $q_x$ of frequent arrows raised to the power $\omega$: formally, $X$ is
    the set of elements $x \in S$ such that there is a loop $q_x$ of frequent
    arrows such that $q_x^{\omega}$ evaluates to $(o, x, o)$, noting that this
    implies that $x$ is idempotent.
    Then letting $q \colonequals \prod_{x \in X} q_x^{\omega}$, we have
    that $\pi \equiv r q t$.
    (Note that these two paths are also $n',\omega$-equivalent by construction.)
  \end{clm}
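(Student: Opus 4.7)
The plan is to prove Claim~\ref{clm:auxiliary} by first establishing a \emph{single-loop insertion} statement and then iterating. The single-loop statement reads: under the hypotheses of the claim, for any loop~$\ell$ on~$o$ consisting only of frequent arrows, $\pi \equiv r \ell^\omega t$. Granted this, I enumerate $X = \{x_1, \ldots, x_k\}$ and apply it $k$ times in sequence, inserting $q_{x_1}^\omega, \ldots, q_{x_k}^\omega$ one by one at the split point. The iteration is sound because each inserted factor is a loop on~$o$ made of frequent arrows, so after each step both the split point stays at~$o$ and $n'$ remains a $\card{S}$-distant rare-frequent threshold (rare-arrow counts are unchanged, frequent-arrow counts can only grow). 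The case $X = \emptyset$ is immediate as then $q$ is the empty product and $rqt$ is literally~$\pi$.

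To prove the single-loop insertion statement, fix a frequent loop~$\ell$ on~$o$ and let $e \colonequals (o, \ell^\omega, o)$ be the associated idempotent of the local monoid~$oSo$, which lies in~$\ZG$ since $S \in \LZG$. Thus~$e$ is central in~$oSo$, and Claim~\ref{clm:c0}, Lemma~\ref{lem:distromega} and Corollary~\ref{cor:zgcommutes} all apply inside~$oSo$. The strategy is to exploit the abundance of frequent arrows: by Claim~\ref{clm:usccs}, $o$ belongs to a strongly connected component~$C$ of the frequent-arrow subgraph, and by Claim~\ref{clm:pumpspec}, each arrow of~$\ell$ occurs in some rare-free factor~$F$ of~$\pi$ at least $\card{S}+1$ times. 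Using the strong connectedness of~$C$, I can refactor a rare-free portion of~$\pi$ lying inside~$C$ as a product $L_1 L_2 \cdots L_m$ of frequent loops on~$o$ with $m \geq \card{oSo}+1$; then Corollary~\ref{cor:zgcommutes} in the local monoid~$oSo$ lets me ``gather'' copies of any central idempotent of~$oSo$, in particular of~$e$, showing that $e$ can be inserted into this long product with no change in evaluation.

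It then remains to transport the insertion of~$e$ from the pumped region near~$F$ to the actual split point at~$o$ between~$r$ and~$t$. Because~$e$ is central in~$oSo$ (Claim~\ref{clm:c0}) and both~$r$ ends at~$o$ and~$t$ starts at~$o$, the evaluation of~$\pi$ factors through~$o$ at the split point, which allows me to slide~$e$ along coterminal loops on~$o$ until it arrives at the desired position, yielding $\pi \equiv r\ell^\omega t$. I expect the main obstacle to be precisely this sliding step in the awkward case where the visit to~$o$ at the split point is surrounded by rare arrows in~$\pi$, so the pumping region~$F$ is far from the split, separated by rare-arrow excursions that leave the SCC~$C$; controlling how these excursions interact with the central idempotent~$e$ in~$oSo$, via repeated uses of Claim~\ref{clm:c0} along a connecting frequent path in~$C$, is the delicate ingredient that makes the argument go through.
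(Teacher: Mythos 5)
There is a genuine gap, and it sits exactly at the point the statement of Claim~\ref{clm:auxiliary} is designed to circumvent. Your plan proves a ``single-loop insertion'' statement ($\pi \equiv r\ell^\omega t$ for an arbitrary chosen frequent loop $\ell$ on~$o$) and then derives the claim by iterating over~$X$. But that single-loop statement \emph{is} Lemma~\ref{lem:spawnloop}, and in the paper the logical order is the reverse: Claim~\ref{clm:auxiliary} is proved first, precisely because the pumping machinery cannot insert a \emph{chosen} loop cleanly. What the pumping gives (Claim~\ref{clm:prelimb}, then the induction of Claim~\ref{clm:prelim2} and the split via Lemma~\ref{lem:distromega} in Claim~\ref{clm:aux2}) is $rqt \equiv r q (q')^\omega (q'')^\omega t$, i.e.\ inserting the desired $(q')^\omega$ always comes with a junk return-path idempotent $(q'')^\omega$ that cannot be discarded directly. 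The whole point of taking the product over \emph{all} of~$X$ is that each junk idempotent $(q''_x)^\omega$ evaluates to some element of~$X$ and is therefore absorbed by the corresponding factor $q_y^\omega$ already present; only after this does the single-loop version follow. Your iteration step over~$X$ would be fine if the single-loop statement were available, but your proof of that statement does not confront the junk-loop problem at all.

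Concretely, two steps of your sketch do not go through. First, ``refactor a rare-free portion of~$\pi$ lying inside~$C$ as a product $L_1\cdots L_m$ of frequent loops on~$o$'' is unjustified: a rare-free factor need not visit~$o$ even once, and rerouting it through~$o$ means inserting detours into the path, i.e.\ invoking the very insertion lemma you are trying to prove. Second, Corollary~\ref{cor:zgcommutes} only regroups occurrences of an element that already appears in a product; it cannot \emph{create} an occurrence of the chosen idempotent $e=\ell^\omega$, whose arrows must be assembled in the prescribed order --- this is exactly what the arrow-by-arrow induction of Claim~\ref{clm:prelim2} does, at the cost of the junk factor. What the paper does create by pigeonhole is an idempotent of the form $(xu)^\omega$ with $x$ an arrow \emph{starting at~$o$}, so that the subsequent sliding past the segment $w$ uses only Claim~\ref{clm:c0} on loops at~$o$; in your version the insertion happens inside a factor~$F$ that may be far from~$o$ and not based at~$o$, and you yourself flag the sliding across rare-arrow excursions as unresolved --- Claim~\ref{clm:c0} does not apply to segments that are not loops at~$o$, so this ``delicate ingredient'' is a real hole, not a technicality.
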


  Intuitively, in this claim, $X$ stands for the set of elements of~$S$ that we
  can achieve as a loop on~$o$ raised at the power $\omega$ and using frequent arrows
  only. The claim states that inserting loops of this form at~$o$ to achieve all
  such elements~$x$ will preserve equivalence.

  We first explain why Claim~\ref{clm:auxiliary} implies Lemma~\ref{lem:spawnloop}. Indeed, when taking $p
  = rt$, letting $o$ be the object between $r$ and~$t$,
  and taking $r (\pi')^{\omega} t$, 
  the sets $X$ defined in Claim~\ref{clm:auxiliary} after the occurrence of~$r$ in
  $p=rt$ and $r (\pi')^\omega t$ will be the same for both
  paths (because $X$ only depends on~$o$).
  Thus, Claim~\ref{clm:auxiliary} implies that there is a loop $q$ such that $rt \equiv
  rqt$ and  $r(\pi')^{\omega}t \equiv rq(\pi')^{\omega}t$.
 Now, as $(\pi')^{\omega}$ must correspond to an arrow of the form $(o, x, o)$ for $x \in X$, it must be the same idempotent as one of the idempotents achieved by one of the loops in the definition of~$q$, and
  as the local monoid is in~$\ZG$ these idempotents commute and $q \equiv
  q(\pi')^{\omega}$.
  Hence, we have
  $rqt \equiv rq(\pi')^{\omega}t$.
  We know that $rqt \equiv rt$, and $rq(\pi')^{\omega}t \equiv
  r(\pi')^{\omega}t$.
  Thus we obtain $rt \equiv r(\pi')^{\omega}t$.
  Thus,
  Lemma~\ref{lem:spawnloop} is proved once we have shown Claim~\ref{clm:auxiliary}.

  Hence, all that remains is to show Claim~\ref{clm:auxiliary}. We will do so by
  establishing a number of claims.

We first show that, for any frequent arrow $x$, we can
insert some loop of the form $xu$ where $u$ is a return path using only frequent
arrows, while preserving equivalence.
This uses the notion of distant rare-frequent
threshold; specifically, this is where we perform the pumping made possible by
Claim~\ref{clm:pumpspec}.

  \begin{clm}
    \label{clm:prelimb}
    Let $\pi = rt$ be a path
    with an $\card{S}$-distant rare-frequent threshold,
    let $o$ be the object between $r$ and $t$,
    and let $x$ be any frequent arrow starting at~$o$.
    Then we have $\pi \equiv r(xu)^\omega t$ for some return path~$u$ using only
    frequent arrows.
  \end{clm}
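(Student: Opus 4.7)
My plan is to construct, inside~$\pi$, a short loop of the form $xu$ at~$o$ using only frequent arrows, pump it to an idempotent $(xu)^\omega$ via a pigeonhole argument on the local monoid~$oSo$, and then slide this idempotent loop to the boundary between~$r$ and~$t$ using the commutation of coterminal loops from Claim~\ref{clm:c0}.

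First I apply Claim~\ref{clm:pumpspec} with the frequent arrow~$x$: since $n'$ is a $\card{S}$-distant rare-frequent threshold for~$\pi$, there is a factor $F$ of~$\pi$ that contains no rare arrow and contains at least $\card{S}+1$ occurrences of~$x$. Enumerating these as $x^{(1)},\dots,x^{(k)}$ with $k\geq\card{S}+1$, every $x^{(\ell)}$ goes from~$o$ to the fixed ending object~$o'$ of~$x$, and the portions $v_1,\dots,v_{k-1}$ of~$F$ between consecutive occurrences are return paths from~$o'$ back to~$o$ that use only frequent arrows of~$\pi$.

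Next, for $0\leq i\leq k-1$, write $L_i$ for $xv_1\cdots xv_i$ at~$o$, with the convention that $L_0$ is empty and evaluates to the identity of~$oSo$. This gives $k\geq\card{S}+1>\card{oSo}$ candidate values in the local monoid, so pigeonhole yields $i<j$ with $L_i\equiv L_j$. Setting $w\colonequals xv_{i+1}\cdots xv_j$ we get $L_i\cdot w\equiv L_i$, which iterates inside~$oSo$ to $L_i\cdot w^\omega\equiv L_i$, so inserting $w^\omega$ in~$\pi$ at the position~$P$ immediately after~$L_i$ (which is at object~$o$) preserves the evaluation. Writing $u\colonequals v_{i+1}xv_{i+2}\cdots xv_j$, we have $w=xu$ with $u$ a return path of frequent arrows from~$o'$ to~$o$.

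Finally, both the insertion position~$P$ and the split point between~$r$ and~$t$ are at object~$o$, so the stretch of~$\pi$ between them is a coterminal loop at~$o$. Claim~\ref{clm:c0} then lets me commute $(xu)^\omega=w^\omega$ past this loop to relocate the inserted idempotent to the boundary, giving $\pi\equiv r(xu)^\omega t$. The parenthetical $n',\omega$-equivalence is immediate, since I have only inserted $\omega$ extra copies of arrows of~$xu$, all of them frequent in~$\pi$. The main delicate step is this last sliding: one must observe that because $P$ and the boundary are both anchored at~$o$, the intervening portion of~$\pi$ is automatically a loop at~$o$, which places us precisely in the regime where Claim~\ref{clm:c0} applies; the pumping and pigeonhole steps are standard.
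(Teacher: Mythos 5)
Your proposal is correct and follows essentially the same route as the paper's proof: use Claim~\ref{clm:pumpspec} to find a rare-free factor with more than $\card{S}$ occurrences of~$x$, apply the pigeonhole principle to extract an idempotent loop $(xu)^\omega$ that can be inserted without changing the evaluation, and then commute this idempotent past the intervening loop at~$o$ (Claim~\ref{clm:c0}) to place it at the $r$/$t$ boundary. The only differences are cosmetic bookkeeping — you pigeonhole over loops anchored at~$o$ in the local monoid (with the empty prefix as identity) where the paper pigeonholes over prefixes ending after an occurrence of~$x$ and then cyclically rotates, and the paper's two-case prefix analysis is exactly your "intervening stretch is a loop at~$o$" observation.
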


  \begin{proof}
    As $x$ is a frequent arrow and $\pi$ has an $\card{S}$-distant threshold, we know by
    Claim~\ref{clm:pumpspec} that $\pi$ contains a factor
    $\rho$ that contains only frequent arrows and contains $k > \card{S}$ occurrences of~$x$.
    This provides a decomposition of $\rho$ in the form:
$\rho = \rho_1 x \rho_2 x \cdots \rho_k x s$.

By the pigeonhole principle, there exists $i<j$ such that
$\rho_1 x \cdots \rho_i x \equiv \rho_1 x \cdots \rho_j x$.
Hence, iterating, we obtain:
\[
\rho_1 x \cdots \rho_j x \equiv
\rho_1 x \cdots \rho_i x (\rho_{i+1} x \cdots \rho_j x)^\omega.
\]
Moving the $\omega$, we get:
\[
\rho_1 x \cdots \rho_j x \equiv
  \rho_1 x \cdots \rho_{i-1} x \rho_i (x \rho_{i+1} x \cdots \rho_j)^\omega x.
\]
This proves that $\rho$ and $h(xu)^\omega g$ achieve the same category element
    when taking $u \colonequals \rho_{i+1} x \cdots \rho_j$, $h \colonequals \rho_1 x
    \cdots \rho_{i-1} x \rho_i$
  and $g \colonequals x \rho_{j+1} x \cdots \rho_k x s$.
    By compositionality, $\pi$ and $h' (xu)^\omega g'$ achieve the same
    category element, where $h'$ is the part of~$\pi$ preceding~$\rho$
    followed by~$h$, and $g'$ is $g$ followed by the part of~$\pi$
    following~$\rho$.

  Now, recall that we must show the result for our decomposition $\pi = rt$,
    where the object~$o$ between $r$ and $t$ is the starting object of the
    arrow~$x$ and the ending object of~$h$.  
Either $h'$ is a prefix of $r$, or vice-versa. Assume first that we are in the
    first case, so $h'=r w$ for some path~$w$. Then, $w$ and $(xu)^\omega$
belong to the local monoid of~$o$ which is in $\ZG$. Since idempotents commute
with all elements, we have $w(xu)^\omega \equiv (xu)^\omega w$ establishing that
$r w (xu)^\omega g' \equiv r (xu)^\omega wg' \equiv r(xu)^\omega t$ since
    $wg'=t$.
    The other case is symmetrical. This concludes the proof of
    Claim~\ref{clm:prelimb}.
  \end{proof}

We then prove a generalization of the previous claim, going from a single
frequent arrow to an arbitrary path of frequent arrows:

\begin{clm}
  \label{clm:prelim2}
    Let $\pi = rt$ be a path
    with an $\card{S}$-distant rare-frequent threshold.
    Let $o$ be the object between $r$ and $t$.
    Let $h$ be any path starting at~$o$ which
    only uses frequent arrows.
  Then we have  $\pi \equiv r (hg)^\omega t$ for some return path~$g$ using only
  frequent arrows.
\end{clm}

\begin{proof}
We show the claim by induction on the length of~$h$.
  The base case of the induction, with $h$ of length $0$, is trivial with $g$ also having length~$0$.

For the inductive claim, write $h = h' a$. Intuitively, we will insert a loop
  starting with the path $h'$, then insert a loop starting with the
  arrow~$a$ within that loop, and then recombine.

Formally,
by induction hypothesis, there exists a $g'$ using only frequent arrows 
  such that:
  \[\pi \equiv r (h'g')^\omega t.\]
  Furthermore, by applying  Claim~\ref{clm:prelimb} to the decomposition
  $r'=rh'$ and $t'=g'(h'g')^{\omega-1}t$ and with the frequent arrow~$a$ we get a return path $u$ using only
  frequent arrows 
  such that:
  \[ r (h'g')^\omega t \equiv r h' (au)^\omega g' (h'g')^{\omega-1}t\]
So, iterating the~$\omega$ power, and combining with the preceding equation, we
get:
  \[\pi \equiv r h' ((au)^\omega)^\omega g' (h'g')^{\omega-1}t.\]
Now, by applying $\omega-1$ times Claim~\ref{clm:c0} to each $(au)^\omega$ except the first and to each loop going from after this $(au)^\omega$ to the position between an occurrence of~$h'$ and~$g'$,
we get that:
  \[
    \pi \equiv r h' (au)^\omega g' (h'(au)^\omega g')^{\omega-1}t.\]
Note the right-hand side is equal to:
$r (h'(au)^\omega g')^\omega t$.
  So we have shown:
  \[\pi \equiv r (h' (au)^\omega g')^\omega t.\]
So this establishes the inductive claim by taking $g \colonequals u (au)^{\omega-1} g'$.
\end{proof}

We are interested in the specialization of this result when the path of
frequent arrows to insert is a loop. In this case, the return path is also a
loop. Formally, the specialization is the following:
\begin{cor}
  \label{cor:aux2}
    Let $\pi = rt$ be a path
    with an $\card{S}$-distant rare-frequent threshold $n'$,
    let $o$ be the object between $r$ and $t$,
    and let $q$ be a loop on~$o$ using only frequent arrows.
  We have that $r t \equiv r q^{\omega} (q')^{\omega} t$ for some loop $q'$
  on~$o$ using only frequent arrows
  (note that the two are also $n',\omega$-equivalent).
\end{cor}

\begin{proof}
  We use Claim~\ref{clm:prelim2} with $h \colonequals q$. This gives us the
  existence of a return path~$g$ using only frequent arrows, which is then also a loop on~$o$, such that
$rt \equiv r (q g)^\omega t$.
Now, applying Lemma~\ref{lem:distromega} to the local monoid on object~$o$, we know that this evaluates to the same category element as
$r q^\omega g^\omega t$.
  Taking $q' \colonequals g$ concludes the proof of Corollary~\ref{cor:aux2}.
\end{proof}

The only step left is to argue that Corollary~\ref{cor:aux2} implies our rephrasing
(Claim~\ref{clm:auxiliary}) of the loop insertion lemma
(Lemma~\ref{lem:spawnloop}).
To do this, let $\pi = rt$ be the path,
let $o$ be the object between $r$ and~$t$,
and let $X$ be the set of idempotents definable from frequent arrows.
Let us write $X = \{x_1,\ldots, x_k\}$ with $k = \card{X}$.
For each $1 \leq i \leq k$, take a loop $q_i$ on~$o$ consisting only of frequent arrows 
that achieves $x_i$, i.e., which ensures that $q_i^\omega$ evaluates to $(o, x_i, o)$.
Now, we apply Corollary~\ref{cor:aux2} $k$ times with $q$ being each of the
loops
$q_1, \ldots, q_k$. We get the following, where $q'_1, \ldots, q'_k$ are the
loops $q'$ on~$o$ of frequent arrows obtained by the statement of Corollary~\ref{cor:aux2}:
\[
  r t \equiv r q_1^\omega (q'_1)^\omega \cdots q_k^\omega (q_k')^\omega t
\]
Note that the left-hand-side and right-hand-side are also
$n',\omega$-equivalent.

Now, since the $q'_i$ are loops on~$o$ consisting of frequent arrows, each 
$(q'_i)^\omega$ evaluates to $(o, x, o)$ for some $x \in X$.
As the elements of~$X$ are in the local monoid of~$o$ which is in~$\ZG$, 
commuting the loops using Claim~\ref{clm:c0}, 
we can combine each $(q'_i)^{\omega}$ with some $q_j^\omega$ such that
$(q'_i)^\omega \equiv q_j^\omega$, hence $(q'_i)^\omega q_j^\omega =
q_j^\omega$.
Thus, we
get that $rt$ is $n',\omega$-equivalent to, and evaluates to the same category
element as, the path:
\[
  r \prod_{x\in X} q_x^{\omega} t.
\]
This concludes the proof of Claim~\ref{clm:auxiliary}, and thus establishes our
desired result, Lemma~\ref{lem:spawnloop}.

\subsection{Prefix substitution lemma}
The prefix substitution lemma allows us to change any prefix of frequent arrows of a
path, up to inserting a loop of frequent arrows elsewhere:
\begin{lem}
    \label{lem:prefixfrequent}
    Let $\pi = x r y$ be a path, and assume that $n'$ is an $\card{S}$-distant rare-frequent threshold
    for~$\pi$.
    Let $x'$ be a path coterminal with $x$. Assume that every arrow in $x$
    and in $x'$ is frequent.
    Assume that some object $o$ in the SCC of
  frequent arrows of the starting object of $r$ occurs again in $y$, say as the
  intermediate object of $y = y_1 y_2$.
  Then there exists $y' = y_1 y'' y_2$ for some loop $y''$ consisting only of
  frequent arrows such that $\pi \equiv x' r y'$ and such that $\pi$ and $x' r
  y'$ are $n',\omega$-equivalent.
\end{lem}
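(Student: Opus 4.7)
The plan is to combine the Loop Insertion Lemma (Lemma~\ref{lem:spawnloop}) with the $\ZG$-centrality of idempotents in local monoids (Claim~\ref{clm:c0}) and the union-of-SCCs property of frequent arrows (Claim~\ref{clm:usccs}).

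First, I would establish the relevant frequent-path structure. Write $e_0$ for the starting object of $x$ and $x'$, and $e$ for their common ending object, which is also the starting object of $r$. Applying Claim~\ref{clm:usccs} to the frequent arrows of $\pi$ (with the $\card{S}$-distant threshold $n'$), we see that $e_0$ and $e$ lie in the same SCC of the subgraph of frequent arrows, since $x$ is itself a frequent path between them, and the hypothesis places $o$ in this same SCC. I therefore fix frequent paths $\beta_0\colon e\to e_0$, $\alpha\colon e\to o$, and $\beta\colon o\to e$ as witnesses.

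Second, I would invoke Lemma~\ref{lem:spawnloop} several times to insert $\omega$-powers of carefully chosen frequent loops: $(x'\beta_0)^\omega$ at $e_0$ at the very beginning of $\pi$; $(\beta_0 x)^\omega$ at $e$ between $x$ and $r$; and an auxiliary $(\beta\alpha)^\omega$ at $o$ between $y_1$ and $y_2$. Combining the basic semigroup identity $(uv)^\omega u = u(vu)^\omega$ (which lets me shift $\omega$-powers past $x$) with the $\ZG$-centrality of the inserted idempotents in the local monoid at $e_0$ (Claim~\ref{clm:c0}), the prefix $x$ can be traded for $x'$ at the cost of a correction factor $F$ that lives in the local monoid at $e$, made up of central idempotents together with some non-central residual coming from the power-shifting step.

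Third, I would transport the correction $F$ from $e$ to $o$ by conjugation with $\alpha$ and $\beta$. The central idempotent part of $F$ commutes past the (possibly non-frequent) loop $r y_1 \beta$ at $e$ using $\ZG$-centrality in the local monoid at $e$; the remaining non-central residual, once conjugated as $\beta \cdot (\text{residual}) \cdot \alpha$, is absorbed into the frequent idempotent $(\beta\alpha)^\omega$ already inserted at $o$, yielding the desired frequent loop $y''$ at $o$. The $n',\omega$-equivalence of $\pi$ and $x' r y'$ is automatic because every insertion is an $\omega$-power of a frequent loop, adding no rare arrows and shifting frequent-arrow counts only by multiples of $\omega$.

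The main obstacle is the third step. The correction $F$ emerging from the prefix swap typically contains a non-central factor (for instance a power of the form $(\beta_0 x')^{\omega-1}$ that is neither idempotent nor a group element), so it does not commute freely with $r y_1 \beta$. The proof must be arranged so that, after conjugation, this non-central residual is realized as the value of a single frequent loop at $o$, exploiting crucially that all of the conjugating arrows $\beta, \beta_0, x, x', \alpha$ are frequent and that the local monoid at $o$ is in $\ZG$.
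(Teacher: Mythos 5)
Your overall strategy (insert $\omega$-powers of frequent loops at $e_0$, at $e$, and at $o$, re-bracket to trade the prefix $x$ for $x'$, then push the correction factor down into a frequent loop at $o$) is a genuinely different and more elementary route than the paper's, which inserts a \emph{single} loop at $o$ containing $(xs)^\omega(x's)^\omega$ and performs the swap entirely through the equational prefix-swap identity (Claim~\ref{clm:c3}, built on Claim~\ref{clm:c1b}), so that no correction ever has to cross the possibly-rare segment $r y_1$. But as written your argument has a genuine gap at exactly the step you flag as ``the main obstacle'', and the mechanism you propose for it cannot work. Your correction $F$ sits at the object $e$, between $x'$ and $r$, whereas the target form $x' r y_1 y'' y_2$ allows nothing there; so $F$ must be commuted past the loop $r y_1 \beta$ at $e$, which may contain rare arrows. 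For factors of $F$ that are $\omega$-plus powers of loops this is exactly Claim~\ref{clm:c0}, but for the literal residual $(\beta_0 x')^{\omega-1}$, which you correctly identify as possibly neither a group element nor central, you have no tool to move it: the proposed ``conjugation as $\beta\cdot(\text{residual})\cdot\alpha$'' presupposes that the residual has already been transported to $o$, i.e.\ already commuted past $r y_1\beta$ --- the very step its non-centrality blocks --- and ``absorption into the idempotent $(\beta\alpha)^\omega$'' is not licensed by any identity available in $\ZG$ (an unrelated non-idempotent element is not absorbed by an idempotent). So the decisive step is circular rather than proved.

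The good news is that the gap is repairable inside your own framework, without the paper's Claims~\ref{clm:c1b} and~\ref{clm:c3}. Insert $((x'\beta_0)^2)^\omega = (x'\beta_0)^{2\omega}$ at $e_0$ instead of $(x'\beta_0)^\omega$ (still an $\omega$-power of a frequent loop, so Lemma~\ref{lem:spawnloop} applies), or equivalently note that as category elements $(x'\beta_0)^\omega \equiv (x'\beta_0)^{2\omega}$. Re-bracketing then yields $x'\,(\beta_0 x')^{2\omega-1}\,(\beta_0 x)^{\omega+1}$, and now \emph{both} correction factors are powers $\geq\omega$ of loops at $e$, hence group elements of the local monoid $eSe$ and central in it by Claim~\ref{clm:c0}. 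The whole correction therefore commutes past the loop $r y_1\beta$ (frequency of that loop is irrelevant), and lands inside the frequent loop $y'' = \beta\,(\beta_0 x')^{2\omega-1}(\beta_0 x)^{\omega+1}\,\alpha\,(\beta\alpha)^{\omega-1}$ at $o$, obtained from the already inserted $(\beta\alpha)^\omega$. The bookkeeping for $n',\omega$-equivalence then goes through: only frequent arrows are added or removed, and each count changes by a multiple of $\omega$ ($3\omega$ for $\beta_0$, $2\omega$ for $x'$, $\omega$ for $x$, $\omega$ for the arrows of $\alpha$ and $\beta$), while the rare subword is untouched. With that repair your proof is correct and is a legitimate alternative to the one in the paper.
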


This lemma uses Claim~\ref{clm:usccs} to argue that
frequent arrows are a union of SCCs. Its proof relies on the loop insertion
lemma (Lemma~\ref{lem:spawnloop}), but with extra technical work using the $\ZG$
equation.

To prove Lemma~\ref{lem:prefixfrequent}, we will first show
that frequent loops can be ``recombined'' without changing the category image,
simply by equation manipulation:
\begin{clm}
\label{clm:c1b}
        For $x$, $x'$ two coterminal paths in $S_\EE$ and $y$, $y'$ coterminal paths in $S_\EE$ such that $xy$ and $x'y'$ are valid loops, we have:
                $(xy)^\omega (x'y')^\omega \equiv (xy')^\omega (x'y)^\omega
                (xy)^\omega (x'y')^\omega$.
\end{clm}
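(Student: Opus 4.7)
All four products $xy, x'y', xy', x'y$ are coterminal loops at the common start/end object of $x$ and $x'$, so they all lie in a common local monoid $M$ of $S_\EE$. By $S \in \LZG$, $M$ is in $\ZG$, so by Claim~\ref{clm:c0} their $\omega$-powers are central idempotents of $M$. The claim is therefore an identity inside $M$; my plan uses the $\ZG$ structure of $M$ together with that of the neighboring local monoid $M'$ at the common endpoint of $y, y'$ (also in $\ZG$).

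First, three applications of Lemma~\ref{lem:distromega} collapse the right-hand side:
$(xy')^\omega(x'y)^\omega(xy)^\omega(x'y')^\omega = (xy'x'y)^\omega (xyx'y')^\omega = ((xy'x'y)(xyx'y'))^\omega$,
while the left-hand side is $(xy)^\omega(x'y')^\omega = (xyx'y')^\omega$. Writing $u \colonequals xy'x'y$ and $w \colonequals xyx'y'$ in $M$, the claim becomes $w^\omega = (uw)^\omega$, i.e., by Lemma~\ref{lem:distromega} once more, $w^\omega = u^\omega w^\omega$. Since $u^\omega, w^\omega$ are commuting central idempotents, this is the absorption $w^\omega \leq u^\omega$ in the natural order on idempotents of $M$.

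To prove this absorption, I would expand both sides using the general semigroup identity $(pq)^\omega = p(qp)^{\omega-1}q$: $w^\omega = x \cdot ((yx')(y'x))^{\omega-1} \cdot (yx')y'$ and $(uw)^\omega = x \cdot ((y'x')(yx)(yx')(y'x))^{\omega-1} \cdot (y'x')(yx)(yx')y'$. The inner loops both lie in $M'$; the longer inner loop is obtained from the shorter one by prepending the extra prefix $(y'x')(yx) \in M'$, whose $\omega$-power (by Lemma~\ref{lem:distromega} in $M'$) is the central idempotent $(y'x')^\omega(yx)^\omega$ of $M'$. Using the centrality of $\omega$-powers in $M'$ (Claim~\ref{clm:c0}) together with passing identities such as $(pq)^\omega p = p(qp)^\omega$ to commute these extra idempotents through and merge them with factors already present in the expansion of $w^\omega$, the longer expansion collapses to the shorter one, yielding $(uw)^\omega = w^\omega$ and hence $w^\omega = u^\omega w^\omega$ in $M$.

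The main obstacle is the combinatorial bookkeeping of this merging: only $\omega$-powers are central in $M'$, so the argument must carefully insert full idempotent factors of the form $(\cdot)^\omega$ at precisely the right positions inside the long inner word and then commute them outward using the passing identities, taking account of how they interact with the surrounding $x$ and $y'$ that straddle $M$ and $M'$. The technical flavor is close to the proof of Lemma~\ref{lem:distromega}, where extra idempotent factors are introduced and then commuted to derive the target equality.
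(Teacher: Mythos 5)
Your reductions are valid as far as they go: all four products are indeed loops in the common local monoid $M$ at the starting object, Lemma~\ref{lem:distromega} (applied in $M$) legitimately turns the claim into the absorption $w^\omega \equiv u^\omega w^\omega$ with $u \colonequals xy'x'y$, $w \colonequals xyx'y'$, and the expansions via $(pq)^\omega = p(qp)^{\omega-1}q$ are correct. But the step you defer as ``combinatorial bookkeeping'' is precisely the entire content of the claim, and the mechanism you sketch does not supply it. The extra prefix $(y'x')(yx)$ does not sit inside the long expansion as an $\omega$-power: it occurs as a single copy interleaved in each repetition of $\bigl((y'x')(yx)(yx')(y'x)\bigr)^{\omega-1}$, and single copies of loops are not central --- only $\omega{+}k$ powers are (Claim~\ref{clm:c0}); moreover Lemma~\ref{lem:distromega} splits $\omega$-powers but not $(\omega-1)$-powers, so you cannot factor the long inner loop into $(cd)^{\omega-1}(ab)^{\omega-1}$ to extract anything central. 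Nor can the collapse happen purely inside the local monoid $M'$: the two inner elements $\bigl((y'x')(yx)(yx')(y'x)\bigr)^{\omega-1}(y'x')(yx)(yx')$ and $\bigl((yx')(y'x)\bigr)^{\omega-1}(yx')$ are genuinely different in $M'$ (already in a commutative example they differ by the idempotent $(y'x')^\omega(yx)^\omega$), so the equality can only come from the interaction with the straddling $x$ and $y'$ --- and if you carry out the legal moves (centrality of $\omega{\pm}k$ powers, passing identities, distromega) you find yourself needing exactly statements like $(xy')^\omega(x'y)^\omega(xy)^\omega(x'y')^\omega \equiv (xy)^\omega(x'y')^\omega$, i.e.\ the original claim again.

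What is missing is the device the paper uses: a \emph{swap} identity, Equation~\ref{eqn:e1}, proved by regrouping across the two local monoids (rewriting $(xy)^\omega$ as $x(yx)^{\omega-1}y$, commuting the $(\omega-1)$-powers, and re-merging), which exchanges which of $x$, $x'$ stands at the front and produces one copy of $x'y$ and of $xy'$ at the edges; this identity is then iterated $\omega$ times so that the single copies accumulate into honest $\omega$-powers $(x'y)^\omega$ and $(xy')^\omega$, which only then become central and can be absorbed. Your sketch contains no analogue of this swap-and-iterate step, so as it stands the proof has a genuine gap at its decisive point; if you flesh that step out you will essentially be reproving Equation~\ref{eqn:e1} and its iteration, i.e.\ the paper's argument.
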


\begin{proof}
Let us first show that:
\begin{equation}
\label{eqn:e1}
(xy)^\omega (x'y')^\omega  \equiv x' y (xy)^{\omega-1} (x'y')^{\omega-1} x y'
\end{equation}
To show Equation~\ref{eqn:e1}, first rewrite $(xy)^\omega$ as $x (yx)^{\omega-1} y$ and likewise for $(y'x')^\omega$, to get:
\[
(xy)^\omega (x'y')^\omega \equiv x (yx)^{\omega-1} y x' (y'x')^{\omega-1} y'
\]
Then, we use Claim~\ref{clm:c0} to move $(yx)^{\omega-1}$, so the above
  evaluates to the same category element as:
\[
x y x' (yx)^{\omega-1} (y'x')^{\omega-1} y'
\]
We again rewrite $(yx)^{\omega-1}$ to $y (xy)^{\omega-2} x$, yielding:
\[
x y x' y (xy)^{\omega-2} x (y'x')^{\omega-1} y'
\]
We again use Claim~\ref{clm:c0} to move $(xy)^{\omega-2}$, merge it with the prefix $xy$, and move it back to its place, yielding:
\[
x' y (x y)^{\omega-1} x (y'x')^{\omega-1} y'
\]
We rewrite $(y'x')^{\omega-1}$ to $y' (x' y')^{\omega-2} x'$, yielding:
\[
x' y (x y)^{\omega-1} x y' (x'y')^{\omega-2} x' y'
\]
Again by Claim~\ref{clm:c0}, we can merge $(x'y')^{\omega-1}$ with $x' y'$ and move it to finally get:
\[
x' y (x y)^{\omega-1} (x'y')^{\omega-1} x y'
\]
Thus, the left-hand side of Equation~\ref{eqn:e1} evaluates to the same category
  element as the right-hand side, and we have shown Equation~\ref{eqn:e1}.

Now, we have $(xy)^{\omega} (x'y')^\omega \equiv (xy)^{2\omega}
  (x'y')^{2\omega}$, so using Claim~\ref{clm:c0} again, Equation~\ref{eqn:e1} gives:
  \[
    (xy)^\omega (x'y')^\omega  \equiv x' y  (xy)^\omega (x'y')^\omega (xy)^{\omega-1} (x'y')^{\omega-1} x y'
\]
  We can now use Equation~\ref{eqn:e1} to replace $(xy)^\omega (x'y')^\omega$ by
  the right-hand side of Equation~\ref{eqn:e1} and use Claim~\ref{clm:c0} to
  commute, yielding:
  \[
    (xy)^\omega (x'y')^\omega  \equiv (x' y)^2
    (xy)^{\omega-2} (x'y')^{\omega-2}
    (x y')^2
\]
  By definition we have $(xy)^{\omega-2} \equiv (xy)^{\omega-2} (xy)^\omega$ and
  $(x'y')^{\omega-2} \equiv (x'y')^\omega (x'y')^{\omega-2}$. Injecting these in
  the equation above, we get:
  \[
    (xy)^\omega (x'y')^\omega  \equiv (x' y)^2
    (xy)^{\omega-2} \underbrace{(xy)^\omega (x'y')^\omega}_{(\star)}
 (x'y')^{\omega-2}
    (x y')^2
\]
  Note that $(\star)$ is now equal to the left-hand-side of the equation. 
  Substituting $\omega$ times the right-hand-side into $(\star)$, we obtain:
  \[
(xy)^\omega (x'y')^\omega \equiv (x'y)^\omega (xy)^\omega (x'y')^\omega (xy')^\omega
\]
As these elements commute (thanks to Claim~\ref{clm:c0}), we have shown the desired equality.
\end{proof}

We will now extend this result to show that we can change
the initial part of a path, even if it is not a loop, provided that there is a
coterminal path under an~$\omega$-power with which we can swap it.

\begin{clm}
\label{clm:c3}
        For $x$, $x'$ two coterminal paths in $S_\EE$ and $y$, $y'$ coterminal paths in $S_\EE$ such that $xy$ and $x'y'$ are valid loops, and for any path $t$ coterminal with~$y$, the following equation holds:
                $x t (xy)^\omega (x'y')^\omega \equiv x' t (xy)^\omega xy' (x'y')^{\omega-1}.$
\end{clm}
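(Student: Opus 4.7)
The plan is to prove Claim~\ref{clm:c3} by applying Equation~\ref{eqn:e1} to the LHS and then manipulating via Claim~\ref{clm:c0}, in the same style as the proof of Claim~\ref{clm:c1b}. Both sides of the desired identity are loops at the common starting object~$o_1$ of~$x$ and~$x'$, and it is easy to check that their arrow-counts coincide: each side contains $\omega{+}1$ occurrences of the arrow~$x$, $\omega$ occurrences of~$x'$, one of~$t$, and $\omega$ of each of~$y$ and~$y'$. So the identity is essentially a rearrangement in the category, justified by the centrality of group-part elements in the $\ZG$ local monoids at~$o_1$ and~$o_2$.

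Concretely, I would first apply Equation~\ref{eqn:e1} to rewrite the $(xy)^\omega(x'y')^\omega$ appearing in the LHS as $x'y(xy)^{\omega-1}(x'y')^{\omega-1}xy'$, yielding
\[
xt(xy)^\omega(x'y')^\omega \equiv xt \cdot x'y \cdot (xy)^{\omega-1}(x'y')^{\omega-1} \cdot xy'.
\]
I would then expand $xt \cdot x'y = x(tx')y$ (exposing the loop $(tx')$ at~$o_2$) and further expand the $\omega$-powers via $(xy)^\omega = x(yx)^{\omega-1}y$ and $(x'y')^\omega = x'(y'x')^{\omega-1}y'$. Next, I would repeatedly apply Claim~\ref{clm:c0} at object~$o_2$ to commute central group-part elements such as $(yx)^{\omega-1}$ past other coterminal loops, and use the basic associative identities $(yx)^{\omega-1}y = y(xy)^{\omega-1}$, $x(yx)^\omega = (xy)^\omega x$, and the combinatorial identity $(yx')(y'x')^{\omega-1}y' = y(x'y')^\omega$ to reshape the expression. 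Finally, after collecting terms, I would use $(xy)^\omega xy' = x(yx)^\omega y'$ to reassemble the result into $x't(xy)^\omega xy'(x'y')^{\omega-1}$.

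The main obstacle is the ``swap'' step: turning the LHS prefix $xt$ into the RHS prefix $x't$, which effectively trades an~$x$ for an~$x'$ while simultaneously introducing the compensating factor $xy'$ and lowering $(x'y')^\omega$ to $(x'y')^{\omega-1}$. This is exactly the algebraic gymnastics performed in the proof of Claim~\ref{clm:c1b}, but with the extra path~$t$ threaded throughout the expression, so each commutation must be chosen carefully to preserve the intended matching of factors. I expect the final proof to be significantly more intricate than that of Claim~\ref{clm:c1b} and to require tracking the position of~$t$ through every rewriting step.
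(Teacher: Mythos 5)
Your proposal assembles the right toolkit (Equation~\ref{eqn:e1}, Claim~\ref{clm:c0}, re-bracketing identities such as $(yx)^{\omega-1}y = y(xy)^{\omega-1}$), and your count check is correct, but it does not prove the claim: the decisive step --- turning the leading $x$ into a leading $x'$ --- is exactly the part you leave open, and the moves you actually list cannot accomplish it. Claim~\ref{clm:c0} only lets you commute $\omega{+}k$ powers of loops past coterminal loops; none of $x$, $x'$, $t$, nor the loop $tx'$ exposed by your re-bracketing is such a power, and re-bracketings never change which arrow stands first. After your application of Equation~\ref{eqn:e1} the path still begins with $x$, and nothing in the sketch explains how $x'$ reaches the front; declaring this ``the main obstacle'' and asserting that the terms will afterwards collect into $x' t (xy)^\omega x y' (x'y')^{\omega-1}$ is precisely the content of Claim~\ref{clm:c3}, so leaving it unverified is a genuine gap rather than a presentational shortcut.

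For comparison, the paper resolves this obstacle by invoking the full Claim~\ref{clm:c1b} (not merely Equation~\ref{eqn:e1}) as its first move, blowing the left-hand side up to $x t\, (xy')^\omega (x'y)^\omega (xy)^\omega (x'y')^\omega$. This makes available the $\omega$-power loop $(x'y)^\omega$, which begins with $x'$, is central by Claim~\ref{clm:c0}, and can therefore be commuted in front of the loop $xt$ and expanded as $x'(yx')^{\omega-1}y$, so that the path now genuinely starts with $x'$. The remainder is then reassembled into $t (xy)^\omega x y' (x'y')^{\omega-1}$ by an explicit chain of roughly a dozen commutations and re-bracketings, including an application of Claim~\ref{clm:c1b} in reverse to reabsorb the auxiliary factors $(xy')^\omega(x'y)^\omega$. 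If you want to salvage your variant, you would need an analogous $\omega$-power device beginning with $x'$ (for instance commuting $(x'y')^{\omega-1}$ to the front and expanding it), and then you would still have to carry out the full reassembly with $t$ threaded through --- that explicit computation, which you defer, is the proof.
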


We establish this again by equation manipulation.

\begin{proof}[Proof of Claim~\ref{clm:c3}]
We apply Claim~\ref{clm:c1b} to show the following equality about the left-hand side:
                \[x t (xy)^\omega (x'y')^\omega \equiv x t (xy')^\omega (x'y)^\omega (xy)^\omega (x'y')^\omega\]
By commutation of $(x'y)^\omega$ thanks to Claim~\ref{clm:c0}, the right-hand
  side evaluates to the same category element as:
                \[ (x'y)^\omega x t (xy')^\omega (x'y')^\omega (xy)^\omega\]
By expanding $(x'y)^\omega = x'(yx')^{\omega-1}y$, we get:
                \[x' (yx')^{\omega-1} y x t (xy')^\omega (x'y')^\omega (xy)^\omega\]
By commutation of $(xy)^\omega$ and expanding it to~$x (yx)^{\omega-1} y$, we get:
                \[x' (yx')^{\omega-1} y x (yx)^{\omega-1} y x t (xy')^\omega (x'y')^\omega\]
Combining $(yx)^{\omega-1}$ with what precedes and follows, we get:
                \[x' (yx')^{\omega-1} (yx)^{\omega+1} t (xy')^\omega (x'y')^\omega\]
By expanding $(x'y')^\omega = x' (y'x')^{\omega-1} y'$, and commuting $(yx')^{\omega-1}$ and $(yx)^{\omega+1}$, we get:
                \[x' t (xy')^\omega x' (yx')^{\omega-1} (yx)^{\omega+1} (y'x')^{\omega-1} y'\]
Now, we have $x' (yx')^{\omega-1}  = (x'y)^{\omega-1} x'$, so we get:
                \[x' t (xy')^\omega (x'y)^{\omega-1} x' (yx)^{\omega+1} (y'x')^{\omega-1} y'\]
Commuting $(y'x')^{\omega-1}$ and doing a similar transformation, we get:
                \[x' t (xy')^\omega (x'y)^{\omega-1} (x' y')^{\omega-1} x' (yx)^{\omega+1} y'\]
Now, expanding $(yx)^{\omega+1}$, we get:
                \[x' t (xy')^\omega (x'y)^{\omega-1} (x' y')^{\omega-1} x' y (xy)^{\omega} x y'\]
Commuting $(x'y)^{\omega-1}$ and merging it with $x' y$, we get:
                \[x' t (xy')^\omega (x' y')^{\omega-1} (x'y)^\omega (xy)^{\omega} x y'\]
Note that $(x' y')^{\omega-1} \equiv (x' y')^{\omega} (x' y')^{\omega-1}$, so applying commutation we get:
                \[x' t (xy')^\omega (x' y')^{\omega} (x'y)^\omega (xy)^{\omega} (x' y')^{\omega-1} x y'\]
  Now, applying Claim~\ref{clm:c1b} in reverse (using commutation again), we can obtain:
                \[x' t (x' y')^{\omega} (xy)^{\omega} (x' y')^{\omega-1} x y'\]
Then commuting $(x'y')^{\omega}$ and merging it yields:
                \[x' t (xy)^{\omega} (x' y')^{\omega-1} x y'\]
A final commutation of $(x' y')^{\omega-1}$ yields the desired right-hand side,
  and we have preserved equivalence in the category of idempotents, establishing the result.
\end{proof}

With Claim~\ref{clm:c3} in hand, and using Lemma~\ref{lem:spawnloop}, we can now prove
Lemma~\ref{lem:prefixfrequent}:

\begin{proof}[Proof of Lemma~\ref{lem:prefixfrequent}]
  As $n'$ is an $\card{S}$-distant rare-frequent threshold, it is in particular
  a 1-distant rare-frequent threshold, so we know by
  Claim~\ref{clm:usccs} that the frequent arrows occurring in~$\pi$
  are a union of SCCs. Thus, there is a
return path $s$ for $x'$ (i.e., $x' s$ is a loop, hence $x s$ also is) where
$s$ only consists of frequent arrows.

By our hypothesis on the starting object of~$r$, we can decompose $y = y_1 y_2$
  such that the ending object of~$y_1$ and starting object of~$y_2$ is an
  object~$o$ which is in the SCC of frequent arrows of the
  starting object $o'$ of~$r$.
  Let $\rho_1$ be any path of frequent arrows from $o$ to~$o'$, and $\rho_2$ be
  any path of frequent arrows from~$o'$ to~$o$.
  Now, take $\pi'$ to
  be the loop $\rho_1 s (xs)^\omega(x's)^\omega x \rho_2$: note that all arrows of~$\pi'$ are
  frequent. Hence,
  by Lemma~\ref{lem:spawnloop},
  $xry$ evaluates to the same category element as, and is $n',\omega$-equivalent to,
\[
  x r y_1 (\pi')^\omega y_2 = x r y_1 (\rho_1 s (xs)^\omega(x's)^\omega x \rho_2)^{\omega} y_2
\]
By unfolding the power $\omega$, we get the following:
\[
x r y_1 (\pi')^\omega y_2
  = x r y_1 
  \rho_1 s (xs)^\omega(x's)^\omega x \rho_2
  (\rho_1 s (xs)^\omega(x's)^\omega x \rho_2)^{\omega-1} y_2
= x r y_1 \rho_1 s (xs)^\omega(x's)^\omega x z
\]
where we write $z \colonequals \rho_2 (s (xs)^\omega(x's)^\omega x)^{\omega-1} y_2$ for
  convenience.
We can therefore apply Claim~\ref{clm:c3} to obtain that:
  \[(x (r y_1 \rho_1 s)) (xs)^\omega(x's)^\omega x z
  \equiv (x' (r y_1 \rho_1 s)) (xs)^\omega (x s) (x's)^{\omega-1} x z.\]
What is more, these two paths are clearly $n',\omega$-equivalent, as they only
  differ in terms of frequent arrows (all arrows in $x$ and $x'$ being frequent)
  and the number of these arrows modulo~$\omega$ is unchanged by the transformation.
This path is of the form given in the statement, taking $y' \colonequals
  y_1 \rho_1  s
  (xs)^\omega (x s) (x's)^{\omega-1} x z$ from which we can extract the right
  $y''$. This concludes the proof.
\end{proof}

\section{Concluding the Proof of the Main Result (Theorem~\ref{thm:locality2})}
  \nosectionappendix
\label{sec:actualproof}

We are now ready to prove the second direction of Theorem~\ref{thm:locality2},
namely Claim~\ref{clm:hard}.
Let us fix the semigroup $S$ in $\LZGp$, write $S_\EE$ its category of
idempotents, and write $B$ for the set of arrows of~$S_\EE$.
Let $p'$ be the period of~$S$: we know that $p'$ divides the
idempotent power $\omega$ of~$S$. Further,
$p'$ divides~$p$: this is
because, for any group element $x$ of~$S$, we have that $x$ belongs to the local
monoid $x^\omega S x^\omega$, hence it belongs to $\ZG_p$ and its period divides
$p$, thus $p'$ is a multiple of the periods of all group elements, hence it
divides~$p$.

Let~$n$ be given by Claim~\ref{clm:choosenspec}.
Our goal is to show that the $n,p$-congruence on~$B^*$ is compatible with~$S_\EE$.
We will in fact show the same for the $n,p'$-congruence, which is coarser than
the $n,p$-congruence because $p'$ divides $p$ (using Claim~\ref{clm:multiple}), so suffices to establish the result.

To do so, we will show that two coterminal paths that are
$n,p'$-equivalent must evaluate to the same category element, by two nested
inductions. We first explain the outer induction, which is 
on the number of rare arrows in the paths, for an $\card{S}$-distant
rare-frequent threshold $n' \leq n$ chosen from the two paths that we consider
(via Claim~\ref{clm:choosenspec}).
The threshold $n'$ is not fixed from the beginning but it depends on the two
paths considered, which is why we will not fix it in the inductive claim.
However, we will only apply
the claim to one value of~$n'$ chosen from the initial paths, i.e., it will not change during the induction.

Formally, we show by induction on the integer $r$ the following: 

\begin{clm}[Outer inductive claim on~$r$]
  \label{clm:outer}
For any two paths $\pi_1$ and $\pi_2$,
for any $\card{S}$-distant rare-frequent threshold $n'$ for $\pi_1$ and $\pi_2$,
if $\pi_1$ and $\pi_2$ are $n',p'$-equivalent and contain $r$ rare arrows each, then we have
$\pi_1 \equiv \pi_2$.
\end{clm}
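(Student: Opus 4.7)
The plan is to prove Claim~\ref{clm:outer} by induction on the integer $r$, the number of rare arrows occurring in $\pi_1$ and $\pi_2$.

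For the base case $r=0$, both paths consist entirely of frequent arrows. By Claim~\ref{clm:usccs}, the frequent arrows of $\pi_1$ and $\pi_2$ lie in a union of SCCs, so each path stays inside a single SCC of frequent arrows. At that point I would hand off to an inner induction, as advertised in the section preamble, performing an ear decomposition of the SCC and reducing the paths one ``ear'' at a time, using the loop insertion lemma (Lemma~\ref{lem:spawnloop}) together with the $\ZG_p$-structure of each local monoid to reconcile letter counts that agree only modulo~$p'$.

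For the inductive step $r \geq 1$, I would use the fact that $n',p'$-equivalent paths share the same rare subword; in particular they begin with the same first rare arrow~$a$. Write $\pi_1 = x_1 a y_1$ and $\pi_2 = x_2 a y_2$, where $x_1$ and $x_2$ contain only frequent arrows and $y_1,y_2$ each contain $r-1$ rare arrows. The natural move is to align the two prefixes by applying the prefix substitution lemma (Lemma~\ref{lem:prefixfrequent}) with prefix $x_1$, replacement $x_2$, and middle portion $a$. This requires checking that some object in the SCC of frequent arrows of the starting object of~$a$ reappears later in~$y_1$; I would derive this from Claim~\ref{clm:usccs} together with the position of~$a$, handling boundary cases (e.g.\ $y_1$ too short) by a direct argument using Claim~\ref{clm:c0} and Lemma~\ref{lem:spawnloop}.

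Once the lemma applies, it yields $\pi_1 \equiv x_2 a y_1'$, where $y_1'$ is $y_1$ with a loop of frequent arrows inserted, and the transformation preserves $n',\omega$-equivalence, hence $n',p'$-equivalence since $p'$ divides $\omega$. By associativity of composition in the category, to conclude $\pi_1 \equiv \pi_2$ it suffices to establish $y_1' \equiv y_2$. The paths $y_1'$ and $y_2$ are coterminal and have $r-1$ rare arrows each, so I would invoke the outer induction hypothesis on them after selecting a fresh $\card{S}$-distant rare-frequent threshold $n''$ for this pair via Claim~\ref{clm:choosenspec}, and checking that $n''$-$p'$-equivalence for $y_1'$ and $y_2$ can be read off from the $n',p'$-equivalence of the original paths (the rare subwords after~$a$ remain equal, and the inserted loop only contributes counts of originally-frequent arrows, which matter only modulo~$p'$).

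The main obstacle will be the base case $r=0$: matching two frequent-only coterminal paths whose counts agree only modulo~$p'$, which requires the full ear-decomposition argument and the combined use of both auxiliary lemmas from Section~\ref{sec:auxpaths}. A secondary, more bureaucratic obstacle is the bookkeeping in the inductive step: verifying the SCC-reappearance hypothesis in every configuration, and ensuring that passing from the full paths to the suffixes $y_1'$ and $y_2$ preserves both a suitable $\card{S}$-distant threshold and the corresponding $n'',p'$-equivalence. It is precisely the permissive quantification over $n'$ in the statement of Claim~\ref{clm:outer} that makes this threshold-switching step available to us.
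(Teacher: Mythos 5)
Your outline matches the paper's skeleton (outer induction on the number of rare arrows, inner ear-decomposition induction for the frequent-only base case), but there is a genuine gap in the inductive step: the threshold-switching move. The quantification over $n'$ in Claim~\ref{clm:outer} is permissive only so that the threshold can be chosen from the \emph{initial} pair of paths; in the induction itself the threshold must stay fixed. If you re-choose a fresh $n''$ for the suffix pair $(y_1',y_2)$ via Claim~\ref{clm:choosenspec}, the rare/frequent partition can change: an arrow that was frequent in $\pi_1,\pi_2$ for $n'$ may occur only a few times in the suffixes and become rare for $n''$ (or rare in one suffix and frequent in the other). Then (i) $n'',p'$-equivalence of the suffixes cannot be ``read off'' from $n',p'$-equivalence of the originals --- for such an arrow the two suffixes agree only modulo~$p'$, whereas the new rare-subword condition would demand literal equality of counts and positions --- and (ii) the number of rare-arrow occurrences of the suffixes under $n''$ may well exceed $r$, so the induction hypothesis is not applicable and the induction measure is broken. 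The paper's proof keeps the original $n'$ throughout and instead \emph{restores} its $\card{S}$-distantness on the suffixes by inserting a loop $\beta^{n''\omega}$ of frequent arrows with Lemma~\ref{lem:spawnloop} (the exponent being a multiple of~$\omega$, hence of~$p'$, this preserves $n',p'$-equivalence), so that every arrow that was frequent remains frequent with at least its original number of occurrences; this bookkeeping is what makes the induction go through, and your proposal has no substitute for it.

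A second gap: the reappearance hypothesis of Lemma~\ref{lem:prefixfrequent} is not a bureaucratic boundary case and does not follow from Claim~\ref{clm:usccs}. If the SCC $C$ of frequent arrows containing the starting object of the first rare arrow~$a$ never reappears after~$a$, then no object of~$C$ occurs in the suffix, and since frequent arrows form a union of SCCs there is no frequent loop from any suffix object into~$C$; hence neither the prefix substitution lemma nor a loop insertion can align the prefixes, and Claim~\ref{clm:c0} does not help either. The paper handles this by a genuinely different argument: in that configuration the arrows of~$C$ occur only in the prefixes $q_1,q_2$ and all other arrows only in the suffixes $s_1,s_2$, so $q_1 \sim_{n',p'} q_2$ are frequent-only coterminal paths handled by the base case, $s_1 \sim_{n',p'} s_2$ are handled by the induction hypothesis (with $n'$ still distant, since their arrow counts are unchanged), and one concludes by compositionality. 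Your proposal needs this second case spelled out; as written, it would fail exactly on those inputs.
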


Once we have established this, we can conclude the proof of
Claim~\ref{clm:hard}. To do so, take any two coterminal paths $\pi_1$ and $\pi_2$ that
are $n,p'$-equivalent.
By Claim~\ref{clm:choosenspec},
we can pick a threshold $n'$ (depending on~$u_1$, $u_2$, and $n$) such that $0 <
n' \leq n$ which is an 
$\card{S}$-distant rare-frequent threshold for~$\pi_1$ and~$\pi_2$.
By Claim~\ref{clm:multiple}, as $\pi_1$ and $\pi_2$ are $n,p'$-equivalent, we know that
they are also $n',p'$-equivalent.
Now that we have fixed the threshold~$n'$, following
Definition~\ref{def:rare}, 
we call an arrow of~$B$ \emph{rare} in~$u_1$ and $u_2$ if it occurs $\leq n'$
times in each,
and \emph{frequent} otherwise. Let $r_0$ be the number of rare arrows in $u_1$
and $u_2$: this number is the same for both, because they are $n',p'$-equivalent.
We now apply the outer inductive claim (Claim~\ref{clm:outer}) on $\pi_1$,
$\pi_2$,
$n'$, and $r = r_0$, to conclude that $\pi_1 \equiv \pi_2$, which concludes the
proof of Claim~\ref{clm:hard}.

It remains to show Claim~\ref{clm:outer}, which we do in the rest of the
section.

\subsection{Outer base case (B): all arrows in~\texorpdfstring{$\pi_1$}{π₁} and~\texorpdfstring{$\pi_2$}{π₂} are frequent.}
We now show Claim~\ref{clm:outer}.
The base case of the outer induction is for paths that contain no rare
arrows:

\begin{toappendix}
  \section{Self-contained proof of Lemma~\ref{lem:graph}}
  \label{apx:base}
\end{toappendix}

\begin{clm}
  \label{clm:base}
  For any two paths $\pi_1$ and $\pi_2$,
  for any $\card{S}$-distant rare-frequent threshold $n'$ for $\pi_1$ and $\pi_2$,
  if $\pi_1$ and $\pi_2$ are $n',p'$-equivalent and contain no rare arrows, then we have
  $\pi_1 \equiv \pi_2$.
\end{clm}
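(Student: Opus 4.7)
The plan is to reduce the statement to a combinatorial decomposition of cycles in the SCC of frequent arrows, then close the argument via applications of the loop insertion and prefix substitution lemmas.

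Since $\pi_1$ and $\pi_2$ use only frequent arrows, Claim~\ref{clm:usccs} implies that the arrow set they use forms a union of strongly connected components of~$S_\EE$, and both paths lie in the same component~$G$. Because each frequent arrow appears strictly more than $n'$ times and $n' \geq \card{S}$ by the choice of threshold, both paths visit every object of~$G$. Let $F$ be the arrow set of~$G$ and consider the difference vector $d \in \mathbb{Z}^F$ of arrow counts: the $n',p'$-equivalence (with no rare arrows to pin anything down exactly) makes every coordinate of~$d$ divisible by~$p'$, while coterminality forces $d$ to be a cycle vector (zero net flow at every vertex) of the directed graph~$G$. A standard graph-theoretic argument then writes $d = p' \sum_i c_i C_i$ for simple loops $C_i$ in~$G$ and integers $c_i \in \mathbb{Z}$, which we split as $c_i = c_i^+ - c_i^-$.

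Next I would use Lemma~\ref{lem:spawnloop} to insert $p' c_i^+$ copies of each~$C_i$ into~$\pi_1$ at a visit to its base object, and symmetrically $p' c_i^-$ copies into~$\pi_2$. The lemma inserts the $\omega$-power $(C_i)^\omega$ rather than a single copy, but since the local monoid is in $\ZGp$ with $p' \mid \omega$, the identity $(C_i)^{\omega+p'} \equiv (C_i)^\omega$ holds there: one insertion of $(C_i)^\omega$ followed by any number of appended copies of $(C_i)^{p'}$ collapses back to $(C_i)^\omega$, and a reverse application of Lemma~\ref{lem:spawnloop} removes the leftover $\omega$-idempotent. Iterating produces paths $\pi_1' \equiv \pi_1$ and $\pi_2' \equiv \pi_2$ with identical arrow multisets.

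The main obstacle is the remaining claim that two coterminal paths $\pi_1', \pi_2'$ in~$G$ with the same arrow multiset satisfy $\pi_1' \equiv \pi_2'$. I would handle this by an inner induction on path length using Lemma~\ref{lem:prefixfrequent}: taking the first arrow $b$ of $\pi_2'$, I would locate a coterminal prefix of~$\pi_1'$ that begins with~$b$ (by concatenating~$b$ with a frequent return path back to the endpoint of the current head of~$\pi_1'$), apply prefix substitution to align that initial arrow, and then absorb the residual frequent loop produced by Lemma~\ref{lem:prefixfrequent} into a nearby pre-existing occurrence using centrality of $\omega$-loops in local monoids (Claim~\ref{clm:c0} and Lemma~\ref{lem:distromega}) together with the period-$p'$ identity. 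The delicate part is checking that each residual loop stays within the frequent SCC and can be absorbed without changing the multiset, so that the mismatch between the two paths shrinks at each step of the inner induction.
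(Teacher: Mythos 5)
Your first reduction (strong connectivity of the arrow graph, decomposing the count-difference vector as $p'$ times an integer combination of directed simple cycles, and inserting exactly $p'c_i^+$ copies of a frequent loop while preserving the category element by going through $(C_i)^{\omega}\equiv(C_i)^{\omega+p'c_i^+}$ and a reverse use of Lemma~\ref{lem:spawnloop}) is plausible. But it only reduces Claim~\ref{clm:base} to the special case of equal multisets, which is not easier: equal multisets of frequent arrows is just a particular instance of the hypothesis ($n',p'$-equivalence with empty rare subword), so all of the difficulty is pushed into your third step, and that step has a genuine gap. Lemma~\ref{lem:prefixfrequent} does not let you align the first arrows ``for free'': it injects an uncontrolled loop $y''$ (and your construction additionally injects the frequent return path used to build the new prefix $x'=b\cdot u$), so the equal-multiset invariant is immediately destroyed. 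There is no mechanism in the available toolkit to ``absorb'' a \emph{single} copy of an arbitrary frequent loop: loop insertion/removal and Claim~\ref{clm:c0} only manipulate loops in $\omega$-power (hence mod-$p'$-neutral) quantities, and a single loop copy evaluates to a non-idempotent local-monoid element in general, so removing or merging it genuinely changes the evaluated category element. Consequently your proposed measure (``the mismatch shrinks'') is not established, and an induction on path length also conflicts with the hypotheses every lemma needs: as you strip aligned arrows, counts drop, arrows stop being frequent, and the distant-threshold and union-of-SCCs properties (Claims~\ref{clm:choosenspec} and~\ref{clm:usccs}) no longer apply.

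For comparison, the paper resolves exactly this tension by inducting not on path length but on the number of \emph{distinct} arrows occurring, using an ear decomposition of the strongly connected multigraph (Lemma~\ref{lem:graph}): it removes an entire simple cycle (or a simple path completed into a cycle by a parallel path and return path) at a time, gathers all its occurrences with Corollary~\ref{cor:zgcommutes}, equalizes their number via the mod-$p'$/period argument, aligns prefixes once with Lemma~\ref{lem:prefixfrequent}, and then re-pads the remaining paths with a loop $\beta^{n''\omega}$ (Lemma~\ref{lem:spawnloop}) so that $n'$ remains an $\card{S}$-distant threshold before invoking the induction hypothesis on the smaller graph. Your sketch would need an analogous decreasing quantity and an analogous re-padding step, plus a correct way to dispose of the residual loops; as written, the key absorption step fails.
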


We prove this result by an induction over the number of different arrows that
occur in the paths $\pi_1$ and $\pi_2$, noting that thanks to
$n',p'$-equivalence an arrow occurs in $\pi_1$ iff it occurs in $\pi_2$. Specifically,
we show by induction on the integer $f$ the following:

\begin{clm}[Inner inductive claim on~$f$]
  \label{clm:inner}
  For any two paths $\pi_1$ and $\pi_2$,
  for any $\card{S}$-distant rare-frequent threshold $n'$ for $\pi_1$ and $\pi_2$,
  if $\pi_1$ and $\pi_2$ are $n',p'$-equivalent and contain no rare arrows and there
  are $\leq f$
different frequent arrows that occur, then we have $\pi_1 \equiv \pi_2$.
\end{clm}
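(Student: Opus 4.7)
The plan is induction on~$f$, with the base cases $f \leq 1$ handled directly from the $\ZGp$ equation of local monoids, and the inductive step exploiting an ear decomposition of the SCC of frequent arrows via Lemma~\ref{lem:spawnloop} (loop insertion) and Lemma~\ref{lem:prefixfrequent} (prefix substitution).

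For the base case, when $f = 0$ both paths are empty and the claim is vacuous; when $f = 1$ they take the form $\pi_i = a^{k_i}$ for a single arrow~$a$. Coterminality forces $a$ to be a loop at some idempotent~$e$. Since $S \in \LZGp$, the local monoid $eSe$ is in $\ZGp$, so every group element has period dividing~$p$, hence dividing~$p'$ by definition of the semigroup period. Both $k_1, k_2$ exceed $n' \geq \card{S} \geq \omega$ and satisfy $k_1 \equiv k_2 \pmod{p'}$, so $a^{k_1} = a^{k_2}$ in $eSe$ and $\pi_1 \equiv \pi_2$.

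For the inductive step with $f \geq 2$, Claim~\ref{clm:usccs} tells us that the frequent arrows form a union of SCCs, and coterminality of $\pi_1, \pi_2$ confines them to a single strongly connected subgraph~$G$. I would fix an ear decomposition of~$G$ and isolate an arrow~$a$ on the last ear whose removal leaves a subgraph $G'$ still strongly connected on the set of objects visited by~$\pi_1$ and~$\pi_2$. The aim is to rewrite both paths into coterminal paths $\pi_1', \pi_2'$ avoiding~$a$ while preserving both $S_\EE$-equivalence and $n',p'$-equivalence, so that the inductive hypothesis (with $\leq f-1$ distinct frequent arrows) gives $\pi_1' \equiv \pi_2'$ and hence $\pi_1 \equiv \pi_2$. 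The rewriting processes occurrences of~$a$ one at a time: for each one, Lemma~\ref{lem:prefixfrequent} substitutes the prefix ending at that~$a$ by a coterminal prefix routed through~$G'$, at the cost of a loop inserted deeper in the path; this inserted loop is then regularized using Lemma~\ref{lem:spawnloop}. Performing identical substitutions in both paths preserves $n',p'$-equivalence, since the two lemmas preserve $n',\omega$-equivalence and $p'$ divides~$\omega$ (so we can conclude via Claim~\ref{clm:multiple}).

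The main obstacle is the choice of~$a$ together with its detour: if the last ear of the decomposition has interior vertices visited by the paths, naively dropping an arrow incident to such a vertex would disconnect~$G'$. The ear decomposition guarantees that some arrow on the last ear can safely be removed without violating strong connectivity on the visited objects, but making this precise requires distinguishing between open and closed ears and tracking which of their arrows actually occur in~$\pi_1$ or~$\pi_2$. A secondary technical point is to verify that the cascade of prefix substitutions does not accidentally demote a frequent arrow below the threshold; this is ensured by the $\card{S}$-distance property of~$n'$, which gives enough slack between rare and frequent letter counts to absorb the bounded perturbations introduced by each application of the two lemmas.
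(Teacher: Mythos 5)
Your base cases and your appeal to the ear decomposition, Lemma~\ref{lem:prefixfrequent} and Lemma~\ref{lem:spawnloop} are in the right spirit, but the core of your inductive step has a genuine gap: you propose to rewrite $\pi_1$ and $\pi_2$ into coterminal paths that \emph{avoid} the chosen arrow $a$ altogether while preserving $S_\EE$-equality, and such a rewriting does not exist in general. $S_\EE$-equality fixes the evaluated category element, and that element may simply not be attainable without $a$: take $S = \ZZ/2\ZZ$ (a group of period $2$, hence in $\LZGp$ for $p=2$), whose category of idempotents has a single object and the two arrows $1$ and $g$; a path whose evaluation is $g$ cannot be $S_\EE$-equal to any path avoiding $g$, yet the ear decomposition may well designate the self-loop $g$ as the arrow to remove. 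The lemmas you invoke cannot do this job either: loop insertion only \emph{adds} frequent arrows, and prefix substitution reinserts the substituted prefix inside the loop $y''$ it creates, so occurrences of $a$ are shuffled around, never eliminated. A symptom of the problem is that your plan never uses the congruence modulo~$p'$ of the number of occurrences of the removed arrow, whereas this is exactly the information the statement hinges on.

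The paper's proof avoids this by never removing the ear's arrows from the paths. After making the multigraph's edge count the induction parameter and splitting into the three cases of Lemma~\ref{lem:graph}, it (i) uses Lemma~\ref{lem:prefixfrequent} only to force the two paths to share the \emph{same} prefix $x_1 = y_1$, (ii) uses Corollary~\ref{cor:zgcommutes} to regroup all scattered occurrences of the removable cycle $\alpha$ into a single power, and then uses $n',p'$-equivalence (same count mod~$p'$, both counts $\geq \omega$) to replace these powers by one \emph{identical} factor $\alpha^r$ in both paths, and (iii) applies the induction hypothesis only to the suffixes, which genuinely avoid the arrows of~$\alpha$, after padding them with a loop $\beta^{n''\omega}$ via Lemma~\ref{lem:spawnloop} so that $n'$ remains an $\card{S}$-distant threshold; compositionality then concludes. (The removable-path case is reduced to the cycle case by inserting $(\rho\kappa)^\omega$ loops after each occurrence of the path.) Note also that your claim that the distance property of $n'$ automatically ``absorbs the perturbations'' is not enough: after cutting off the common prefix, arrows confined to it can drop below the threshold, which is precisely why the explicit $\beta^{n''\omega}$ re-padding step is needed.
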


Once we have established this, we can conclude the proof of the base case of the
outer induction, Claim~\ref{clm:base}. To do so, take any two coterminal paths
$\pi_1$ and $\pi_2$ and an $\card{S}$-distant rare-frequent threshold $n'$ such that
$\pi_1$ and $\pi_2$ are $n',p'$-equivalent, let $f_0$ be the number of frequent arrows
that occur in $\pi_1$ and in $\pi_2$, and conclude using Claim~\ref{clm:inner} with
$\pi_1$, $\pi_2$, $n'$, and $f = f_0$.

We now explain the proof of the inner inductive claim (Claim~\ref{clm:inner}), before moving on to the
inductive case of the outer induction.
The base case of Claim~\ref{clm:inner} with $f = 0$ is trivial as $\pi_1$ and
$\pi_2$ are then empty.
For the induction step,
assume that the claim holds for any $\pi_1$, $\pi_2$, and $n'$ such that there are
$\leq f$ different frequent arrows that occur.
Fix $\pi_1$, $\pi_2$, and $n'$ where there are $\leq f+1$ frequent arrows that
occur. Consider the multigraph $G$ of all arrows 
of~$S_\EE$
that occur in $\pi_1$ and $\pi_2$: $G$ has $\leq f+1$ edges.
If it has $\leq f$ edges we immediately conclude by induction hypothesis, so
assume it has exactly $f+1$ edges. 
Recall that $G$ is strongly connected:
indeed, as all arrows are frequent,
thanks to the existence of the $\card{S}$-distant rare-frequent threshold $n'$
(which is in particular 1-distant),
we know by Claim~\ref{clm:usccs}
that $G$ is a union of SCCs. Further, $\pi_1$
(or $\pi_2$) is a path where all edges of~$G$ occur, so it witnesses that $G$
is connected, and $G$ is therefore strongly connected.
The induction case is shown using a so-called \emph{ear decomposition} result on strongly connected
multigraphs:

\begin{lemmarep}[\cite{bang2008digraphs}]
\label{lem:graph}
  Let $G$ be a strongly connected nonempty directed multigraph. 
  We have:
  \begin{itemize}
    \item $G$ is a simple cycle; or
    \item $G$ contains a simple cycle $v_1 \rightarrow \cdots \rightarrow v_n
      \rightarrow v_1$
      with $n\geq 1$, where all vertices $v_1, \ldots, v_n$ are pairwise
      distinct, such that all intermediate vertices $v_2, \ldots, v_{n-1}$ only
      occur in the edges of the cycle, and such that the removal of the cycle
      leaves the graph strongly connected (note that the case $n=1$ corresponds
      to the removal of a self-loop); or
    \item $G$ contains a simple path $v_1 \rightarrow \cdots \rightarrow v_n$
      with $n \geq 2$ where
      all vertices are pairwise distinct, such
      that all intermediate vertices $v_2, \ldots, v_{n-1}$ only occur in the
      edges of the path, and such that the removal of the path leaves the graph
      strongly connected (note that the case $n=2$ corresponds to the removal of
      a single edge).
  \end{itemize}
\end{lemmarep}

\begin{toappendix}
  We repeat here that the result is standard. The proof given below is only for
  the reader's convenience, and follows~\cite{bang2008digraphs}.

\begin{proof}
  This result is showed using the notion of an ear decomposition of a directed
  multigraph. Specifically, following Theorem~5.3.2 of~\cite{bang2008digraphs},
  for any nonempty strongly connected multigraph $G$, we can build a copy of it
  (called~$G'$) by the following sequence of steps, with the invariant that~$G'$
  remains strongly connected:
  \begin{itemize}
    \item First, take some arbitrary simple cycle in~$G$ and copy it to~$G'$;
    \item Second, while there are some vertices of~$G$ that have not been copied
      to~$G'$, then pick some vertex $v$ of~$G$ that was not copied, such that
      there is an edge~$(v', v)$ in~$G$ with~$v'$ a vertex that was copied. Now
      take some shortest path (hence a simple path) $v \rightarrow \cdots
      \rightarrow v''$ from~$v$ to the subset of the vertices of~$G$ that had
      been copied to~$G'$. This path ends at a vertex~$v''$ which may or may not
      be equal to~$v'$. If $v'' \neq v'$, then we have a simple path $v'
      \rightarrow v \rightarrow \cdots \rightarrow v''$, which we copy to~$G'$;
      otherwise we have a simple cycle, which we copy to~$G'$. Note that, in
      both cases, all intermediate vertices in the simple path or simple cycle
      that we copy only occur in the edges of the path or cycle (as they had not
      been previously copied to~$G'$). Further, $G'$ clearly remains strongly
      connected after this addition.
    \item Third, once all vertices of~$G$ have been copied to~$G'$, take each
      edge of~$G$ that has not been copied to~$G'$ (including all self-loops),
      and copy it to~$G'$ (as a simple path of length~$1$). These additions
      preserve the strong connectedness of~$G'$.
  \end{itemize}
  At the end of this process, $G'$ is a copy of~$G$.

  Now, to show the result, take the graph~$G$, consider how we can
  construct it according to the above process, and distinguish three cases:
  \begin{itemize}
    \item If the process stopped at the end of the first step, then $G$ is a
      simple cycle (case 1 of the statement).
    \item If the process stopped after performing a copy in the second step,
      then considering the last simple path or simple cycle that we added, then
      it satisfies the conditions and its \emph{removal} from~$G$ gives a graph
      which is still strongly connected (case 2 or case 3 of the statement).
    \item If the process stopped after performing a copy in the third step, then
      considering the last edge that we added, then it is a simple path of
      length~$1$ and its \emph{removal} from~$G$ gives a graph which is still
      strongly connected (case 3 of the statement).
  \end{itemize}
  This concludes the proof.
\end{proof}
\end{toappendix}

This is a known result~\cite{bang2008digraphs}, but we give a self-contained
proof in Appendix~\ref{apx:base} for the reader's convenience.

Thanks to this result, we can distinguish three cases in the inner induction
step: if $G$ is non-empty, it must be a simple cycle, contain a removable simple cycle, or
contain a removable simple path.  We first give a high-level view of the argument in each
case:

\begin{proofsketch}
  The first case (B.1) is when
$G$ is a simple cycle. In this case $n',p'$-equivalence ensures that the
cycle is taken by $\pi_1$ and~$\pi_2$ some number of times with the same remainder
modulo~$p'$, so they evaluate to the same element because $p'$ is a multiple of
the period.

  The second case (B.2) is when $G$ contains a removable simple cycle. This time, we argue as in the previous case that the number of
occurrences of the cycle must have the same remainder, and we can use
Corollary~\ref{cor:zgcommutes} to merge all the occurrences together. However,
to eliminate them, we need to use Lemma~\ref{lem:prefixfrequent}, to modify~$\pi_1$
and~$\pi_2$ to have the same prefix (up to and including the cycle occurrences),
while preserving equivalence. This allows us to consider the rest of the paths
(which contains no occurrence of the cycle), apply the induction hypothesis to them, and conclude by compositionality.
A technicality is that we must ensure that removing the common prefix does not
make some arrows insufficiently frequent relative to the distant rare-frequent
threshold. We avoid this using Lemma~\ref{lem:spawnloop} to
insert sufficiently many copies of a suitable loop.

  The third case (B.3) is when $G$ contains a removable simple path $\tau$. The
reasoning is similar, but we also use Lemma~\ref{lem:spawnloop} to insert a loop
involving a return path for $\tau$ and a path that is parallel to~$\tau$ (i.e.,
does not share any arrows with it). The return path in this loop can then be
combined with~$\tau$ to form a loop, which we handle like in the previous case.
\end{proofsketch}

We now give the detailed argument for each case.

\paragraph*{Case B.1: $G$ is a simple cycle.}
If $G$ is a simple cycle, then distinguish the starting object of $\pi_1$
(hence, of $\pi_2$) as $o$, and
let $\alpha$ be the 
cycle from $o$ to
  itself, and $\tau$ the 
  path from $o$ to
  the common ending object of~$\pi_1$ and~$\pi_2$. We have: $\pi_1 = \alpha^{n_1}
  \tau$ and $\pi_2 = \alpha^{n_2} \tau$ with
$n_1$ and $n_2$ being $\geq n'-1$ and having the same remainder~$r$ modulo~$p'$.
We use the loop insertion lemma (Lemma~\ref{lem:spawnloop}) to insert
$\alpha^\omega$: the lemma tells us that
$\pi_1$ and $\pi_2$ respectively evaluate to the same category element as 
$\pi_1' = \alpha^{\omega+n_1}$  and 
$\pi_2' = \alpha^{\omega+n_2}$. Further, $\pi_1$ and $\pi_1'$, and $\pi_2$ and
$\pi_2'$, are $n',p'$-equivalent.

As $\alpha$ is a loop on the idempotent $o$, we know that
there exists an element $m\in oSo$ such that
$\alpha\equiv (o, m, o)$.
Hence, $\pi_1' = \alpha^{\omega+n_1} \tau \equiv (o, m^{\omega+n_1}, o) \tau$
(resp.\ $\pi_2'=
\alpha^{\omega+n_2} \tau \equiv (o, m^{\omega+n_2},
o) \tau$).
Now, we have
$m^{\omega+n_1} = m^{\omega+k_1\cdot p' + r}$ and
$m^{\omega+n_2} = m^{\omega+k_2\cdot p' + r}$ where $k_1$ and $k_2$ are the respective
quotients of $n_1$ and $n_2$ in the Euclidean division by~$p'$ and where 
 $r$ is the common remainder modulo~$p'$.
 By definition of the period~$p'$, we have $m^{\omega+n_1} = m^{\omega + r}$ and
$m^{\omega+n_2} = m^{\omega + r}$. We conclude that $m^{\omega+n_1} =
m^{\omega+n_2}$, so that
$\pi_1' \equiv \pi_2'$, and $\pi_1 \equiv \pi_2$. This
 concludes case B.1.

\paragraph*{Case B.2: $G$ has a simple cycle.}
Recall that, in this case, we know that $G$ has a simple cycle whose intermediate objects have no other incident edges and such that the removal of the simple cycle leaves the graph strongly connected.
Let $\alpha$ be the simple cycle, starting from the only object $o$ of the cycle having other incident edges.
We can then decompose $\pi_1$ and $\pi_2$ to isolate the occurrences of the simple
cycle (which must be taken in its entirety), i.e.:
\begin{align*}
		\pi_1 & = x_1 \alpha x_2 \alpha x_3 \cdots x_{t-1} \alpha x_t\\
		\pi_2 & = y_1 \alpha y_2 \alpha y_3 \cdots y_{t'-1} \alpha y_{t'}
\end{align*}
This ensures that the edges of~$\alpha$ do not occur elsewhere than in the $\alpha$
factors, except possibly in~$x_1$, $y_1$ and in $x_t$, $y_{t'}$ if the paths
$\pi_1$ and/or $\pi_2$ start and/or end in the simple cycle. This being said, in that case,
we know that the prefixes of~$\pi_1$ and $\pi_2$ containing this incomplete subset
of the cycle must be equal (same sequence of arrows), and likewise for their suffixes.
For this reason, it suffices to show the claim that $\pi_1$ and $\pi_2$ evaluate to
the same category element under the assumption that both their starting and
ending objects are not intermediate vertices of the cycle. The claim then
extends to the general case, by adding the common prefixes and suffixes
to the two paths that satisfy the condition, using compositionality of the
congruence. Thus, in the rest of the proof for this case, we
assume that the edges of $\alpha$ only occur in the~$\alpha$ factors.

We will now argue that, to show that $\pi_1 \equiv \pi_2$,
it suffices to show the same of two $n',p'$-equivalent coterminal paths from which all
occurrences of the edges of the cycle have been removed and where all other
edges still occur sufficiently many times. As this deals with paths where the underlying
multigraph contains fewer edges, the induction hypothesis will conclude.

To do this, by Lemma~\ref{lem:prefixfrequent}, as $x_1$ and $y_1$ are coterminal and
consist only of frequent arrows, and as the starting object of~$\alpha$ occurs
again in both paths, the path $\pi_1$ is $n',\omega$-equivalent,
and evaluates to the same category element as, some path:
\[\pi_1' = y_1 \alpha x_2' \alpha x_3' \cdots x_{t''-1}' \alpha x_{t''}'\]
The above is also $n',p'$-equivalent to~$\pi_1$, because $p'$ divides~$\omega$.
Thus, up to replacing $\pi_1$ by~$\pi_1'$, we can assume that $x_1 = y_1$.

Now, furthermore, $x_2,\ldots,x_{t-1}$ (resp.\ $y_2,\ldots,y_{t'-1}$) and $\alpha$ are coterminal
cycles over the object $o$ (which by definition corresponds to an idempotent of~$S$). Hence,
$\alpha x_2 \alpha x_3 \cdots x_{t-1} \alpha \equiv (o, m m_2 m m_3 \cdots m_{t-1} m, o)$
where $\alpha \equiv (o, m, o)$, $x_i \equiv (o, m_i, o)$ for $2 \leq i \leq t-1$ and
where $m$ and all $m_i$'s are in $oSo$, which is by hypothesis
a monoid in $\ZG$.
Now, 
as the arrows of $\alpha$ are frequent,
each one of them must occur $>n'$ times, so $\alpha$ (which contains exactly one
occurrence of each of these arrows) must occur $>n'$ times, and as $n'$ is a
$\card{S}$-distant rare-frequent threshold we clearly have by
Definition~\ref{def:distant} that $n' \geq \card{S}$, so $\alpha$ occurs $\geq
\card{S}+1$ times. Hence,
by Corollary~\ref{cor:zgcommutes}, we know that $m m_2 m m_3 \cdots m_{t-1}
m = m^{t-1} m_2 m_3 \cdots m_{t-1}$, because $\card{S}+1 \geq \card{oSo}+1$.
By applying the same reasoning to $\pi_2$, it suffices to show that the two following paths evaluate to the same category element, where $x_1 = y_1$:
\begin{align*}
		x_1 \alpha^{t-1} x_2 x_3 \cdots x_{t-1} x_t\\
		y_1 \alpha^{t'-1} y_2 y_3 \cdots y_{t'-1} y_{t'}
\end{align*}
Now, because these two paths are $n',p'$-equivalent, we know that $t-1$ and $t'-1$
have the same remainder modulo~$p'$. By the same reasoning as in case~B.1, they evaluate to the same category element as $\alpha^r$, where $r$ is the remainder.
So it suffices to show that the two following paths evaluate to the same category element, with $x_1 = y_1$:
\begin{align*}
		x_1 \alpha^r x_2 x_3 \cdots x_{t-1} x_t\\
		y_1 \alpha^r y_2 y_3 \cdots y_{t'-1} y_{t'}
\end{align*}
We now intend to use the induction hypothesis, but for this, we need to ensure
that $n'$ is still an $\card{S}$-distant rare-frequent threshold on the paths to
which we apply it. Specifically, we need to ensure that the edges not in
$\alpha$ still occur sufficiently many times.
To this end, let $\beta$ be any loop on~$o$ that visits all edges of~$G$ except the ones
in~$\alpha$: this is doable because $G$ is still strongly connected after the
removal of~$\alpha$. Up to exponentiating $\beta$ to some power
$\beta^{n''\omega}$, we can assume that $\beta$
traverses each edge sufficiently many times to satisfy the lower bound imposed
by the requirement of~$n'$ being an $\card{S}$-distant rare-frequent threshold.
By Lemma~\ref{lem:spawnloop}, it suffices to show that the
following paths evaluate to the same category element:
\begin{align*}
  \pi_1' & = x_1 \alpha^r \beta^{n''\omega} x_2 x_3 \cdots x_{t-1} x_t\\
  \pi_2' & = y_1 \alpha^r \beta^{n''\omega} y_2 y_3 \cdots y_{t'-1} y_{t'}
\end{align*}
Note that $n',p'$-equivalence is preserved because $n''$ is a multiple
of~$\omega$, hence of~$p'$.
Now, observe that both paths start by $x_1 \alpha = y_1 \alpha$, and the arrows
of~$\alpha$ do not occur in the rest of the paths. Consider the paths
$\beta^{n''\omega}x_2 x_3 \ldots x_t$ and $\beta^{n''\omega}y_2 y_3 \ldots y_{t'}$.
 They are paths that are coterminal, $n',p'$-equivalent because $\pi_1$ and $\pi_2$
 were, where the frequent letters that are used are a strict subset of the ones used
 in~$\pi_1$ and~$\pi_2$, and where all other frequent letters occur sufficiently many
 times for $n'$ to still be an $\card{S}$-distant rare-frequent threshold (as
 guaranteed by $\beta^{n''\omega}$). Thus,
 by induction hypothesis of the inner induction, we know that these two paths evaluate to the same
 category element, so that $\pi_1'$ and $\pi_2'$ also do. This concludes case B.2.

\paragraph*{Case B.3: $G$ has a simple path.}
Recall that, in this case, we know that $G$ has a simple path where the starting
and ending objects of intermediate arrows have no other incident edges, and such that the removal of the simple path leaves the graph strongly connected.
We denote the path by $\tau$ and denote by $x \neq y$ its starting and ending objects.
Since the removal of the path does not affect strong connectedness of the
graph, there is a simple path from $x$ to $y$ sharing no edges with $\tau$, which
we denote by~$\kappa$.
Furthermore, there is a simple path from~$y$ to~$x$ sharing no edges
with~$\tau$ (this is because all intermediate objects of~$\tau$ only occur in
the edges of~$\tau$), which we denote by~$\rho$.

Like in the previous case, up to removing common prefixes and suffixes, it
suffices to consider the case where $\pi_1$ and $\pi_2$ do not start or end in the
intermediate vertices of~$\tau$.
For that reason, isolating all occurrences of~$\tau$ also isolates all
occurrences of the edges of~$\tau$, and we can write:
\begin{align*}
        \pi_1 & = x_1 \tau x_2 \tau x_3 \cdots x_{t-1} \tau x_t\\
        \pi_2 & = y_1 \tau y_2 \tau y_3 \cdots y_{t'-1} \tau y_{t'}
\end{align*}
where the $x_i$ and $y_i$ do not use the edges of~$\tau$.
Like in the previous case, by Lemma~\ref{lem:prefixfrequent}, we can assume that $x_1 = y_1$.

By Lemma~\ref{lem:spawnloop},
we insert a loop $(\rho\kappa)^{\omega}$ after every
occurrence of $\tau$
without changing the category element and still respecting the
$n',\omega$-congruence,
hence the $n',p'$-congruence because $p'$ divides~$\omega$.
By expanding $(\rho\kappa)^{\omega}=\rho\kappa(\rho\kappa)^{\omega-1}$, it suffices to show that the following paths evaluate to the same category element, with $x_1 = y_1$:
\begin{align*}
        \pi_1' & = x_1 \tau \rho \kappa (\rho\kappa)^{\omega-1} \cdots  x_{t-1} \tau \rho
        \kappa (\rho\kappa)^{\omega-1}  x_t\\
        \pi_2' & = y_1 \tau \rho \kappa (\rho\kappa)^{\omega-1} \cdots y_{t'-1} \tau \rho
        \kappa (\rho\kappa)^{\omega-1}  y_{t'}
\end{align*}

We can now regroup the occurrences of $\tau \rho$, which are loops such that
some edges (namely, the edges of~$\tau$) only occur in these factors.
This means that we can conclude as in case 2 for the cycle $\tau \rho$, as this
cycle contains some edges that only occur there; we can choose~$\beta$ at the
end of the proof to be a loop on~$x$ visiting all edges of~$G$ except those
of~$\tau$,
which is again possible because $G$ is still strongly connected even after the
removal of~$\tau$.

This establishes case 3 and concludes the induction step of the proof,
establishing Claim~\ref{clm:inner}.

We have thus proved by induction that $\pi_1$ and $\pi_2$ evaluate to the same
category element, in the base case (Claim~\ref{clm:base}) of the outer induction
(Claim~\ref{clm:outer}) where all edges of~$\pi_1$ and~$\pi_2$ are frequent.

\subsection{Outer induction step (I): some arrows are rare.}

Let us now show the induction step for the outer induction
(Claim~\ref{clm:outer}), namely, the
induction on the number of occurrences of rare arrows.
We assume the claim of the outer induction for $r \in \NN$.
Consider two paths $\pi_1$ and $\pi_2$ and an $\card{S}$-distant rare-frequent
threshold $n'$ such that $\pi_1$ and $\pi_2$ are $n',p'$-equivalent and such that they contain
$r+1$ rare letters. Let us partition them as $\pi_1 = q_1 a s_1$
and
$\pi_2 = q_2 a s_2$
where $q_1$ and $q_2$ all consist of frequent arrows, and $a$ is the first rare
arrow of $\pi_1$ and $\pi_2$ (note that $n',p'$-equivalence implies that the first rare
arrow is the same in both paths).
In this case, $q_1$ and $q_2$ are two coterminal paths consisting only of
frequent arrows (or they are empty), and $s_1$ and $s_2$ are two coterminal
paths (possibly empty) with $r$ rare letter occurrences.

Remember that, as $n'$ is an $\card{S}$-distant rare-frequent threshold for
$\pi_1$
and $\pi_2$, then we know that the frequent arrows of $\pi_1$ form a union of SCCs
(Claim~\ref{clm:usccs}); note that, thanks to
$n',p'$-equivalence, the same is true of~$\pi_2$ with the same SCCs.
Consider the SCC $C$ of frequent arrows that contains the starting object of~$a$. There are two cases, depending on whether some object of~$C$ occurs again in~$s_1$ or not.
Note that some object of~$C$ occurs again in~$s_1$ iff the same is true
of~$s_2$, because which frequent arrow components occur again is entirely
determined by the ending objects of the rare arrows of~$s_1$ and $s_2$, which
are identical thanks to $n',p'$-equivalence.

\paragraph*{Case I.1: $C$ occurs again after~$a$.}
In this case, we are in a situation where we can apply Lemma~\ref{lem:prefixfrequent}, because $q_1$ and $q_2$ only
consist of frequent arrows and some object of the SCC~$C$ of the starting object
of~$a$ occurs again in~$s_1$. The lemma tells us that
there is a path:
\[
\pi_1' = q_2 a s_1'
\]
which evaluates to the same category element as~$\pi_1$ and is $n',\omega$-equivalent to
it, hence $n',p'$-equivalent.
Hence, by compositionality, it suffices to show that $s_1'$ and $s_2$ evaluate to the same category element.

To apply the induction hypothesis, we simply need to ensure that $n'$ is still a
$\card{S}$-distant rare-frequent threshold for~$s_1'$ and~$s_2$. To do this, we
need to ensure that the arrows that are frequent in~$\pi_1$ and~$\pi_2$ are still
frequent there, and still satisfy the $\card{S}$-distant condition. Fortunately,
we can simply ensure this by inserting a loop using Lemma~\ref{lem:spawnloop}.
Formally, write $s_1' = r_1 t_1$ where the intermediate object is the object of the
SCC~$C$ that occurred in~$s_1$ (the existence of such a decomposition is
a consequence of the statement of Lemma~\ref{lem:prefixfrequent}), and write $s_2 = r_2
t_2$ in the same way (which we already discussed must be possible with~$s_2$).
Let $\beta$ be an arbitrary loop of frequent arrows where all arrows of~$C$ occur:
this is possible because $C$ is strongly connected. We know by
Lemma~\ref{lem:spawnloop} that 
$s_1' = r_1 t_1$ and $r_1 \beta^{\omega} t_1$ are both
$n',\omega$-equivalent, hence $n',p'$-equivalent, and evaluate to the same
category element: this is also true with
$w_1 \colonequals r_1 \beta^{n'' \omega} t_1$ for a sufficiently large~$n''$ such that every frequent arrow
of~$C$ occurs as many times as it did in~$\pi_1$. Likewise, $s_2 = r_2 t_2$ and
$w_2 \colonequals r_2 \beta^{n'' \omega} t_2$ are both $n',p'$-equivalent and evaluate to the same category
element. So it suffices to consider $w_1$ and $w_2$.

  Let us apply the induction hypothesis to them. They are two coterminal paths,
and they are $n',p'$-equivalent because $w_1 \sim_{n',p'} \pi_1' \sim_{n',p'} \pi_1$ and $w_2
\sim_{n',p'} \pi_2$ and by hypothesis $\pi_1 \sim_{n',p'} \pi_2$. What is more, the arrows
that were rare in~$\pi_1$ and~$\pi_2$ are still rare for them, and they have $r$
occurrences in total: this was true by construction of $s_1$ and $s_2$ and is
true of~$s_1'$ because $\pi_1 = q_1 a s_1 \sim_{n',p'} q_2 a s_1'$ and all arrows
of~$q_2$ are frequent so the rare subwords of~$s_1$ and~$s_1'$ are the same.
The arrows that were frequent in~$\pi_1$ and~$\pi_2$ are still frequent in $s_1$ and
$s_2$ and occur at least as many times as they did in~$\pi_1$ and $\pi_2$
respectively: we have guaranteed this for the arrows of~$C$ by inserting
$\beta^{n''\omega}$, and this is clear for the arrows outside of~$C$ as
all their occurrences in~$\pi_1$ and~$\pi_2$ were in $s_1$ and $s_2$ respectively,
and $s_1'$ has at least as many occurrences of every letter as $s_1$ does (this
is a consequence of the statement of Lemma~\ref{lem:prefixfrequent}). This ensures
that $s_1' \sim_{n',p'} s_2$, and that $n'$ is still an $\card{S}$-distant
rare-frequent threshold for them.

Hence, by the induction hypothesis, we have $s_1' \equiv s_2$, so that by
compositionality we have $\pi_1 \equiv \pi_2$.

\paragraph*{Case I.2: $C$ does not occur again after~$a$.}
In this situation, we cannot apply Lemma~\ref{lem:prefixfrequent}. However,
intuitively, the arrows visited in $q_1$ and in $q_2$ must be disjoint from
those visited in the rest of the paths, so we can independently reason on $q_1$
and $q_2$, and on $s_1$ and $s_2$.

Formally, we first claim that $q_1 \equiv q_2$ by the base case (Claim~\ref{clm:base}) of the
outer induction (Claim~\ref{clm:outer}). Indeed, first note that they are two coterminal paths. Now,
there are two cases: every arrow $x$ which is frequent in $\pi_1$ and $\pi_2$ is
either in the SCC~$C$ of the
starting object of~$a$ or not. In the first case, all the occurrences of~$x$
in $\pi_1$ must be in~$q_1$, as any occurrence of~$x$ in $s_1$ would witness that
we are in Case I.1; and likewise all its occurrences in~$\pi_2$ must be in~$q_2$.
In the second case, all its occurrences in $\pi_1$ must be in $s_1$ and all
its occurrences in~$\pi_2$ must be in~$s_2$, for the same reason. Thus, $q_1$ and
$q_2$ contain no letter which was rare in~$\pi_1$ and~$\pi_2$, some of the frequent
letters of~$\pi_1$ and~$\pi_2$ (those of the other SCCs) do not occur there at all,
and the others occur there with the same number of occurrences. Thus indeed $q_1
\sim_{n',p'} q_2$, they contain no rare arrows, and $n'$ is still
an~$\card{S}$-distant rare-frequent threshold for them. Thus, the base case of
the outer induction concludes that they
evaluate to the same category element.

We now claim that $s_1 \equiv s_2$ by the induction case of the outer induction
(Claim~\ref{clm:outer}).
Indeed, they are again two coterminal paths. What is more, by the previous
reasoning, the arrows that are frequent in~$\pi_1$ and~$\pi_2$ either occur only
in~$s_1$ and~$s_2$ or do not occur there at all. Thus, $s_1$ and $s_2$ contain
$r$ rare arrows (for the arrows that were already rare in~$\pi_1$ and~$\pi_2$),
and the frequent arrows either occur in~$s_1$ and $s_2$ with the same number of
occurrences as in~$\pi_1$ and~$\pi_2$ or not at all. This implies that $n'$ is still
an $\card{S}$-distant rare-frequent threshold for~$s_1$ and~$s_2$. Thus, we have
$s_1 \sim_{n',p'} s_2$ and the induction case of the outer induction establishes
that $s_1 \equiv s_2$.

Thus, by compositionality, we know that $\pi_1$ and $\pi_2$ evaluate to the same category element.
We have concluded both cases of the outer induction proof and shown
Claim~\ref{clm:outer}.

\subsection{Concluding the proof.}
We have proven Claim~\ref{clm:outer}, and explained afterwards how to use it to
show that $\pi_1$ and $\pi_2$ evaluate to the same category element.
This implies that $n',p'$-equivalence for our choice of~$n'$, hence also
$n,p$-equivalence, is compatible with~$S_\EE$. Thus, by
Corollary~\ref{cor:straubing} we know that $L$ is in $\ZGD_p$. Hence, $L \in
\LZG_p$
implies that $L \in \ZGD_p$, so we have shown Claim~\ref{clm:hard}. Together with
Claim~\ref{clm:easy}, it establishes the locality result $\LZG_p = \ZGD_p$,
and we have shown Theorem~\ref{thm:locality2}.

\section{Conclusion}
  \nosectionappendix
\label{sec:conclusion}
In this paper, we have given a characterization of the languages of~$\ZG$, and
proved that the variety $\ZG$ is local. More specifically, we have shown this
for all the varieties $\ZG_p$ for $p>0$, in particular $\MNil = \ZG_1 =\ZG\cap
\mathbf{A}$.

A natural question for further study is whether the variety~$\ZE$ is also local.
This question seems more complicated. Indeed, as proved by
Almeida~\cite{almeida1994finite}, we have $\ZE=\mathbf{G}\lor\Com$,
that is, $\ZE$ is the variety of monoids generated by the variety $\mathbf{G}$
of groups and the variety $\mathbf{Com}$ of commutative languages.
Now, $\mathbf{G}$ is a local variety~\cite[Example~1.3]{therien85},
while $\mathbf{Com}$ is not \cite[Example~1.4]{therien85}.
Further, as we have shown that $\LZG = \ZGD$ and $\Com$ is a subset of $\ZG$,
the counter-example languages (e.g., $e^*af^*be^*cf^*$) to the locality of
$\Com$ (i.e., that are in $\mathbf{LCom}$ but not in $\Com*\D$) cannot be
counter-examples to the locality of~$\ZE$ (because they are in $\mathbf{LCom}$,
hence $\LZG$, hence $\ZGD$, hence $\ZE*\D$).
This being said, if $\ZE$ is indeed local, a proof would probably require
different techniques from ours, given that we do not see how our
techniques could be used even to reprove the locality of~$\mathbf{G}$.

We hope that extending our approach to a study of locality for centrally defined
varieties in general could lead to such general results on the interplay of join
operations and of the locality or non-locality for arbitrary varieties, in the
spirit of the results shown in~\cite{costa2013some} for various Mal'cev
products.

\begin{toappendix}
\makeatletter
    \vspace{-2cm}%
    \nobreak%
    \insert\copyins{\hsize.57\textwidth
\vbox to 0pt{\vskip12 pt%
      \fontsize{6}{7\p@}\normalfont\upshape
      \everypar{}%
      \noindent\fontencoding{T1}%
  \headertextsf{This work is licensed under the Creative Commons
  Attribution License. To view a copy of this license, visit
  \texttt{https://creativecommons.org/licenses/by/4.0/} or send a
  letter to Creative Commons, 171 Second St, Suite 300, San Francisco,
    CA 94105, USA, or Eisenacher Strasse 2, 10777 Berlin, Germany}\vss}
      \par
      \kern\z@}%
\makeatother
\end{toappendix}

\bibliographystyle{alphaurl}
\bibliography{bib}
\end{document}